\newtheorem{theorem}{Theorem}
\newtheorem{lemma}{Lemma}
\newtheorem{claim}{Claim}
\newtheorem{observation}{Observation}
\newtheorem{corollary}{Corollary}
\newtheorem{definition}{Definition}
\newcommand{\interior}[1]{\overline{#1}}
\newcommand{\prob}{{\sc TS-Ind-set reconfig}}
\newcommand{\com}[1]{}
\title{Token sliding independent set reconfiguration on block graphs}
\author{Mathew C. Francis, Veena Prabhakaran}
\affil{Indian Statistical Institute, Chennai Centre\\ \href{mathew@isichennai.res.in}{mathew@isichennai.res.in},\href{veenaprabhakaran7@gmail.com}{veenaprabhakaran7@gmail.com}}
\date{}
\begin{document}

\maketitle

\begin{abstract}
Let $S$ be an independent set of a simple undirected graph $G$. Suppose that each vertex of $S$ has a token placed on it. The tokens are allowed to be moved, one at a time, by sliding along the edges of $G$, so that after each move, the vertices having tokens always form an independent set of $G$. We would like to determine whether the tokens can be eventually brought to stay on the vertices of another independent set $S'$ of $G$ in this manner. In other words, we would like to decide if we can transform $S$ into $S'$ through a sequence of steps, each of which involves substituting a vertex in the current independent set with one of its neighbours to obtain another independent set. This problem of determining if one independent set of a graph ``is reachable'' from another independent set of it is known to be PSPACE-hard even for split graphs, planar graphs, and graphs of bounded treewidth. Polynomial time algorithms have been obtained for certain graph classes like trees, interval graphs, claw-free graphs, and bipartite permutation graphs. We present a polynomial time algorithm for the problem on block graphs, which are the graphs in which every maximal 2-connected subgraph is a clique. Our algorithm is the first generalization of the known polynomial time algorithm for trees to a larger class of graphs (note that trees form a proper subclass of block graphs).
\end{abstract}

\section{Introduction}

A \emph{reconfiguration problem} on a graph $G$ looks at how two feasible solutions to a computational problem on $G$ relate to one another (all graphs considered in this paper are simple and undirected unless otherwise mentioned). It involves determining, given a graph $G$ and two feasible solutions of some computational problem on $G$, whether there is a step-by-step transformation from one solution to the other through a series of intermediate solutions adhering to a set of rules. Note that each intermediate solution also has to be a feasible solution to the same computational problem on $G$. Reconfiguration problems have been studied for various computational problems like the independent set problem~\cite{sliding_block,ts_btw,Bonsma16,ITO20111054}, dominating set problem~\cite{ts_bip-per,ts_ds2}, vertex cover problem~\cite{ItoNZ16,MouawadNRS18,ITO20111054} matching problem~\cite{BonamyBHIKMMW19,ITO20111054,SolomonS21} and vertex colouring problem~\cite{sliding_block,ts_vc}.

The ``independent set reconfiguration'' problem can defined as follows.
Suppose that $G$ is a graph and $C,C'$ are two independent sets of $G$. Imagine that each vertex in $C$ has a token placed on it. We would like to determine if there is a sequence of transformation steps that can be applied on the set of tokens so that the tokens eventually are on the vertices of $C'$. Each transformation step must be in accordance with a predetermined set of rules, and at the end of the move, the vertices having tokens on them have to again form an independent set of $G$. Based on the set of rules governing the transformation steps, mainly three types of independent set reconfiguration problems have been studied in the literature --- namely, the ``token sliding''~\cite{BONAMY20216,lokshtanov2018bipartite}, the ``token jumping''~\cite{ito2014fixed,lokshtanov2018bipartite} and the ``token addition/removal''~\cite{lokshtanov2018bipartite} independent set reconfiguration problems.
(Note that these three models have been studied also for reconfiguration problems other than the independent set reconfiguration problem~\cite{ITO202343,ItoNZ16,HADDADAN201637,HaasS14}.)
It has been shown that the token jumping and the token addition/removal models are equivalent for independent set reconfiguration~\cite{ts_cograph}. In the token sliding model introduced by Hearn and Demaine~\cite{sliding_block}, which will be the focus of our study, the only allowed transformation step is the movement of a token from the vertex it is on to one of its neighbours (the token can be imagined to be ``sliding'' along the edge between them in the graph). 

Formally, in the token sliding model for the independent set reconfiguration problem, given a graph $G$ and two independent sets $C,C'$ of $G$ such that $|C|=|C'|$, it has to be determined if there exists a sequence of independent sets $C=C_0, C_1, \ldots, C_k=C'$, where $k\geq 0$, such that for each $i\in\{0,1,\ldots,k-1\}$, $C_{i+1} =(C_i\setminus \{u\}) \cup \{v\}$, for some $u\in C_i$ and edge $uv$ of $G$. Note that if such a sequence exists then $|C_0|=|C_1|=\cdots=|C_k|$, and therefore the problem is trivial if $|C|\neq |C'|$ (in this case, $C$ cannot be transformed into $C'$). So it is customary to assume that the two input independent sets are of the same cardinality. We call this decision problem \prob, which we define formally in Section~\ref{sec:notations}.

The problem \prob\ was first observed to be PSPACE-complete for general graphs by Hearn and Demaine~\cite{sliding_block}. In fact, their result implies that the problem is PSPACE-complete even for subcubic planar graphs (see~\cite{BONSMA20095215,ts_cograph}).
Later, the problem was shown to be PSPACE-complete even for perfect graphs by Kami\'nski, Medvedev and~Milani\v{c}~\cite{ts_cograph}, and this was further improved by Lokshtanov and Mouawad~\cite{lokshtanov2018bipartite}, who showed that the problem remains PSPACE-hard even for bipartite graphs. It was shown that the problem is PSPACE-hard also for another subclass of perfect graphs called split graphs by Belmonte et al.~\cite{belmonte2021tokensplit}. Note that split graphs form a subclass of chordal graphs and even hole-free graphs, and hence the problem is PSPACE-complete for chordal graphs and even-hole free graphs as well. Wrochna~\cite{ts_btw} showed that the problem is PSPACE-complete when restricted to graphs of bounded bandwidth, which implies that the problem is PSPACE-complete for graphs of bounded treewidth, or in fact bounded pathwidth.

Demaine et al.~\cite{DEMAINE2015132tree} showed that that \prob\ is polynomial time solvable for trees. Polynomial time algorithms for the problem were obtained for claw-free graphs by Bonsma, Kami\'nski and Wrochna~\cite{ts_claw-free}, for $P_4$-free graphs by Kami\'nski, Medvedev and Milani\v{c}~\cite{ts_cograph}, for interval graphs by Bonamy and Bousquet~\cite{bonamy}, and for bipartite permutation graphs and bipartite distance-hereditary graphs by Fox-Epstein et al.~\cite{ts_bip-per}.

A \emph{block graph} is an undirected graph in which each biconnected component is a clique.
Block graphs form a subclass of chordal graphs and a superclass of trees. We prove that \prob\ is polynomial-time solvable on block graphs.
It is worth noting that our algorithm is the first generalization of the polynomial time algorithm of Demaine et al. for trees, since none of the other classes for which polynomial time algorithms for the problem have been obtained contains the class of trees. (It was claimed in~\cite{anh2016slidingcactus} and~\cite{hoang2017slidingblock} that the problem can be solved in polynomial time on cactus graphs and block graphs respectively, but the authors of both papers have later announced that the algorithms are incorrect~\cite{noteoncactus,noteonblock} (also see~\cite{hoang2019shortestspider,arxiv}), and that the complexity of the problem on both classes of graphs remains unresolved.)
\medskip

\subsection*{Our contribution}
In this paper, we present a polynomial time algorithm for \prob\ on block graphs. Our algorithm takes as input a block graph $G$ and two independent sets $C_1,C_2$ of $G$ and determines in $O(|E(G)|^4)$ time whether the independent set $C_1$ can be transformed into the independent set $C_2$ using token sliding.
\medskip



Our algorithm is a generalization of the polynomial-time algorithm for trees given in~\cite{DEMAINE2015132tree}. The algorithm for trees first determines which tokens are ``rigid'' in both input independent sets. A token in an independent set is said to be rigid if it cannot be moved from its position by any sequence of valid token movements. Determining which tokens in an independent set are rigid is not very difficult for trees --- a token on a vertex $u$ is rigid if every neighbour $v$ of $u$ has a neighbour $x$ other than $u$ itself that has a token that is rigid for the tree in the forest $T-\{v\}$ containing $x$. This implies that it can be determined in linear time whether a token is rigid. Clearly, two independent sets whose set of vertices containing rigid tokens are unequal cannot be reconfigured with each other. Even if the vertices having rigid tokens is the same for both independent sets, if the two independent sets have an unequal number of tokens in any connected component that is obtained by the removal of the vertices containing the rigid vertices, it can be concluded that the independent sets are not reconfigurable with each other. It is shown in~\cite{DEMAINE2015132tree} that any two independent sets for which the rigid tokens are on exactly the same vertices, and for whom removal of these vertices creates a graph that contains the same number of tokens from each independent set in each connected component can be reconfigured with each other. Further, their proof yields an algorithm that outputs a reconfiguration sequence of length $O(n^2)$ between the two input independent sets. Note that the result above implies that any two independent sets in a tree that have the same cardinality and contain no rigid tokens, are reconfigurable with each other. 

Even though similar to trees at first sight, the problem on block graphs presents many difficulties that hinder the construction of a polynomial-time algorithm similar to the one in~\cite{DEMAINE2015132tree}. For example, unlike for trees, two independent sets in a block graph that have the same cardinality and have no rigid tokens may not be reconfigurable with each other; see Figure~\ref{fig:image1}, or Figure~12 of~\cite{DEMAINE2015132tree}. This hints that instead of rigidity of tokens, one should perhaps find out which tokens are ``trapped'' within the blocks they are in (this is similar to the notion of ``confined'' tokens in~\cite{anh2016slidingcactus}). However, determining which tokens are trapped in their blocks turns out to be not so straightforward.

A more important difference is perhaps the fact that in a block graph, there can be situations in which to insert one additional token into a branch of the block graph, an arbitrary number of tokens may first need to be taken out of that branch --- a situation that never arises for a tree. An example of this phenomenon can be demonstrated as follows. In Figure~\ref{fig:chain}, a block graph with some tokens placed on the vertices of an independent set is shown.
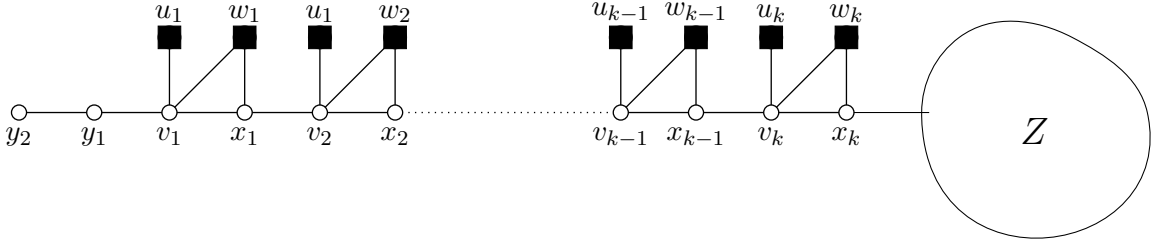
\begin{figure}
\centering
\begin{tikzpicture}
\renewcommand{\defradius}{0.1}
\renewcommand{\vertexset}{(y2,0,0),(y1,1,0),(v1,2,0),(x1,3,0),(v2,4,0),(x2,5,0),(vkm,8,0),(xkm,9,0),(vk,10,0),(xk,11,0),(u1,2,1,black,0.15),(w1,3,1,black,0.15),(u2,4,1,black,0.15),(w2,5,1,black,0.15),(ukm,8,1,black,0.15),(wkm,9,1,black,0.15),(uk,10,1,black,0.15),(wk,11,1,black,0.15)}
\renewcommand{\edgeset}{(y2,y1),(y1,v1),(v1,x1),(x1,v2),(v2,x2),(x2,vkm,,,,dotted),(vkm,xkm),(xkm,vk),(vk,xk),(v1,u1),(v1,w1),(x1,w1),(v2,u2),(v2,w2),(x2,w2),(vkm,ukm),(vkm,wkm),(xkm,wkm),(vk,uk),(vk,wk),(xk,wk)}
\draw (12,0) to [quick curve through={(12,0) (12.5,1) .. (14,1) .. (15, 0).. (12,-0.05)}] (12,0) ;
\draw (11,0) -- (12.1,0);
\drawgraph
\filldraw [fill=black,draw=black] (1.85,0.85) rectangle (2.15,1.15);
\filldraw [fill=black,draw=black] (2.85,0.85) rectangle (3.15,1.15);
\filldraw [fill=black,draw=black] (3.85,0.85) rectangle (4.15,1.15);
\filldraw [fill=black,draw=black] (4.85,0.85) rectangle (5.15,1.15);
\filldraw [fill=black,draw=black] (7.85,0.85) rectangle (8.15,1.15);
\filldraw [fill=black,draw=black] (8.85,0.85) rectangle (9.15,1.15);
\filldraw [fill=black,draw=black] (9.85,0.85) rectangle (10.15,1.15);
\filldraw [fill=black,draw=black] (10.85,0.85) rectangle (11.15,1.15);

\node [below=2] at (\xy{y1}) {$y_1$};
\node [below=2] at (\xy{y2}) {$y_2$};
\node [below=2] at (\xy{v1}) {$v_1$};
\node [below=2] at (\xy{v2}) {$v_2$};
\node [below=2] at (\xy{vkm}) {$v_{k-1}$};
\node [below=2] at (\xy{vk}) {$v_k$};
\node [below=2] at (\xy{x1}) {$x_1$};
\node [below=2] at (\xy{x2}) {$x_2$};
\node [below=2] at (\xy{xkm}) {$x_{k-1}$};
\node [below=2] at (\xy{xk}) {$x_k$};
\node [above=2] at (\xy{u1}) {$u_1$};
\node [above=2] at (\xy{w1}) {$w_1$};
\node [above=2] at (\xy{u2}) {$u_1$};
\node [above=2] at (\xy{w2}) {$w_2$};
\node [above=2] at (\xy{ukm}) {$u_{k-1}$};
\node [above=2] at (\xy{wkm}) {$w_{k-1}$};
\node [above=2] at (\xy{uk}) {$u_k$};
\node [above=2] at (\xy{wk}) {$w_k$};
\node at (13.5,-0.25) {\Large $Z$};
\end{tikzpicture}
\caption{Diagram showing tokens placed in a part of a block graph (tokens represented by black squares)}
\label{fig:chain}
\end{figure}
In the figure, $2k$ of the tokens are shown as black squares. There might be an arbitrary number of tokens in the set $Z$ of vertices. For each $i\in\{1,2,\ldots,k\}$, let $A_i=\{y_1,y_2\}\cup\bigcup_{j\in\{1,2,\ldots,i\}}\{x_i,u_i,v_i,w_i\}$. It can be proved that in order to place a token on $y_2$, we will have to go through an intermediate independent set in which there are at most $k$ tokens in $A_k$ (which means that at least $k$ tokens that were originally in $A_k$ are outside $A_k$ at this point). This can be easily proved by induction on $k$. If $k=1$, then clearly, before reaching an independent set with a token on $y_2$, we will have to reach an independent set with a token on $v_1$, at which point there will be at most one token in $A_1$. Let us assume inductively that the statement is true for $k-1$. Suppose that through a sequence of reconfiguration steps, we place a token on $y_2$. Then we know by the inductive hypothesis that there is some intermediate independent set in which there are at most $k-1$ tokens in $A_{k-1}$. Consider the first such independent set in the sequence of reconfiguration steps. By the choice of this independent set, it is clear that there is a token on $v_k$ in this independent set, which implies that there are no tokens on any other vertex of $A_k\setminus A_{k-1}$. This means that there are at most $k$ tokens in $A_k$ in this independent set, and we are done. The above phenomenon motivates our definition of ``$k$-accessibility'' between independent sets of one branch of a block graph (see Definition~\ref{def:kacc}).

Using the construction described above, one can create situations in which it does not seem straightforward to determine whether a token is ``trapped'' within its block. For example, in the situation shown in Figure~\ref{fig:trapped}, the token on vertex $x$ can be moved out of the block in which it is in, but determining that fact does not seem to be as easy as it is for the case of trees.
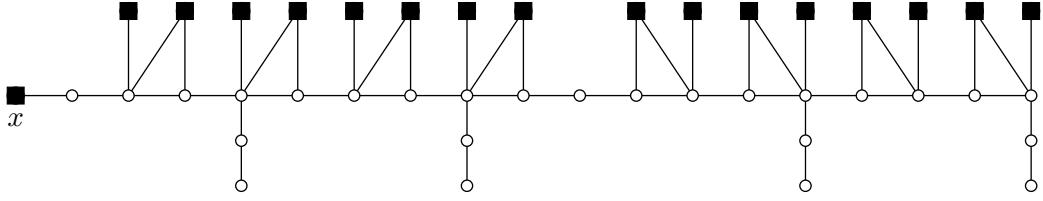
\begin{figure}
\centering
\begin{tikzpicture}[scale=1.5]
\renewcommand{\defradius}{0.05}
\renewcommand{\vertexset}{(v1,1,0,black,0.075),(x1,1.5,0),
(v2,2,0),(x2,2.5,0),(v3,3,0),(x3,3.5,0),(v4,4,0),(x4,4.5,0),(v5,5,0),(x5,5.5,0),
(u2,2,0.75,black,0.075),(w2,2.5,0.75,black,0.075),(u3,3,0.75,black,0.075),(w3,3.5,0.75,black,0.075),(u4,4,0.75,black,0.075),(w4,4.5,0.75,black,0.075),(u5,5,0.75,black,0.075),(w5,5.5,0.75,black,0.075),
(v'2,10,0),(x'2,9.5,0),(v'3,9,0),(x'3,8.5,0),(v'4,8,0),(x'4,7.5,0),(v'5,7,0),(x'5,6.5,0),
(u'2,10,0.75,black,0.075),(w'2,9.5,0.75,black,0.075),(u'3,9,0.75,black,0.075),(w'3,8.5,0.75,black,0.075),(u'4,8,0.75,black,0.075),(w'4,7.5,0.75,black,0.075),(u'5,7,0.75,black,0.075),(w'5,6.5,0.75,black,0.075),(m,6,0),(a5,5,-0.4),(b5,5,-0.8),(a3,3,-0.4),(b3,3,-0.8),
(a'4,8,-0.4),(b'4,8,-0.8),(a'2,10,-0.4),(b'2,10,-0.8)}
\renewcommand{\edgeset}{(v1,x1),(x1,v2),(v2,x2),(x2,v3),(v3,x3),(x3,v4),(v4,x4),(x4,v5),(v5,x5),
(v2,u2),(v2,w2),(x2,w2),(v3,u3),(v3,w3),(x3,w3),(v4,u4),(v4,w4),(x4,w4),(v5,u5),(v5,w5),(x5,w5),
(v'2,x'2),(x'2,v'3),(v'3,x'3),(x'3,v'4),(v'4,x'4),(x'4,v'5),(v'5,x'5),
(v'2,u'2),(v'2,w'2),(x'2,w'2),(v'3,u'3),(v'3,w'3),(x'3,w'3),(v'4,u'4),(v'4,w'4),(x'4,w'4),(v'5,u'5),(v'5,w'5),(x'5,w'5),
(x'5,m),(x5,m),(v5,a5),(a5,b5),(v3,a3),(a3,b3),
(v'4,a'4),(a'4,b'4),(v'2,a'2),(a'2,b'2)}
\drawgraph
\node [below=3] at (\xy{v1}) {$x$};
\filldraw [fill=black,draw=black] (0.925,-0.075) rectangle (1.075,0.075);
\filldraw [fill=black,draw=black] (1.925,0.675) rectangle (2.075,0.825);
\filldraw [fill=black,draw=black] (2.425,0.675) rectangle (2.575,0.825);
\filldraw [fill=black,draw=black] (2.925,0.675) rectangle (3.075,0.825);
\filldraw [fill=black,draw=black] (3.425,0.675) rectangle (3.575,0.825);
\filldraw [fill=black,draw=black] (3.925,0.675) rectangle (4.075,0.825);
\filldraw [fill=black,draw=black] (4.425,0.675) rectangle (4.575,0.825);
\filldraw [fill=black,draw=black] (4.925,0.675) rectangle (5.075,0.825);
\filldraw [fill=black,draw=black] (5.425,0.675) rectangle (5.575,0.825);
\filldraw [fill=black,draw=black] (6.425,0.675) rectangle (6.575,0.825);
\filldraw [fill=black,draw=black] (6.925,0.675) rectangle (7.075,0.825);
\filldraw [fill=black,draw=black] (7.425,0.675) rectangle (7.575,0.825);
\filldraw [fill=black,draw=black] (7.925,0.675) rectangle (8.075,0.825);
\filldraw [fill=black,draw=black] (8.425,0.675) rectangle (8.575,0.825);
\filldraw [fill=black,draw=black] (8.925,0.675) rectangle (9.075,0.825);
\filldraw [fill=black,draw=black] (9.425,0.675) rectangle (9.575,0.825);
\filldraw [fill=black,draw=black] (9.925,0.675) rectangle (10.075,0.825);
\end{tikzpicture}
\caption{The token on $x$ is not trapped within its block}
\label{fig:trapped}
\end{figure}

\section{Notations and preliminaries}\label{sec:notations}
As usual, we let $V(G)$ and $E(G)$ denote the vertex set and edge set of a graph $G$. An edge between vertices $u,v\in V(G)$ is denoted as $uv$. For $S\subseteq V(G)$, we denote by $G-S$ the graph obtained by removing the vertices in $S$ from $G$; i.e. $V(G-S)=V(G)\setminus S$ and $E(G-S)=E(G)\setminus\{uv\in E(G)\colon u\in S\}$. Similarly if $F\subseteq E(G)$, then we denote by $G-F$ the graph obtained by removing the edges in $F$ from $G$; i.e. $V(G-F)=V(G)$ and $E(G-F)=E(G)\setminus F$. For $S\subseteq V(G)$, we let $G[S]=G-(V(G)\setminus S)$. The graph $G[S]$ is commonly known as the subgraph that is induced in $G$ by $S$. A subgraph $H$ of $G$ is said to be an \emph{induced subgraph} of $G$ if $H=G[S]$ for some $S\subseteq V(G)$.
For a graph $G$ and $u\in V(G)$, we denote by $N_G(u)$ the neighbourhood of $u$ in $G$. For $S \subseteq V$, we define $N_G(S)=\bigcup_{v\in S}N_G(v)\setminus S$. When $H$ is a subgraph of $G$, we abbreviate $N_G(V(H))$ to just $N_G(H)$.

\subsubsection*{Reachability}
For two independent sets $I_1,I_2$ of $G$, we say that ``$I_1\rightarrow I_2$'' if one of the following holds:
\begin{itemize}
\vspace{-0.05in}
\itemsep 0in
    \item $I_1=I_2$, or
    \item $\exists v_1,v_2\in V(G)$ such that $I_1\setminus I_2=\{v_1\}$, $I_2\setminus I_1=\{v_2\}$ and $v_1v_2\in E(G)$.
\end{itemize}
Note that the relation $\rightarrow$ is symmetric and reflexive.
For independent sets $I,I'$ of a graph $G$, we say that $I'$ \emph{is reachable from} $I$ in $G$ if there exist independent sets $I_0,I_1,\ldots,I_k$ of $G$, where $k\geq 0$, such that $I=I_0\rightarrow I_1\rightarrow\cdots\rightarrow I_k=I'$. Clearly, the relation ``is reachable from'' is an equivalence relation on the set of independent sets of $G$.
Also, since the relation $\rightarrow$ preserves cardinality, i.e. since $|I_1|=|I_2|$ if $I_1\rightarrow I_2$, we have that two independent sets of $G$ that are reachable from each other are of the same cardinality.
\medskip

The decision problem \prob\ is defined as follows.
\medskip

\begin{center}
\begin{tabular}{|l|}
\hline
\prob\\[0.1in]
\begin{tabular}{ll}
\textbf{Input:}&A graph $G$, independent sets $C_1$, $C_2$ of $G$\\[0.05in]
\textbf{Output:}&``Yes'', if $C_1$ is reachable from $C_2$ in $G$,\\
&``No'', otherwise
\end{tabular}\\
\hline
\end{tabular}
\end{center}
\medskip

The following observations, which are implicitly assumed in the literature about the token sliding independent set reconfiguration problem, are easy to see (see for example Observation~4.1 in~\cite{BelmonteHLOO20}).
\begin{observation}\label{obs:disconnected}
Let $G$ be any graph and $I_1,I_2$ be independent sets of $G$. Then $I_1$ is reachable from $I_2$ if and only if $I_1\cap V(H)$ is reachable from $I_2\cap V(H)$ for every connected component $H$ of $G$.
\end{observation}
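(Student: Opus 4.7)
The plan is to prove both directions by exploiting the fact that a single token move slides along an edge of $G$, which necessarily lies inside one connected component, so different components cannot interact during reconfiguration.

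For the forward direction, I would assume a reconfiguration sequence $I_2=J_0\rightarrow J_1\rightarrow\cdots\rightarrow J_k=I_1$ in $G$, fix an arbitrary component $H$, and then show that the sequence $J_0\cap V(H),J_1\cap V(H),\ldots,J_k\cap V(H)$ is a valid reconfiguration sequence in $H$ (possibly with repetitions). The key check is that each step $J_i\rightarrow J_{i+1}$ either corresponds to a move along an edge $uv$ with both endpoints in $H$ (in which case $J_i\cap V(H)\rightarrow J_{i+1}\cap V(H)$ by the same move) or to a move with neither endpoint in $H$ (since edges cannot cross components), in which case $J_i\cap V(H)=J_{i+1}\cap V(H)$ and we use the reflexivity of $\rightarrow$. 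Independence of $J_i\cap V(H)$ in $H$ is immediate since $J_i$ is independent in $G$ and $H$ is an induced subgraph.

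For the backward direction, I would enumerate the components $H_1,H_2,\ldots,H_r$ and, for each $H_j$, take the given reconfiguration sequence from $I_2\cap V(H_j)$ to $I_1\cap V(H_j)$ inside $H_j$. I would then concatenate these sequences in order: while processing $H_j$, the tokens on the other components $H_{j'}$ sit on $I_1\cap V(H_{j'})$ for $j'<j$ and on $I_2\cap V(H_{j'})$ for $j'>j$. The only thing to verify is that each intermediate set is independent in $G$, which reduces to checking that no vertex in $V(H_j)$ is adjacent to any vertex in $V(H_{j'})$ for $j'\neq j$ — true by definition of connected components — together with the fact that the union of independent sets drawn from pairwise non-adjacent vertex subsets is independent.

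I expect no genuine obstacle here; the statement is really just a bookkeeping lemma saying that tokens confined to disjoint components evolve independently. The only care needed is in writing the sequence restriction to a component correctly (allowing the ``stutter'' steps $I_i=I_{i+1}$ which are permitted because $\rightarrow$ is reflexive) and in justifying the concatenation in the backward direction without double counting any moves. Both can be stated in a couple of lines each once the framework is set up.
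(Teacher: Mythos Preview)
Your proposal is correct and complete; both directions are handled in the standard way, and the only subtlety you identify (allowing repeated sets via the reflexivity of $\rightarrow$) is exactly right. The paper itself does not give a proof of this observation at all---it simply states it as ``easy to see'' with a pointer to the literature---so your argument supplies what the paper omits rather than deviating from it.
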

\begin{observation}\label{obs:subgraph}
Let $G$ be any graph, $H$ a subgraph of $G$, and $I_1,I_2$ be independent sets of $H$. If $I_1$ is reachable from $I_2$ in $H$, then $I_1$ is reachable from $I_2$ in $G$.
\end{observation}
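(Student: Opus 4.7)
The plan is to lift the reconfiguration sequence that witnesses reachability in $H$ directly to one in $G$, checking at each step that it remains valid in the ambient graph. Concretely, since $I_1$ is reachable from $I_2$ in $H$, there is a sequence $I_2=J_0,J_1,\ldots,J_k=I_1$ of independent sets of $H$ with $J_i\rightarrow J_{i+1}$ in $H$ for each $i$. I would argue that the very same sequence also witnesses $I_2\to\cdots\to I_1$ in $G$, so the result follows immediately.

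For this, two things must be checked for every $i$: that $J_i$ is an independent set of $G$, and that the transition $J_i\rightarrow J_{i+1}$ is realized by an edge of $G$. The second point is immediate from the containment $E(H)\subseteq E(G)$: if the slide from $J_i$ to $J_{i+1}$ uses an edge $v_1v_2\in E(H)$, then $v_1v_2$ is an edge of $G$ as well, so the same exchange of one vertex for a neighbour is a legitimate step in $G$. The first point is where one has to be a little careful; interpreting ``subgraph'' as induced subgraph (which is the setting in which the observation is actually used in the paper), any independent set of $H=G[V(H)]$ is automatically an independent set of $G$, because $G$ has no edges among the vertices of $V(H)$ that are absent from $H$. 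Hence each $J_i$ is an independent set of $G$.

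Combining the two points, each link $J_i\rightarrow J_{i+1}$ holds in $G$, so the definition of reachability gives $I_1$ reachable from $I_2$ in $G$, as required. The only point that needs explicit attention is the ``independent in $G$'' condition, since the definition of $\rightarrow$ requires each intermediate set to be an independent set of the graph in which reachability is asserted; under the induced-subgraph reading, this reduces to a one-line observation and there is no genuine obstacle.
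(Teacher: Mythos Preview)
The paper does not give a proof of this observation at all; it is listed among statements that ``are easy to see'' and left without argument. So there is nothing to compare your approach against, and your write-up supplies exactly the natural verification one would expect: take the witnessing sequence in $H$ and observe that each step remains valid in $G$.

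Your discussion is correct, and you are right to flag the one genuine subtlety: as literally stated (``$H$ a subgraph of $G$''), the observation need not hold, since an independent set of a non-induced subgraph $H$ can fail to be independent in $G$, so the lifted sequence could pass through non-independent sets. You resolve this by noting that the paper only ever applies the observation with $H$ an induced subgraph (specifically $H=G-W_1$ in the proof of Theorem~\ref{thm:correctness}), under which reading every $J_i$ is automatically independent in $G$ and the argument goes through verbatim. That is the appropriate fix, and it matches how the observation is used.
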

\subsubsection*{Block graphs}
A \emph{cut-vertex} of a graph $G$ is a vertex $u\in V(G)$ such that $G-\{u\}$ has more connected components than $G$.
A \emph{block} in a graph $G$ is a ``2-connected component of $G$''; i.e. it is a maximal set $S\subseteq V(G)$ such that the subgraph of $G$ induced by $S$ is a 2-connected graph (a connected graph containing no cut-vertices).
A \emph{block graph} is a graph in which each block is a clique.
The following folklore observation can be easily seen to follow from the definition of a block graph.
\begin{observation}\label{obs:induced}
If $G$ is a block graph, then every induced subgraph of $G$ is also a block graph.
\end{observation}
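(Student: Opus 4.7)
The plan is to argue directly from the definition by showing that any block of the induced subgraph is contained in a block of the original graph, and hence must be a clique. Fix $S \subseteq V(G)$ and let $B'$ be a block of $G[S]$. I will show that $B'$ induces a clique in $G[S]$, which will complete the proof.

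If $|B'| \leq 2$, then $B'$ is trivially a clique (either a single vertex, or an edge). So I would assume $|B'| \geq 3$. Since $B'$ is a block of $G[S]$, the induced subgraph $G[S][B'] = G[B']$ is $2$-connected. Pick any two distinct vertices $u,v \in B'$. A standard property of $2$-connected graphs on at least three vertices is that any two of their vertices lie on a common cycle; in particular, there exist two internally vertex-disjoint $u$-$v$ paths inside $G[B']$. These paths lie in $G$ as well, so $u$ and $v$ lie on a common cycle in $G$, which means they belong to a common $2$-connected subgraph of $G$ and therefore to the same block $B$ of $G$. Since $G$ is a block graph, $B$ induces a clique in $G$, so $uv \in E(G)$, and since $u,v \in S$, also $uv \in E(G[S])$. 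As $u,v$ were arbitrary, $B'$ induces a clique in $G[S]$.

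The only place that requires any thought is the edge case of small blocks and the invocation of the property that any two vertices in a $2$-connected graph of order at least three lie on a common cycle; everything else is just chasing definitions. I do not foresee a real obstacle here: this is the sort of routine closure property that the authors call a ``folklore observation,'' and the argument above should fit in a few lines.
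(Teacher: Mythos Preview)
Your argument is correct. The paper does not actually give a proof of this observation: it is stated as a ``folklore observation'' that ``can be easily seen to follow from the definition of a block graph,'' with no further justification. Your proof supplies exactly the kind of few-line verification the authors had in mind; if anything, one could shorten it slightly by noting that the whole $2$-connected subgraph $G[B']$ must lie inside a single block of $G$ (rather than arguing pairwise via cycles), but this is the same idea.
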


By Observation~\ref{obs:disconnected}, it is enough to solve the \prob\ problem on connected graphs.
Let $G$ be a connected block graph. Note that every vertex of $G$ is contained in some block of $G$. The vertices of $G$ that are contained in more than one block are exactly the cut-vertices of $G$. Every pair of adjacent vertices in $G$ belongs to a unique block of $G$. We denote by $V_{cut}(G)$ the set of cut-vertices of $G$ and by $\mathcal{B}(G)$ the set of blocks of $G$. It is easy to see that if $u\in V(G)\setminus V_{cut}(G)$, then there is a unique block in $\mathcal{B}(G)$ that contains $u$. For any $u\in V(G)$, we denote by $\mathcal{B}_u(G)$ the set of all blocks in $\mathcal{B}(G)$ that contain $u$. For any $B\in\mathcal{B}(G)$, we define $K_B(G)=B\cap V_{cut}(G)$, i.e. $K_B(G)$ is the set of cut-vertices of $G$ that also belong to $B$.

A block graph is commonly represented by its \emph{block-tree} which is a tree having vertex set $V_{cut}(G)\cup\mathcal{B}(G)$ and edge set $\{uB\colon u\in V_{cut}(G), B\in\mathcal{B}_u(G)\}$ (see Figure~\ref{fig:tree}). It is folklore that this definition indeed gives a tree (see for example Section~3.1 of~\cite{Diestel}).
\begin{figure}[t]
    \centering    \includegraphics{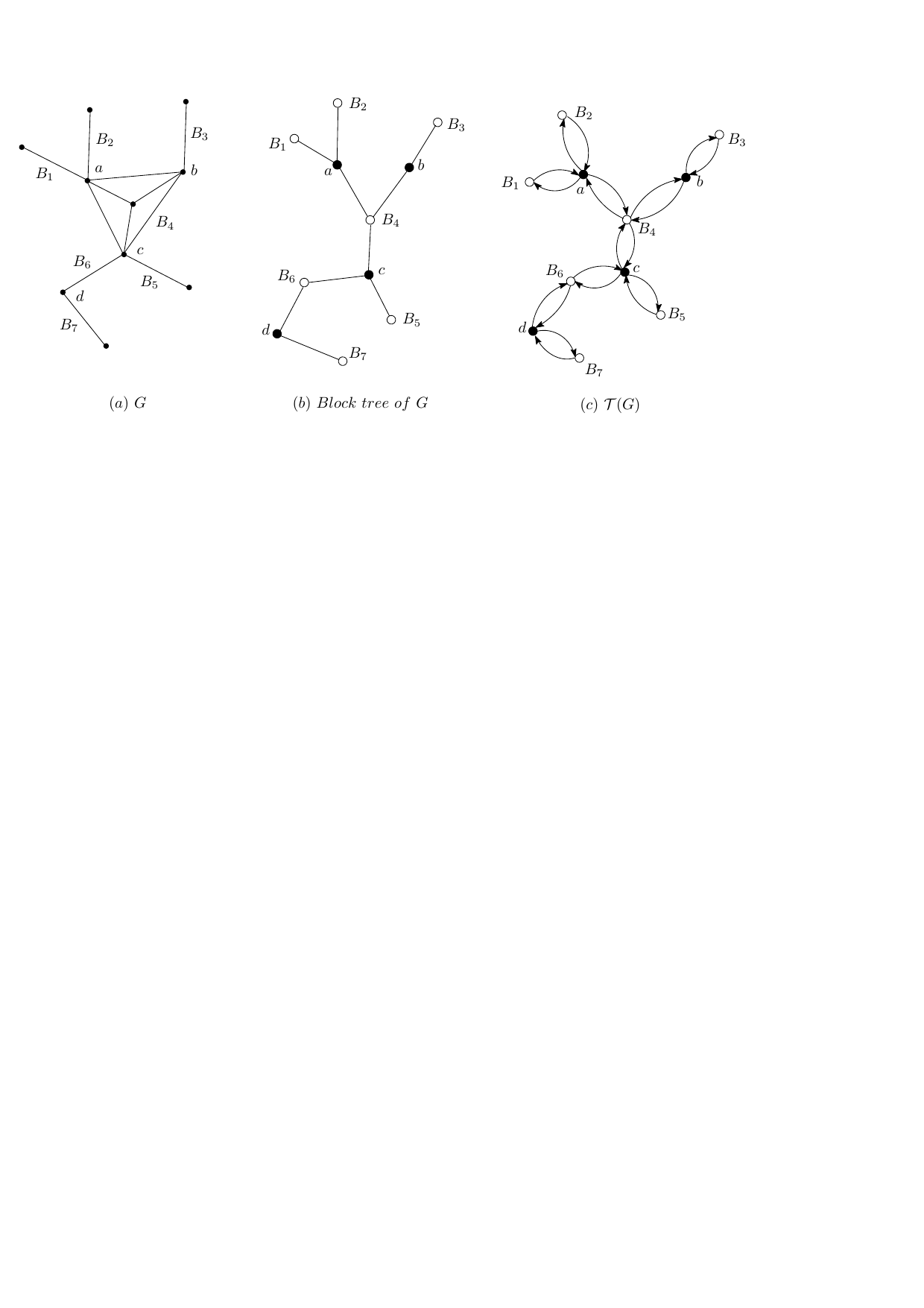}
    \caption{A block graph, its corresponding tree representation and $\mathcal{T}(G)$.}
    \label{fig:tree}
\end{figure}

For a connected block graph $G$, let $\mathcal{T}(G)$ denote its block-tree with each undirected edge replaced with a pair of directed edges, one pointing in each direction. Thus $E(\mathcal{T}(G))=\{(u,B),(B,u)\colon u\in V_{cut}(G), B\in\mathcal{B}_u(G)\}$. For ease of notation we let $\mathcal{P}_G=E(\mathcal{T}(G))$. For each $p\in\mathcal{P}_G$, we now define an induced subgraph $G[p]$ of $G$ as follows.

For $u\in V_{cut}(G)$ and $B\in\mathcal{B}_u(G)$, we define $G[(u,B)]$ (respectively $G[(B,u)]$) as the connected component containing $u$ in the graph $G-\{uv\in E(G)\colon v\in B\}$ (resp. $G-\{uv\in E(G)\colon v\notin B\}$). Clearly, $G[(u,B)]$ and $G[(B,u)]$ are connected graphs. By Observation~\ref{obs:induced}, we have that $G[(u,B)]$ and $G[(B,u)]$ are both block graphs. Most of the time, we abbreviate $G[(u,B)]$ and $G[(B,u)]$ to just $G[u,B]$ and $G[B,u]$ respectively (see Figure~\ref{fig:def} for a schematic diagram of an example block graph $G$ and two induced graphs $G[u,B]$ and $G[B,u]$ of it). We also abbreviate
$\mathcal{B}_u(G[u,B])$ and $K_B(G[B,u])$ to $\beta_G(u,B)$ and $\kappa_G(B,u)$
respectively. We omit the subscript $G$ when the graph being considered is clear from the context.
\begin{figure}[t]
    \centering    \includegraphics{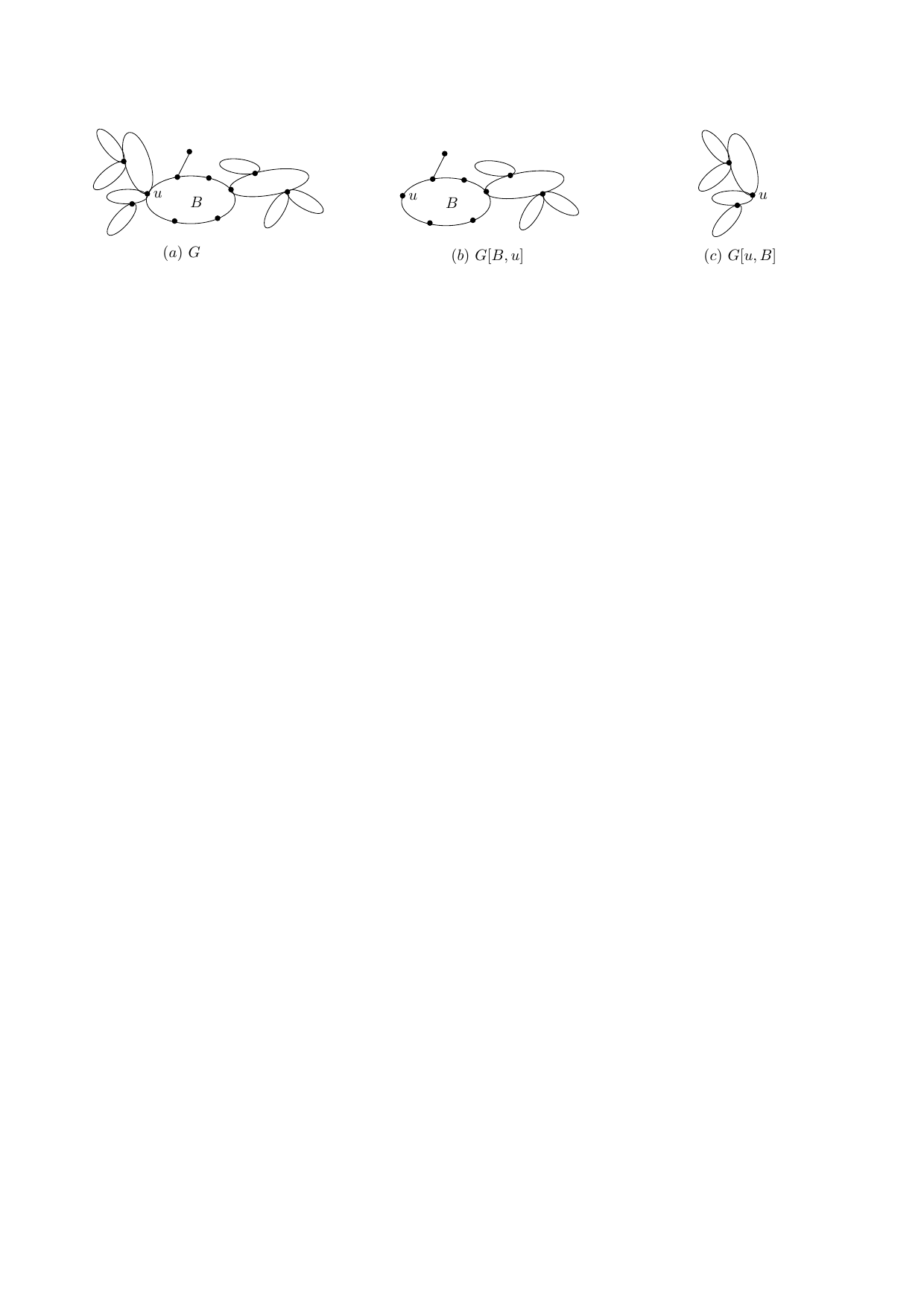}
    \caption{A block graph $G$ and its relevant subgraphs}
    \label{fig:def}
\end{figure}

The following observation is easy to see.
\begin{observation}\label{obs:blockgraph}
Let $G$ be a connected block graph, $H$ a connected subgraph of $G$, and $u\in V_{cut}(G)$. Then either $u\in V(H)$ or there exists $B\in\mathcal{B}_u(G)$ such that $V(H)\subseteq V(G[B,u])\setminus\{u\}$ and for each $B'\in\mathcal{B}_u(G)\setminus B$, $V(H)\cap V(G[B',u])=\emptyset$.
\end{observation}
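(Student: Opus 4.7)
The plan is to analyze the connected components of $G - \{u\}$ and argue that they correspond precisely to the sets $V(G[B,u]) \setminus \{u\}$ for $B \in \mathcal{B}_u(G)$. Once this decomposition is established, the desired conclusion follows almost immediately: if $u \notin V(H)$, then $H$ is a connected subgraph of $G - \{u\}$, so $V(H)$ must lie in a single connected component of $G - \{u\}$, which is exactly one of the sets $V(G[B,u]) \setminus \{u\}$.

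First I would verify that the sets $V_B := V(G[B,u]) \setminus \{u\}$ for $B \in \mathcal{B}_u(G)$ are pairwise disjoint. Suppose for contradiction that $v \in V_{B_1} \cap V_{B_2}$ for distinct $B_1, B_2 \in \mathcal{B}_u(G)$. By the definition of $G[B_i,u]$, there is a path $P_i$ from $u$ to $v$ in $G[B_i,u]$ whose first edge $uw_i$ satisfies $w_i \in B_i$. Concatenating $P_1$ with the reverse of $P_2$ and extracting a simple cycle yields a cycle $C$ through $u$ using initial edges into two different blocks at $u$. Since every 2-connected subgraph of a block graph lies within a single block (a clique), $C$ lies in a single block $B^*$; but then $B^*$ shares at least the two vertices $u,w_1$ with $B_1$ and also two vertices with $B_2$, which forces $B^* = B_1 = B_2$ (as distinct blocks share at most one vertex), a contradiction.

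Next I would check that each $V_B$ is connected in $G - \{u\}$: any $v \in V_B$ is joined to $u$ by a path in $G[B,u]$ whose first edge $uw$ satisfies $w \in B$, and the remainder of this path avoids $u$ and connects $w$ to $v$ in $G - \{u\}$; together with the fact that $B \setminus \{u\}$ is a clique in $G - \{u\}$, this shows $V_B$ is connected. Combined with the disjointness above and the observation that every vertex of $V(G) \setminus \{u\}$ lies in some $V_B$ (traverse the block-tree from the vertex toward $u$ to identify the relevant $B \in \mathcal{B}_u(G)$), the $V_B$'s are precisely the connected components of $G - \{u\}$.

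The main technical step, and the one that needs the most care, is the disjointness argument, which is the only place where the block-graph hypothesis enters in an essential way. Once this is in hand, assuming $u \notin V(H)$ we obtain $V(H) \subseteq V_B$ for some $B \in \mathcal{B}_u(G)$, and disjointness of each $V_{B'}$ from $V_B$ (together with $u \notin V(H)$) gives $V(H) \cap V(G[B',u]) = V(H) \cap (V_{B'} \cup \{u\}) = \emptyset$ for each $B' \in \mathcal{B}_u(G) \setminus \{B\}$, completing the argument.
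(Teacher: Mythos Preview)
The paper does not give a proof of this observation; it is simply stated as ``easy to see.'' Your argument is correct and makes explicit the natural reason: the sets $V(G[B,u])\setminus\{u\}$ for $B\in\mathcal{B}_u(G)$ are precisely the connected components of $G-\{u\}$, so any connected subgraph avoiding $u$ lies in exactly one of them.

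One minor remark: you say that the block-graph hypothesis enters essentially in the disjointness step, but in fact both the statement and your proof go through for an arbitrary connected graph with a cut-vertex $u$. The fact that a cycle (or any 2-connected subgraph) lies in a single block holds in every graph, and $B\setminus\{u\}$ is connected because $B$ is 2-connected, not because it is a clique. So the block-graph assumption is not actually needed here.
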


We shall sometimes write just $\mathcal{P}$ instead of $\mathcal{P}_G$, when the graph $G$ is clear from the context.
Let $G$ be a connected block graph, $u\in V_{cut}(G)$, $B\in\mathcal{B}_u(G)$ and $p\in\{(u,B),(B,u)\}$. We say that $u$ ``is the base of $p$''.
Further, if $p=(u,B)$, then we let $\overline{p}=(B,u)$ and if $p=(B,u)$, then we let $\overline{p}=(u,B)$.

For $p\in\mathcal{P}_G$, a set $C\subseteq V(G[p])$ that is an independent set of $G$ is called a \emph{$p$-independent set} of $G$.
Further, we define $\interior{C}=C\setminus\{u\}$, where $u$ is the base of $p$.
For an independent set $S$ of $G$, we let $S[p]$ denote $V(G[p])\cap S$.
Notice that for an independent set $S$ of $G$, the set $S[p]$ is a $p$-independent set of $G$. Depending on the context, we shall sometimes consider a $p$-independent set of $G$ to be an independent set of $G$ as well.
Given an independent set (or $p$-independent set) $C$ of $G$, we say that a vertex $u$ is \emph{under attack} in $C$ if $N_G(u)\cap C\neq\emptyset$. Note that if $u\in C$, then $u$ is not under attack in $C$.

\section{Some definitions}\label{sec:def}
We first define, for a connected block graph $G$ and for each $p\in\mathcal{P}_G$, a non-negative integer $d_G(p)$ which we shall call the ``depth'' of $p$ in $G$.

\begin{definition}\label{def:depth}
Let $G$ be a connected block graph and $p\in\mathcal{P}_G$.
If $p=(B,u)$, for some $u\in V_{cut}(G)$ and $B\in\mathcal{B}_u(G)$, and $G[B,u]$ is a complete graph (equivalently, $\kappa_G(B,u)=\emptyset$), then we define $d_G(p)=d_G(B,u)=0$. Otherwise, we define:
$$d_G(p)=\left\{\begin{array}{ll}1+\max\{d_G(v,B)\colon v\in\kappa_G(B,u)\}&\mbox{if }p=(B,u)\mbox{ for some }u\in V_{cut}(G),B\in\mathcal{B}_u(G)\\
1+\max\{d_G(B',u)\colon B'\in\beta_G(u,B)\}&\mbox{if }p=(u,B)\mbox{ for some }u\in V_{cut}(G),B\in\mathcal{B}_u(G)\end{array}\right.$$
\end{definition}

Note that as before, we abbreviate $d_G(p)$ to just $d(p)$ when the graph $G$ is clear from the context.
\medskip

Similarly, for each $p\in\mathcal{P}_G$, we define a Boolean value $ua_G(p)$ inductively as follows. (Intuitively, if $ua_G(p)=True$, it roughly indicates that the base $u$ of $p$ is under attack in any $p$-independent set of $G$ in which the tokens cannot be moved away from $u$ so as to create space in $G[p]-\{u\}$ for more tokens to enter via $u$.)

\begin{definition}\label{def:ua}
Let $G$ be a connected block graph and $p\in\mathcal{P}_G$.

If $d_G(p)=0$ then we define $ua_G(p)=True$.

If $p=(B,u)$ for some $u\in V_{cut}(G)$, $B\in\mathcal{B}_u(G)$, and $d_G(p)>0$, we define:
$$ua_G(p)=ua_G(B,u)=\neg\left(\displaystyle\bigwedge_{\forall v\in\kappa_G(B,u)} ua_G(v,B) \land (B=\kappa_G(B,u)\cup\{u\} )\right)$$

On the other hand, if $p=(u,B)$ for some $u\in V_{cut}(G)$, $B\in\mathcal{B}_u(G)$, we define:
$$ua_G(p)=ua_G(u,B)=\displaystyle\bigvee_{\forall B'\in\beta_G(u,B)} ua_G(B',u)$$
\end{definition}

As before, we drop the subscript $G$ from $ua_G(p)$ when the graph under consideration is clear from the context.
\medskip

\noindent\textbf{Note:} While evaluating an arithmetic expression that contains Boolean values, we replace all $True$ values with 1 and all $False$ values with 0.
\medskip

Suppose that $C$ is a $p$-independent set of a connected block graph $G$ for some $p\in\mathcal{P}_G$. We define the ``capacity of $C$'', denoted by $cap(C)$, inductively as follows. (Intuitively, the capacity of a $p$-independent set roughly corresponds to the number of new tokens that can be inserted via the base $u$ of $p$ into $G[p]-\{u\}$ while ensuring that no token in $G[p]-\{u\}$ moves towards $u$.)

\begin{definition}\label{def:cap}
Let $G$ be a connected block graph, $p\in\mathcal{P}_G$ and $C$ a $p$-independent set of $G$.

If $p=(B,u)$ for some $u\in V_{cut}(G)$, $B\in\mathcal{B}_u(G)$, then we define:
$$cap(C)=\sum_{v\in\kappa_G(p)} cap(C[v,B]) + ua_G(p)-|B\cap\interior{C}|.$$

If $p=(u,B)$ for some $u\in V_{cut}(G)$, $B\in\mathcal{B}_u(G)$, we define:
$$cap(C)=\left\{\begin{array}{ll}0&\mbox{if }\exists B'\in\mathcal{B}_u[p]\mbox{ such that }cap(C[B',u])=0\\&\mbox{and }ua_G(B',u)=True\medskip\\\displaystyle\sum_{B'\in\beta_G[p]} \big(cap(C[B',u])-ua_G(B',u)\big) + ua_G(p)&\mbox{otherwise}\end{array}\right.$$
\end{definition}

\renewcommand{\thesubfigure}{(\alph{subfigure})}
\begin{figure}
    \centering
    \begin{subfigure}{0.45\linewidth}
        \centering
        \includegraphics[scale=.75]{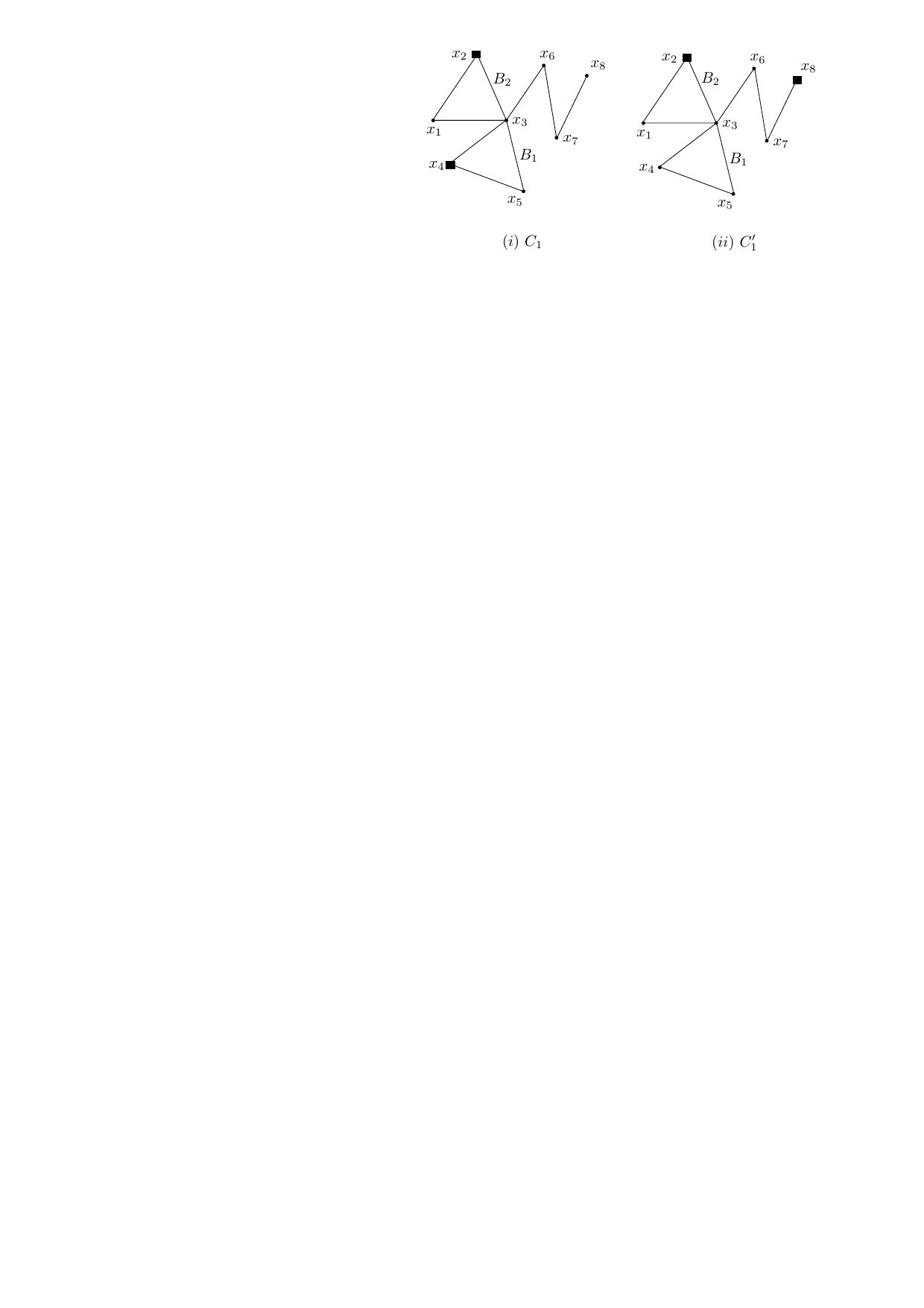}
        \caption{A block graph $G_1$  with two independent sets $C_1$ and $C'_1$ which are not reconfigurable with each other. Here, $cap(C_1[B_1,x_3])=cap(C_1[B_2,x_3])=0$ and $ua_{G_1}(B_1,x_3)=ua_{G_1}(B_2,x_3)=True$.}
        \label{fig:image1}
    \end{subfigure} \hspace{.5cm}
    \begin{subfigure}{0.47\linewidth}
        \centering
        \includegraphics[scale=.75]{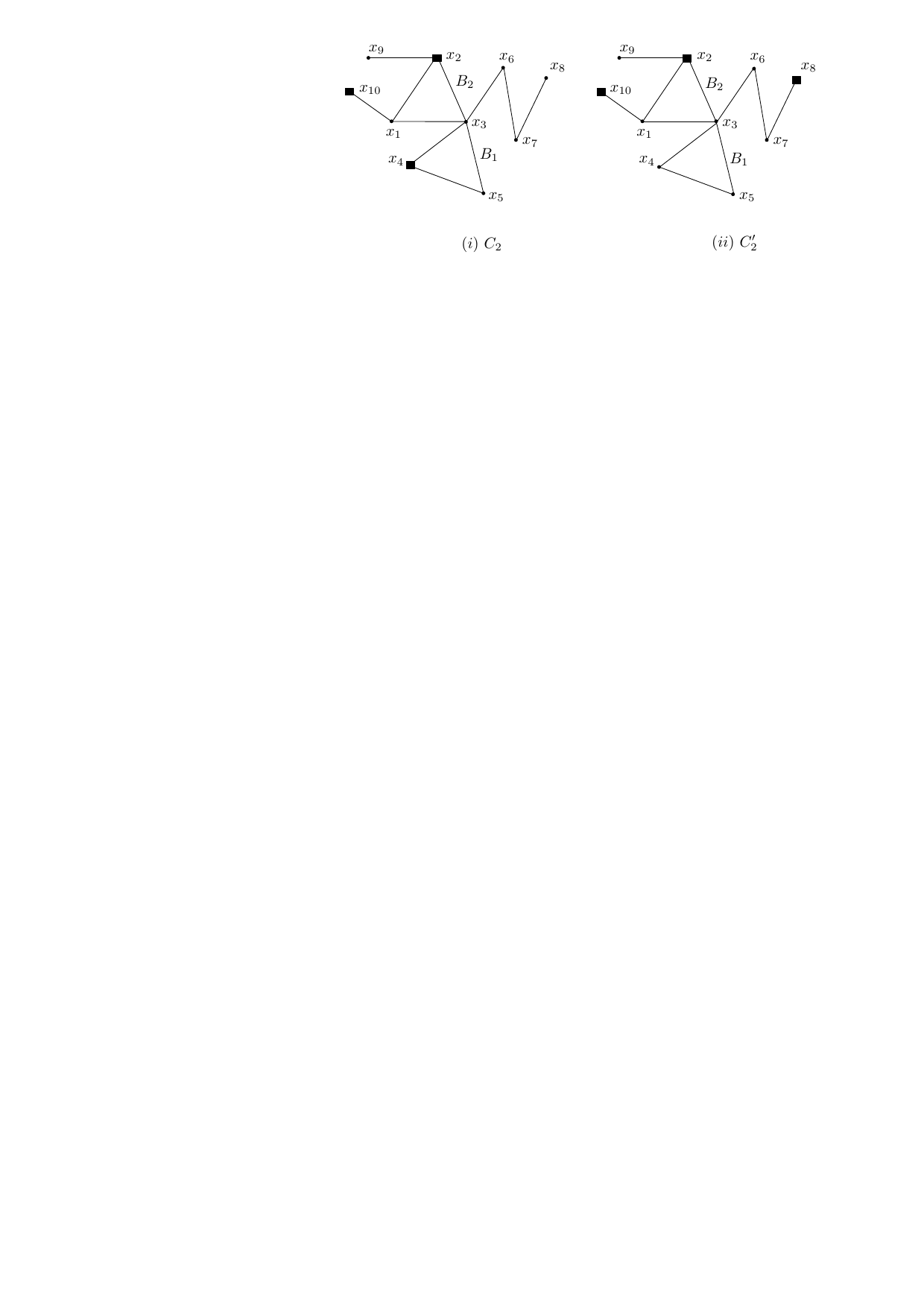}
         \caption{A block graph $G_2$ with two independent sets $C_2$ and $C'_2$ which are reconfigurable with each other.  Here, $cap(C_2[B_1,x_3])=cap(C_2[B_2,x_3])=0$, $ua_{G_2}(B_1,x_3)=True$ and $ua_{G_2}(B_2,x_3)=False$.}
        \label{fig:image2}
    \end{subfigure}
    \caption{Effect of the parameters $cap(C[p])$ and $ua_G(p)$ on reachability}
    \label{fig:notreachable}
\end{figure}

Figure~\ref{fig:notreachable} depicts how the parameters $cap(C[p])$ and $ua_G(p)$, for some $p\in\mathcal{P}_G$, affect the possibility of moving a token to the base of $p$. Consider the graph $G_1$ and the independent sets $C_1, C'_1$ in Figure~\ref{fig:image1}. In this example, the values $cap(C_1[B_2,x_3])=0$ and $ua_{G_1}(B_2,x_3)=True$, together impose some properties (see Lemma~\ref{lem:ua}\ref{it:uatrue}) on $C_1[B_2,x_3]$ that prevent the token on $x_4$ from reaching $x_3$, which makes $C_1$ and $C'_1$ not reconfigurable with each other. Now consider the graph $G_2$ and the independent sets $C_2, C'_2$ in Figure~\ref{fig:image2}. In $G_2$, since $ua_{G_2}(B_2,x_3)=False$, $C_2[B_2,x_3]$ exhibits a different property (see Lemma~\ref{lem:uafalse}) that allows the token on $x_4$ to reach $x_3$, and in turn $x_8$, making $C'_2$ reachable from $C_2$.

\begin{observation}\label{obs:capcomplete}
Let $G$ be a connected block graph, $p\in\mathcal{P}_G$, and $C$ a $p$-independent set of $G$. If $d_G(p)=0$, then $cap(C)=1-|\interior{C}|$.
\end{observation}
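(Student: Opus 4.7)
The plan is to unpack the relevant definitions and observe that everything collapses in the base case.

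First I would note that $d_G(p) = 0$ forces $p$ to have the form $(B,u)$. Indeed, if $p = (u,B)$ for some $u \in V_{cut}(G)$ and $B \in \mathcal{B}_u(G)$, then since $u$ is a cut-vertex of $G$, the set $\beta_G(u,B) = \mathcal{B}_u(G[u,B])$ is non-empty (because $u$ belongs to at least one block of $G$ other than $B$, and that block survives in $G[u,B]$), so Definition~\ref{def:depth} gives $d_G(u,B) \geq 1$. Hence the assumption $d_G(p) = 0$ implies $p = (B,u)$ with $\kappa_G(B,u) = \emptyset$, i.e.\ $G[B,u]$ is the complete graph on $B$.

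Next I would apply Definition~\ref{def:cap} to $p = (B,u)$, which gives
\[
cap(C) = \sum_{v \in \kappa_G(p)} cap(C[v,B]) + ua_G(p) - |B \cap \interior{C}|.
\]
Since $\kappa_G(B,u) = \emptyset$, the sum is empty and evaluates to $0$. Since $d_G(p) = 0$, Definition~\ref{def:ua} gives $ua_G(p) = True$, which we interpret as $1$ by the note following Definition~\ref{def:ua}. Hence $cap(C) = 1 - |B \cap \interior{C}|$.

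Finally I would argue $|B \cap \interior{C}| = |\interior{C}|$. Since $C$ is a $p$-independent set, $C \subseteq V(G[p]) = V(G[B,u]) = B$ (because $G[B,u]$ is the complete graph on $B$), so $\interior{C} = C \setminus \{u\} \subseteq B$ and $B \cap \interior{C} = \interior{C}$. Therefore $cap(C) = 1 - |\interior{C}|$, as required. There is no substantive obstacle here; the whole content of the observation is that in the base case of the recursive definition of $cap$, the formula simplifies to a single term because the only block in $G[p]$ is $B$ itself.
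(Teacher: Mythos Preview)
Your proof is correct and follows essentially the same approach as the paper's own proof: both unpack Definition~\ref{def:depth} to conclude $p=(B,u)$ with $\kappa_G(B,u)=\emptyset$ and $V(G[p])=B$, then plug into Definitions~\ref{def:ua} and~\ref{def:cap}. You give slightly more detail on why $p$ cannot be of the form $(u,B)$, but the argument is otherwise identical.
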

\begin{proof}
From Definition~\ref{def:depth}, we know that there exists $u\in V_{cut}(G)$ and $B\in\mathcal{B}_u(G)$ such that $p=(B,u)$ and $G[p]$ is a complete graph. Note that this means that $V(G[p])=B$. It follows directly from Definition~\ref{def:ua} that $ua_G(p)=True$. From Definition~\ref{def:cap} and the fact that $\kappa_G(p)=\emptyset$, we then have that $cap(C)=1-|B\cap\interior{C}|$. Since $V(G[p])=B$, we have $B\cap\interior{C}=\interior{C}$. Thus $cap(C)=1-|\interior{C}|$.
\end{proof}

Let $G$ be a connected block graph.
Let $u\in V_{cut}(G)$ and $\mathcal{A}\subseteq\mathcal{B}_u(G)$. We say that an independent set $C'$ of $G$ is \emph{$(\mathcal{A},u)$-reachable} from an independent set $C$ of $G$ if there exists independent sets $C_0,C_1,\ldots,C_k$ of $G$, where $k\geq 0$, such that $C=C_0\rightarrow C_1\rightarrow\cdots\rightarrow C_k=C'$ and for each $B\in\mathcal{B}_u(G)\setminus\mathcal{A}$, $C_0[B,u]=C_1[B,u]=\cdots=C_k[B,u]$.
Observe that the relation ``is $(\mathcal{A},u)$-reachable from'' is also an equivalence relation defined on the independent sets of $G$.
We shall write ``$(B,u)$-reachable'' as a shorthand for $(\{B\},u)$-reachable and ``$(u,B)$-reachable'' as a shorthand for $(\mathcal{B}_u(G)\setminus\{B\},u)$-reachable. Notice that if $p\in\mathcal{P}_G$, and $C,C'$ are independent sets of $G$ such that $C'$ is $p$-reachable from $C$, then there exist independent sets $C_0,C_1,\ldots,C_k$, where $k\geq 0$, such that $C_0[\overline{p}]=C_1[\overline{p}]=\cdots=C_k[\overline{p}]$.
The following observation can be easily seen to be true.
\begin{observation}\label{obs:reachsubset}
Let $G$ be a connected block graph, $u\in V_{cut}(G)$, $\mathcal{A}\subseteq\mathcal{B}_u(G)$, and $p\in\mathcal{P}_G$.
If $V(G[p])\subseteq\bigcup_{B\in\mathcal{A}} V(G[B,u])$ and $C$ is $p$-reachable from $C'$, then $C$ is $(\mathcal{A},u)$-reachable from $C'$.
\end{observation}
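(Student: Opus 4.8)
The plan is to exhibit a single sequence of independent sets that witnesses both reachability notions at once. Assume $C$ is $p$-reachable from $C'$, and write $v$ for the base of $p$. By the definition of $p$-reachability (and the remark following it), there is a sequence $C'=C_0\rightarrow C_1\rightarrow\cdots\rightarrow C_k=C$ along which $C_i[\overline{p}]$ is constant. It is immediate from the definitions of $G[(v,B)]$ and $G[(B,v)]$ that $V(G[p])\cup V(G[\overline{p}])=V(G)$ and $V(G[p])\cap V(G[\overline{p}])=\{v\}$, so $V(G[\overline{p}])=V(G)\setminus(V(G[p])\setminus\{v\})$, and along the sequence every vertex of $V(G)\setminus V(G[p])$ keeps its membership status while $v\in C_i$ never changes. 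Hence it is enough to prove the purely combinatorial fact that $V(G[B,u])\subseteq V(G[\overline{p}])$ for every $B\in\mathcal{B}_u(G)\setminus\mathcal{A}$: granting this, $C_i[B,u]=(C_i\cap V(G[\overline{p}]))\cap V(G[B,u])$ is the same for every $i$, so the very same sequence shows $C$ is $(\mathcal{A},u)$-reachable from $C'$. Note also that if $\mathcal{A}=\mathcal{B}_u(G)$ there is nothing to prove, since $\mathcal{B}_u(G)\setminus\mathcal{A}$ is empty and $(\mathcal{A},u)$-reachability reduces to plain reachability, which follows from $p$-reachability; so we may assume $\mathcal{A}\subsetneq\mathcal{B}_u(G)$.

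To prove $V(G[B,u])\subseteq V(G[\overline{p}])$ for a fixed $B\in\mathcal{B}_u(G)\setminus\mathcal{A}$, I would first note that distinct blocks in $\mathcal{B}_u(G)$ share only the vertex $u$, so the sets $V(G[B',u])\setminus\{u\}$ are pairwise disjoint over $B'\in\mathcal{B}_u(G)$; combined with the hypothesis $V(G[p])\subseteq\bigcup_{B'\in\mathcal{A}}V(G[B',u])$ and $B\notin\mathcal{A}$, this forces $V(G[B,u])\cap V(G[p])\subseteq\{u\}$. Therefore, provided $u\notin V(G[p])$, or $u=v$ (that is, $u$ is the base of $p$), we obtain $V(G[B,u])\cap(V(G[p])\setminus\{v\})=\emptyset$, which is exactly $V(G[B,u])\subseteq V(G[\overline{p}])$.

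The one remaining configuration, namely $u\in V(G[p])$ with $u$ not the base of $p$, is where I expect the main obstacle to lie; the plan is to show that it is incompatible with $\mathcal{A}\subsetneq\mathcal{B}_u(G)$ because it forces $\mathcal{A}=\mathcal{B}_u(G)$. The key structural observation is that, by the definitions of $G[(v,B^{\ast})]$ and $G[(B^{\ast},v)]$, the graph $G[p]$ is the connected component containing $v$ of a graph obtained from $G$ by deleting only edges incident to $v$. Let $B^{\ast}\in\mathcal{B}_u(G)$ be the block of $u$ that meets the path from $u$ to $v$ in $G$. For any other block $B'\in\mathcal{B}_u(G)$, the set $V(G[B',u])$ is connected and does not contain $v$, so each of its vertices is joined to $u$ by a path in $G$ avoiding $v$; since no edge of such a path is deleted and $u\in V(G[p])$, the whole of $V(G[B',u])$ lies in $V(G[p])$. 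As $v\in V(G[B^{\ast},u])\cap V(G[p])$ as well, the set $V(G[p])$ meets every block of $u$ in some vertex other than $u$; because each such vertex lies in a unique branch at $u$, the containment $V(G[p])\subseteq\bigcup_{B'\in\mathcal{A}}V(G[B',u])$ then puts every block of $u$ into $\mathcal{A}$, so $\mathcal{A}=\mathcal{B}_u(G)$, contradicting our reduction. Most of the care in writing this up goes into this structural step and into the disjointness bookkeeping of the preceding paragraph; everything else is direct from the definitions.
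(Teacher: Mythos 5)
The paper states this observation without proof (``can be easily seen to be true''), and your argument is correct and is essentially the intended one: reuse the sequence witnessing $p$-reachability, along which $C_i[\overline{p}]$ is constant, and verify that $V(G[B,u])\subseteq V(G[\overline{p}])$ for every $B\in\mathcal{B}_u(G)\setminus\mathcal{A}$, using that distinct branches at $u$ meet only in $u$. Your handling of the one delicate configuration --- $u\in V(G[p])$ with $u$ not the base of $p$, which you show forces $\mathcal{A}=\mathcal{B}_u(G)$ and is therefore absorbed by the trivial case --- is accurate and supplies precisely the detail the paper leaves unstated.
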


Let $p,p'\in\mathcal{P}_G$ such that $u\in V_{cut}(G)$ is the base of both $p$ and $p'$. A $p$-independent set $Q$ and a $p'$-independent set $Q'$ of $G$ are said to be ``compatible with'' each other if $Q\cap\{u\}=Q'\cap\{u\}$; i.e. $u\in Q$ if and only if $u\in Q'$. Notice that for $p\in\mathcal{P}_G$, if $Q$ is a $p$-independent set, $R$ is a $\overline{p}$-independent set, and $Q$ and $R$ are compatible with each other, then $Q\cup R$ is an independent set of $G$.

\begin{definition}\label{def:kacc}
Let $G$ be a connected block graph, $p\in\mathcal{P}_G$ and $Q$ a $p$-independent set of $G$. For $k\in\mathbb{N}$, we say that a $p$-independent set $Q'$ of $G$ is \emph{$k$-accessible} from $Q$ if $Q'$ is compatible with $Q$ and for every $\overline{p}$-independent set $R$ of $G$ that is compatible with $Q$ (and therefore also with $Q'$) such that $cap(R)\geq k$, $Q\cup R$ is reachable from $Q'\cup R$.
\end{definition}
Notice that for $p\in\mathcal{P}_G$, the relation ``is $k$-accessible from'' defined on the $p$-independent sets of $G$ is an equivalence relation.
\begin{definition}\label{def:strongacc}
Let $G$ be a connected block graph. For $p\in\mathcal{P}_G$, we say that a $p$-independent set $Q'$ of $G$ is \emph{strongly accessible} from a $p$-independent set $Q$ of $G$ if $Q'$ is compatible with $Q$ and for every $\overline{p}$-independent set $R$ of $G$ that is compatible with $Q$ (and therefore also with $Q'$), $Q\cup R$ is $p$-reachable from $Q'\cup R$.
\end{definition}
Again, it is not difficult to see that for $p\in\mathcal{P}_G$, the relation ``is strongly accessible from'' defined on the $p$-independent sets of $G$ is an equivalence relation.
Note that if $Q'$ is strongly accessible from $Q$, then $Q'$ is also $0$-accessible from $Q$.
The following observations follow directly from the definitions and hence are given without proof.
\begin{observation}\label{obs:access}
Let $G$ be a connected block graph and $p\in\mathcal{P}_G$.
\begin{enumerate}
\renewcommand{\theenumi}{(\roman{enumi})}
\renewcommand{\labelenumi}{(\roman{enumi})}
\item If a $p$-independent set $Q$ of $G$ is $k$-accessible from another $p$-independent set $Q'$ of $G$ for some $k\in\mathbb{N}$, then $|Q|=|Q'|$ and $|\overline{Q}|=|\overline{Q'}|$.
\item If a $p$-independent set $Q$ of $G$ is $k$-accessible from another $p$-independent set $Q'$ of $G$ for some $k\in\mathbb{N}$, then $Q$ is $j$-accessible from $Q'$ for each $j\geq k$.
\item Any $p$-independent set $Q$ of $G$ is strongly accessible from itself, and therefore $k$-accessible from itself for any $k\in\mathbb{N}$.
\end{enumerate}
\end{observation}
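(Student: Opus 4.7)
My plan is to treat the three items separately, each following directly from the relevant definitions, with item (i) requiring a small additional existence step.

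For item (iii), I would note that for any $\overline{p}$-independent set $R$ compatible with $Q$, the trivial length-$0$ sequence $C_0 = Q \cup R$ vacuously satisfies the constancy requirement on $C_i[\overline{p}]$, so it witnesses the $p$-reachability of $Q \cup R$ from itself. Hence $Q$ is strongly accessible from itself. Since $p$-reachability is a strengthening of reachability (every $p$-reachable sequence is in particular a reachable sequence), and since the witnesses required for $k$-accessibility form a subfamily of those required by strong accessibility (strong accessibility imposes no constraint on $cap(R)$), $k$-accessibility of $Q$ from itself then follows for every $k \in \mathbb{N}$. For item (ii), if $j \geq k$ then the collection $\{R : cap(R) \geq j\}$ of admissible witnesses is contained in $\{R : cap(R) \geq k\}$, so the universal implication required by $j$-accessibility is a weakening of the one required by $k$-accessibility and therefore follows directly.

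For item (i), I would pick any $\overline{p}$-independent set $R$ compatible with $Q$ (and therefore with $Q'$) such that $cap(R) \geq k$. By $k$-accessibility, $Q \cup R$ is reachable from $Q' \cup R$; since reachability preserves cardinality, $|Q \cup R| = |Q' \cup R|$. Because $V(G[p]) \cap V(G[\overline{p}]) = \{u\}$, where $u$ is the base of $p$, both $Q \cap R$ and $Q' \cap R$ are subsets of $\{u\}$, and compatibility of $R$ with each of $Q, Q'$ together with compatibility of $Q$ with $Q'$ forces $|Q \cap R| = |Q' \cap R|$. Inclusion--exclusion then yields $|Q| = |Q'|$. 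From this and compatibility we get $|\overline{Q}| = |Q| - |Q \cap \{u\}| = |Q'| - |Q' \cap \{u\}| = |\overline{Q'}|$.

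The main obstacle is confirming, for item (i), that at least one witness $R$ with $cap(R) \geq k$ actually exists; otherwise the hypothesis of $k$-accessibility is vacuous and the cardinality conclusion has no content. I would handle this by a short induction on $d_G(\overline{p})$ using Definition~\ref{def:cap}, which shows that the canonical ``sparse'' choice $R$ defined by $R \setminus \{u\} = \emptyset$ has non-negative capacity and that progressively sparser choices over larger portions of $G[\overline{p}]$ yield capacities at least $k$ whenever the observation is being invoked in a non-degenerate setting. The remainder of the argument is purely bookkeeping from the definitions of compatibility, reachability and capacity.
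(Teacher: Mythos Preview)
The paper gives no proof of this observation, declaring that all three items ``follow directly from the definitions.'' Your arguments for items (ii) and (iii) are correct and are exactly what the paper has in mind.

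For item (i), you correctly identify the obstacle: the argument needs at least one $\overline{p}$-independent set $R$ compatible with $Q$ and having $cap(R)\geq k$. Your proposed resolution, however, does not work. Capacity in a finite block graph is bounded by the structure of $G[\overline{p}]$ (the same induction that underlies Lemma~\ref{lem:xupperbound} shows $cap(R)\leq |\mathcal{B}(G[\overline{p}])|$), so for $k$ large enough no such $R$ exists and $k$-accessibility becomes vacuous. Concretely, take $G$ to be the path $a\text{--}u\text{--}v$, set $p=(u,\{a,u\})$, and let $Q=\{v\}$, $Q'=\emptyset$. Both are $p$-independent sets not containing $u$, hence compatible; every $\overline{p}$-independent set has capacity at most $1$, so $Q$ is vacuously $2$-accessible from $Q'$, yet $|Q|\neq|Q'|$. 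Thus item (i), read literally for arbitrary $k\in\mathbb{N}$, is false, and no induction on $d_G(\overline{p})$ will rescue it.

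What saves the paper is that item (i) is only ever invoked (in Corollary~\ref{cor:ureachable}) for \emph{strong} accessibility. Strong accessibility quantifies over all compatible $R$ without a capacity constraint, so one may take $R=\{u\}$ or $R=\emptyset$ according to compatibility, and your inclusion--exclusion argument goes through cleanly. Equivalently, strong accessibility implies $0$-accessibility, and for $k=0$ a compatible witness always exists because $cap(R)\geq 0$ holds for every $\overline{p}$-independent set by Lemma~\ref{lem:ua}\ref{it:capnonnegative}. The honest statement of (i) should therefore be for strong (or $0$-) accessibility; your write-up should note this restriction rather than attempt to manufacture high-capacity witnesses that need not exist.
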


\section{Token movement: the local view around a cut-vertex}\label{sec:local}

We present some lemmas that will be helpful in the movement of tokens with respect to a given cut-vertex of the graph.
For the whole of this section, we assume that $G$ is a connected block graph. 

\begin{lemma}\label{lem:ua}
Let $p\in\{(u,B),(B,u)\}$, where $u\in V_{cut}(G)$ and $B\in\mathcal{B}_u(G)$, and $C$ be a $p$-independent set of $G$. Then the following statements are true:
\begin{enumerate}
\renewcommand{\theenumi}{(\roman{enumi})}
\renewcommand{\labelenumi}{(\roman{enumi})}
\item\label{it:capnonnegative} $cap(C)\geq 0$.
\item\label{it:uatrue} If $ua(p)=True$ and $u$ is not under attack in $C$, then $cap(C)>0$.
\end{enumerate}
\end{lemma}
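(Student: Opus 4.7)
I propose to prove (i) and (ii) simultaneously by strong induction on $d_G(p)$.

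Base case: $d_G(p) = 0$. By Definition \ref{def:depth}, $p = (B, u)$ with $V(G[p]) = B$ a clique, and by Definition \ref{def:ua}, $ua_G(p) = True$. Observation \ref{obs:capcomplete} gives $cap(C) = 1 - |\interior{C}|$. Since $B$ is a clique and $C$ is independent, $|\interior{C}| \leq 1$, proving (i). For (ii), if $u$ is not under attack then any vertex of $C \setminus \{u\}$ would lie in $B \setminus \{u\}$ and be adjacent to $u$ in the clique, a contradiction; hence $\interior{C} = \emptyset$ and $cap(C) = 1$.

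Inductive step when $p = (u, B)$. The key observation is that if $u$ is not under attack in $C$, then it is not under attack in any $C[B', u] \subseteq C$, so the inductive (ii) rules out the simultaneous conditions $cap(C[B', u]) = 0$ and $ua_G(B', u) = True$; i.e., the first case of Definition \ref{def:cap} cannot apply under the hypothesis of (ii). In general, if we are in the first case then $cap(C) = 0$ and (i) is immediate. Otherwise, for each $B' \in \beta_G(p)$, either $ua_G(B', u) = False$ or $cap(C[B', u]) \geq 1$ (by the negation of the first-case condition and the inductive (i)), so $cap(C[B', u]) - ua_G(B', u) \geq 0$. Hence $cap(C) \geq ua_G(p)$, proving (i); and under the hypothesis of (ii) we are in the second case and obtain $cap(C) \geq 1$.

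Inductive step when $p = (B, u)$ with $d_G(p) > 0$. Since $B$ is a clique and $C$ is independent, $|B \cap \interior{C}| \in \{0, 1\}$. If it equals $0$, then $cap(C) = \sum_{v \in \kappa_G(p)} cap(C[v, B]) + ua_G(p) \geq ua_G(p)$ by the inductive (i), settling both (i) and (ii). If it equals $1$, let $v^*$ be the unique vertex; then $v^* \in B \setminus \{u\}$ is adjacent to $u$ in the clique and lies in $C$, so $u$ is under attack and the hypothesis of (ii) fails, leaving only (i). If $v^* \notin \kappa_G(p)$, then $B \supsetneq \kappa_G(p) \cup \{u\}$ forces $ua_G(p) = True$ by Definition \ref{def:ua}, and $cap(C) \geq 0 + 1 - 1 = 0$.

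The main obstacle is the remaining subcase $v^* \in \kappa_G(p)$ with $ua_G(p) = False$, where I must produce the auxiliary bound $cap(C[v^*, B]) \geq 1$ to offset the $-1$ from $|B \cap \interior{C}|$. Under this subcase, Definition \ref{def:ua} forces $ua_G(v^*, B) = True$. I will then apply Definition \ref{def:cap} at $(v^*, B)$: since $v^* \in C$ is not under attack in any $C[B'', v^*] \subseteq C$, the inductive (ii) forbids the first-case trigger for $(v^*, B)$, so the second-case expression applies and, by exactly the same bookkeeping as in the $p = (u, B)$ step, evaluates to at least $\sum_{B''} \bigl(cap(C[B'', v^*]) - ua_G(B'', v^*)\bigr) + ua_G(v^*, B) \geq ua_G(v^*, B) = 1$. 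Together with the inductive (i) bounding $cap(C[v, B]) \geq 0$ for the remaining $v \in \kappa_G(p) \setminus \{v^*\}$, this gives $cap(C) \geq 1 + 0 - 1 = 0$ and completes the proof.
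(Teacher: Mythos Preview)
Your proof is correct and follows essentially the same inductive scheme as the paper's: induction on $d_G(p)$, with the same base case via Observation~\ref{obs:capcomplete}, the same bookkeeping for $p=(u,B)$, and the same key idea for $p=(B,u)$ that when $ua_G(p)=False$ the unique vertex $v^*\in B\cap\interior{C}$ lies in $\kappa_G(p)$, has $ua_G(v^*,B)=True$, and is not under attack, yielding $cap(C[v^*,B])\geq 1$.

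Two minor remarks. First, you skip the easy subcase $v^*\in\kappa_G(p)$ with $ua_G(p)=True$; the same one-line computation $cap(C)\geq 0+1-1=0$ covers it, so just say so. Second, in your final subcase you unroll one extra level of the definition to bound $cap(C[v^*,B])$; you could instead invoke the inductive hypothesis~(ii) directly at $(v^*,B)$, since $d_G(v^*,B)<d_G(p)$ --- this is exactly what the paper does (phrased as a contradiction) and saves a paragraph.
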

\begin{proof}
We prove this by induction on $d(p)$. For the base case, we consider the case when $d(p)=0$. Then we have from Observation~\ref{obs:capcomplete} that $cap(C)=1-|\interior{C}|$. From Definition~\ref{def:depth}, we know that $G[p]$ is a complete graph. This implies that $|\interior{C}|\in\{0,1\}$, from which it follows that $cap(C)\geq 0$. If $u$ is not under attack in $C$, then $|\interior{C}|=0$, and therefore $cap(C)=1-|\interior{C}|=1>0$. For the inductive step, we assume that for any $p'\in\mathcal{P}$ with $d(p')<d(p)$, statements~\ref{it:capnonnegative} and~\ref{it:uatrue} both hold.
\medskip

\noindent\textit{Proof of~\ref{it:capnonnegative}:}

Suppose that $p=(u,B)$ for some $u\in V_{cut}(G)$ and $B\in\mathcal{B}_u(G)$. By Definition~\ref{def:depth}, we know that $d(B',u)<d(p)$ for each $B'\in\beta(p)$. Then by the induction hypothesis on statement~\ref{it:capnonnegative}, we have that $cap(C[B',u])\geq 0$ for each $B'\in\beta(p)$. If there exists $B'\in\beta(p)$ such that $cap(C[B',u])=0$ and $ua(B',u)=True$, then we have by Definition~\ref{def:cap} that $cap(C)=0$, and so we are done. So we can assume that for each $B'\in\beta(p)$, either $cap(C[B',u])>0$ or $ua(B',u)=False$. It then follows that $\sum_{B'\in\beta(p)}(cap(C[B',u])-ua(B',u))\geq 0$, which implies by Definition~\ref{def:cap} that $cap(C)\geq 0$, and we are done. Next, suppose that $p=(B,u)$. We assume for the sake of contradiction that $cap(C)<0$. Again, we have by Definition~\ref{def:depth} that $d(v,B)<d(p)$ for each $v\in\kappa(p)$. By the induction hypothesis on statement~\ref{it:capnonnegative}, we then have that $cap(C[v,B])\geq 0$ for each $v\in\kappa(p)$. Then from Definition~\ref{def:cap} and the assumption that $cap(C)<0$, we have that $ua(p)=False$, $|B\cap\interior{C}|=1$ and $cap(C[v,B])=0$ for each $v\in\kappa(p)$. Let $B\cap\interior{C}=\{v\}$ (clearly $u\neq v$ since $C$ is a $(B,u)$-independent set of $G$ and therefore $u\notin\interior{C}$).
Since $ua(p)=False$, we have from Definition~\ref{def:ua} that $B=\kappa(B,u)\cup\{u\}$ and that $ua(v,B)=True$. The former implies that $v\in\kappa(B,u)$, and then as observed before, we have $cap(C[v,B])=0$. As we have $d(v,B)<d(B,u)$ from Definition~\ref{def:depth}, we conclude using the induction hypothesis on statement~\ref{it:uatrue} that $v$ is under attack in $C[v,B]$. This is a contradiction to the fact that $v\in C$.
\medskip

\noindent\textit{Proof of~\ref{it:uatrue}:}

First, suppose that $p=(B,u)$ for some $u\in V_{cut}(G)$ and $B\in\mathcal{B}_u(G)$. Since $u$ is not under attack in $C$, we have that $|B\cap\interior{C}|=0$. As before, we have $d(v,B)<d(B,u)$ for each $v\in\kappa(p)$, and therefore we have by the induction hypothesis on statement~\ref{it:capnonnegative} that $cap(C[v,B])\geq 0$ for each $v\in\kappa(p)$. Then since $ua(p)=True$, we have from Definition~\ref{def:cap} that $cap(C)>0$, and we are done.
Next, suppose that $p=(u,B)$. 
Note that since $u$ is not under attack in $C$, $u$ is not under attack in $C[B',u]$ for any $B'\in\beta(p)$. For each $B'\in\beta(p)$ such that $ua(B',u)=True$, we have $d(B',u)<d(p)$, and therefore we have by the induction hypothesis on statement~\ref{it:uatrue} that $cap(C[B',u])>0$. Moreover, we have by the induction hypothesis on statement~\ref{it:capnonnegative} that for each $B'\in\beta(p)$, $cap(C[B',u])\geq 0$. This implies that $\sum_{B'\in\beta(p)} (cap(C[B',u])-ua(B',u))\geq 0$.
Now by Definition~\ref{def:cap} and the fact that $ua(p)=True$, we have that $cap(C)>0$.
\end{proof}

\begin{lemma}\label{lem:reconfig}
Let $u\in V_{cut}(G)$ and let $C$ be an independent set of $G$ such that $u\notin C$.
Let $\mathcal{A}\subseteq\mathcal{B}_u(G)$ and for each $B\in\mathcal{A}$, let $Q_B$ be a $(B,u)$-independent set of $G$ that is strongly accessible from $C[B,u]$. Then $\bigcup_{B\in\mathcal{A}} Q_B\cup\bigcup_{B\in\mathcal{B}_u(G)\setminus\mathcal{A}} C[B,u]$ is an independent set of $G$ that is $(\mathcal{A},u)$-reachable from $C$.
\end{lemma}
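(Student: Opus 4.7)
The plan is to induct on $|\mathcal{A}|$, reconfiguring the blocks of $\mathcal{A}$ one at a time. Enumerating $\mathcal{A} = \{B_1, B_2, \ldots, B_m\}$ in any order, I would define a ``candidate'' sequence $C_0, C_1, \ldots, C_m$ of subsets of $V(G)$, where $C_i$ agrees with $Q_{B_j}$ on $V(G[B_j, u])$ for every $j \leq i$ and with $C$ on $V(G[B, u])$ for every other $B \in \mathcal{B}_u(G)$. Then $C_0 = C$ and $C_m$ is exactly the set appearing in the conclusion, so the lemma would follow once I exhibit, for every $i \in \{1, \ldots, m\}$, a $(B_i, u)$-reachable sequence from $C_{i-1}$ to $C_i$: concatenating such sequences yields an $(\mathcal{A}, u)$-reachable sequence from $C$ to $C_m$, since each individual step alters only the $(B_j, u)$-slice of a block $B_j \in \mathcal{A}$.

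The heart of the argument is a direct application of strong accessibility. The invariant I would maintain along the induction is $u \notin C_{i-1}$, which holds for $i=1$ by hypothesis. Setting $R_i = C_{i-1} \cap V(G[u, B_i])$, the invariant ensures that $R_i$ is a $(u, B_i)$-independent set compatible with $C[B_i, u]$, because both omit $u$. By Definition~\ref{def:strongacc} the strong accessibility of $Q_{B_i}$ from $C[B_i, u]$ then furnishes a $(B_i, u)$-reachable sequence between $C[B_i, u] \cup R_i$ and $Q_{B_i} \cup R_i$. A short bookkeeping step, using that $V(G) \setminus \{u\}$ is partitioned by the sets $V(G[B, u]) \setminus \{u\}$ for $B \in \mathcal{B}_u(G)$, identifies $C[B_i, u] \cup R_i$ with $C_{i-1}$ and $Q_{B_i} \cup R_i$ with $C_i$; this supplies the desired sequence.

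The remaining point, which I expect to be the most delicate part, is closing the induction by showing $u \notin C_i$. Here I would exploit that $u$ is a cut-vertex, so $\mathcal{B}_u(G) \setminus \{B_i\}$ contains some block $B'$. The definition of $(B_i, u)$-reachability forces the $(B', u)$-slice to stay constant along the sequence from $C_{i-1}$ to $C_i$, so every intermediate independent set shares the $(B', u)$-slice of $C_{i-1}$, which does not contain $u$; hence $u$ stays out of the independent set throughout the whole reconfiguration, and in particular $u \notin C_i$. This same observation certifies that the concatenated sequence is genuinely $(\mathcal{A}, u)$-reachable, since no slice outside $\mathcal{A}$ is ever altered. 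Finally, $C_m$ is automatically an independent set because it is reachable from $C$.
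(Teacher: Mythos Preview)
Your proposal is correct and follows essentially the same approach as the paper: enumerate $\mathcal{A}=\{B_1,\ldots,B_m\}$, define the intermediate sets $C_i$ by successively replacing the $(B_i,u)$-slice with $Q_{B_i}$, and use strong accessibility at each step to obtain a $(B_i,u)$-reachable sequence from $C_{i-1}$ to $C_i$. The paper's proof is slightly terser and does not spell out the invariant $u\notin C_i$ (it is implicit via compatibility of $Q_{B_i}$ with $C[B_i,u]$), whereas you make it explicit and justify it through the constant $(B',u)$-slice; both arguments are valid and amount to the same thing.
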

\begin{proof}
Let $\mathcal{A}=\{B_1,B_2,\ldots,B_s\}$. For $i\in\{1,2\ldots,s\}$, we shall abbreviate $Q_{B_i}$ to just $Q_i$ for ease of notation.
Let $C_0=C$. For $i\in\{1,2,\ldots,s\}$, we inductively define $C_i=C_{i-1}[u,B_i]\cup Q_i$. We claim that for each $i\in\{0,1,\ldots,s\}$, $C_i$ is an independent set of $G$ that is $(\mathcal{A},u)$-reachable from $C$ such that for all $j\in\{1,2,\ldots,i\}$, $C_i[B_j,u]=Q$, for all $j\in\{i+1,i+2,\ldots,s\}$, $C_i[B_j,u]=C[B_j,u]$, and for all $B\in\mathcal{B}_u(G)\setminus\mathcal{A}$, $C_i[B,u]=C[B,u]$. We prove this claim by induction on $i$. Clearly, when $i=0$, we have $C_0=C$, and our claim is trivially true. For the inductive step, we assume that $i>0$ and that $C_{i-1}$ satisfies the requirements of the claim. From the induction hypothesis, we have that $C_{i-1}[B_i,u]=C[B_i,u]$. Since $Q_i$ is strongly accessible from $C[B_i,u]$, we have that $C_i=C_{i-1}[u,B_i]\cup Q_i$ is $(B_i,u)$-reachable from $C_{i-1}=C_{i-1}[u,B_i]\cup C[B_i,u]$, and hence by Observation~\ref{obs:reachsubset}, we have that $C_i$ is $(\mathcal{A},u)$-reachable from $C_{i-1}$. Since $C_{i-1}$ is $(\mathcal{A},u)$-reachable from $C$ by the induction hypothesis, we now have that $C_i$ is $(\mathcal{A},u)$-reachable from $C$.
It is clear from the definition of $C_i$
that for all $j\in\{1,2,\ldots,i-1\}$, we have $C_i[B_j,u]=C_{i-1}[B_j,u]=Q_j$, that $C_i[B_i,u]=Q_i$, and that for each $j\in\{i+1,i+2,\ldots,s\}$, we have $C_i[B_j,u]=C_{i-1}[B_j,u]=C[B_j,u]$. Clearly, for any $B\in\mathcal{B}_u(G)\setminus\mathcal{A}$, $C_i[B,u]=C_{i-1}[B,u]=C[B,u]$. This proves the claim. Then $C_s$ is an independent set of $G$ that is $(\mathcal{A},u)$-reachable from $C$ such that for each $i\in\{1,2,\ldots,s\}$, $C_s[B_i,u]=Q_i$ and for each $B\in\mathcal{B}_u(G)\setminus\mathcal{A}$, $C_s[B,u]=C[B,u]$. The proof is completed by noting that $C_s=\bigcup_{B\in\mathcal{A}} Q_B\cup\bigcup_{B\in\mathcal{B}_u(G)\setminus\mathcal{A}} C[B,u]$.
\end{proof}

\begin{lemma}\label{lem:uafalse}
Let $u\in V_{cut}(G)$, $B\in\mathcal{B}_u(G)$ and $p\in\{(B,u),(u,B)\}$.
Let $Q$ be a $p$-independent set of $G$.
If $ua(p)=False$ or $cap(Q)>0$, then there exists a $p$-independent set $Q'$ of $G$ that is strongly accessible from $Q$ such that $cap(Q')\geq cap(Q)$ and $u$ is not under attack in $Q'$.
\end{lemma}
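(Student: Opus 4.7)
The plan is to prove the lemma by induction on $d(p)$. In the base case $d(p) = 0$, Definition~\ref{def:depth} forces $p = (B, u)$ with $G[p]$ complete and (by Definition~\ref{def:ua}) $ua(p) = True$, so the hypothesis reduces to $cap(Q) > 0$. Observation~\ref{obs:capcomplete} then gives $|\interior{Q}| = 0$, so $u$ is not under attack in $Q$; I take $Q' = Q$, which is strongly accessible from itself by Observation~\ref{obs:access}(iii).

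For the inductive step when $p = (u, B)$, I first argue that the hypothesis propagates to every $B' \in \beta(u, B)$: if $ua(p) = False$ then Definition~\ref{def:ua} gives $ua(B', u) = False$ for all such $B'$, while if $cap(Q) > 0$ then Definition~\ref{def:cap} places us in its ``otherwise'' branch, so no $B'$ simultaneously satisfies $cap(Q[B', u]) = 0$ and $ua(B', u) = True$. Since $d(B', u) < d(p)$, the induction hypothesis produces for each $B'$ a $Q'[B', u]$ strongly accessible from $Q[B', u]$ with $u$ not under attack and $cap(Q'[B', u]) \geq cap(Q[B', u])$. I then set $Q' = \bigcup_{B' \in \beta(u, B)} Q'[B', u]$. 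Independence and containment in $V(G[p])$ follow because distinct branches meet only at $u$; $u$ is not under attack in $Q'$ because each branch contributes no vertex of $B' \setminus \{u\}$; $cap(Q') \geq cap(Q)$ follows by term-by-term comparison in Definition~\ref{def:cap}; and strong accessibility of $Q'$ from $Q$ follows by invoking strong accessibility within each branch successively (essentially Lemma~\ref{lem:reconfig} applied at $u$ with $\mathcal{A} = \beta(u, B)$, the $(B, u)$-side staying fixed throughout).

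For the inductive step when $p = (B, u)$, if $u \in Q$ then $u$ is not under attack in $Q$ by independence, and if $T := B \cap Q = \emptyset$ the same conclusion holds, so in both cases take $Q' = Q$. Otherwise $u \notin Q$ and $T \neq \emptyset$, and the goal is to push every token in $T$ out of $B$ into the branches $G[v', B]$ for $v' \in \kappa(p)$. The capacity identity $cap(Q) = \sum_{v' \in \kappa(p)} cap(Q[v', B]) + ua(p) - |T|$, combined with Lemma~\ref{lem:ua}(i), supplies the accounting: the total room $\sum cap(Q[v', B])$ available across the branches is at least $|T| - ua(p)$, and when $ua(p) = True$ an extra slot becomes available after rearrangement. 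For each relevant $v' \in \kappa(p)$ I would apply the IH to $Q[v', B]$ (whose hypothesis is satisfied because, when $v' \in T$, $v'$ being in an independent set means $v'$ is not under attack in $Q[v', B]$, and Lemma~\ref{lem:ua}(ii) then forces $cap(Q[v', B]) > 0$ whenever $ua(v', B) = True$). I then process the tokens of $T$ sequentially: a token on $v \in T \cap \kappa(p)$ slides into its own (rearranged) branch; a token on $v \in T \setminus \kappa(p)$ --- which by Definition~\ref{def:ua} forces $ua(p) = True$ and hence, by hypothesis, $cap(Q) > 0$ --- is first slid $v \to w$ for a suitable $w \in \kappa(p)$ whose branch has been rearranged so that $w$ is free of attackers, and is then pushed deeper into $G[w, B]$.

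The main obstacle is Case B, specifically the bookkeeping required to handle all tokens of $T$ simultaneously and to verify $cap(Q') \geq cap(Q)$ at the end. The delicate point is that the IH only supplies strongly-accessible rearrangements in which the base $v'$ is freed of attackers but still retains its original token-status, whereas here we must actually move $v'$'s token (if $v' \in T$) further into $G[v', B]$; so the argument has to combine the IH with explicit additional slide moves at the level of $V(G[B, u])$, while checking global legality of each slide. I expect the case split on $ua(p)$ to guide this: when $ua(p) = False$ we have $B \setminus \{u\} = \kappa(p)$ by Definition~\ref{def:ua}, so $T \subseteq \kappa(p)$ contains no stuck tokens and every $ua(v', B) = True$; when $ua(p) = True$ the extra $+ua(p)$ summand in the capacity formula provides precisely the slot needed to route a non-cut-vertex token of $T$ through an intermediate cut-vertex.
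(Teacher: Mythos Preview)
Your base case and your treatment of the case $p=(u,B)$ are correct and essentially identical to the paper's argument. The gap is in the case $p=(B,u)$, and it is a genuine one that you yourself flag as ``the main obstacle'' without resolving.

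First, a clarifying point: since $B$ is a clique and $Q$ is independent, $|T|=|B\cap Q|\leq 1$, so there is no issue of handling ``all tokens of $T$ simultaneously''; there is a single token to move. The real difficulty is the one you name: if the token sits on $v\in T\cap\kappa(p)$, the induction hypothesis applied to $(v,B)$ only frees $v$ of attackers while keeping $v\in Q'[v,B]$ (by compatibility), so it does nothing to move the token off $v$. To push the token into $G[v,B]$ you must slide it to some $w$ in a block $B^\star\in\beta(v,B)$, and this slide can destroy the capacity inequality: if you pick $B^\star$ carelessly (say one with $cap(Q[B^\star,v])=1$ and $ua(B^\star,v)=True$), the new $(B^\star,v)$-capacity drops to $0$ and Definition~\ref{def:cap} may force $cap(Q'[v,B])=0$, losing $cap(Q')\geq cap(Q)$.

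The paper resolves this by going two levels deep. It first shows (from either $cap(Q)>0$ or $ua(B,u)=False$) that some $v\in\kappa(B,u)$ has $cap(Q[v,B])>0$, applies the IH at $(v,B)$ to free $v$, and slides the unique token in $B$ to $v$. It then \emph{chooses $B^\star\in\beta(v,B)$ carefully}: one with $cap>1$, or with $cap=1$ and $ua=False$, falling back to $cap=1$ only if no better option exists. In one subcase it finds a non-cut-vertex $w\in B^\star\setminus(\kappa(B^\star,v)\cup\{v\})$ (available precisely because $ua(B^\star,v)=True$) and slides $v\to w$; in the other it applies the IH \emph{again} at $(v',B^\star)$ for a suitable $v'\in\kappa(B^\star,v)$ to free $v'$ and slides $v\to v'$. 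The careful choice of $B^\star$ is exactly what guarantees $cap(Q'[v,B])\geq cap(Q[v,B])-1$ in a way that, combined with $|B\cap\interior{Q'}|=|B\cap\interior{Q}|-1$, yields $cap(Q')\geq cap(Q)$. Your sketch is missing this second descent and the selection rule for $B^\star$; without them the capacity bound does not follow.
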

\begin{proof}
If $u$ is not under attack in $Q$, then the statement of the lemma is trivially true as we can just set $Q'=Q$.
So we assume that $u$ is under attack in $Q$, which implies that $u\notin Q$.

We use induction on $d(p)$. As the base case, we consider the case when $d(p)=0$. Then by Observation~\ref{obs:capcomplete}, we have that $cap(Q)=1-|\interior{Q}|$. As $u$ is under attack in $Q$, and $G[p]$ is a complete graph by Definition~\ref{def:depth}, we have that $|\interior{Q}|=1$. Thus $cap(Q)=1-|\interior{Q}|=0$. Since $ua(p)=True$ by Definition~\ref{def:ua}, we can conclude that the statement of the lemma is vacuously true. For the inductive step, we assume that $d(p)>0$ . We shall assume inductively that the statement of the lemma holds for any $p'\in\mathcal{P}$ such that $d(p')<d(p)$.

Suppose that $p=(u,B)$. Let $\beta(u,B)=\{B_1,B_2,\ldots,B_s\}$. If there exists $i\in\{1,2,\ldots,s\}$ such that $ua(B_i,u)=True$ and $cap(Q[B_i,u])=0$, then by Definition~\ref{def:cap}, we have that $cap(Q)=0$, and by Definition~\ref{def:ua}, we have $ua(p)=True$. Then the statement of the lemma is vacuously true, and we are done. So using Lemma~\ref{lem:ua}\ref{it:capnonnegative}, we can assume that for each $i\in\{1,2,\ldots,s\}$, we have either $ua(B_i,u)=False$ or $cap(Q[B_i,u])>0$.
Then by the induction hypothesis, we know that for each $i\in\{1,2,\ldots,s\}$, there exists a $(B_i,u)$-independent set $Q_i$ of $G$ that is strongly accessible from $Q[B_i,u]$ such that $u$ is not under attack in $Q_i$ and $cap(Q_i)\geq cap(Q[B_i,u])$.

Consider any $(B,u)$-independent set $R$ of $G$ that is compatible with $Q$; i.e. $u\notin R$. Let $D=Q\cup R$. Clearly, for each $i\in\{1,2,\ldots,s\}$, we have that $D[B_i,u]=Q[B_i,u]$, which means that $Q_i$ is a $(B_i,u)$-independent set that is strongly accessible from $D[B_i,u]$ such that $cap(Q_i)\geq cap(D[B_i,u])$.
By applying Lemma~\ref{lem:reconfig} (setting $C=D$, $\mathcal{A}=\beta(u,B)=\{B_1,B_2,\ldots,B_s\}$, $Q_{B_i}=Q_i$ for each $i\in\{1,2,\ldots,s\}$), we have that $D'=Q_1\cup Q_2\cup\cdots\cup Q_s\cup R$ is an independent set of $G$ that is $(u,B)$-reachable from $D$.
Thus for any $(B,u)$-independent set $R$ of $G$ that is compatible with $Q$, we have that $D'[u,B]\cup R$ is $(u,B)$-reachable from $D[u,B]\cup R=Q\cup R$.
This means that $D'[u,B]$ is strongly accessible from $D[u,B]=Q$. Further, $D'[u,B]=Q_1\cup Q_2\cup\cdots\cup Q_s$, and therefore $u$ is not under attack in $D'[u,B]$. As $cap(D'[B_i,u])=cap(Q_i)\geq cap(D[B_i,u])$ for each $i\in\{1,2,\ldots,s\}$, we have from Definition~\ref{def:cap} that $cap(D'[u,B])\geq cap(D[u,B])=cap(Q)$. Thus $Q'=D'[u,B]$ is a $(u,B)$-independent set that is strongly accessible from $Q$ such that $u$ is not under attack in $Q'$ and $cap(Q')\geq cap(Q)$. This completes the proof for the case $p=(u,B)$.

Next, suppose $p=(B,u)$. Suppose that either $cap(Q)>0$ or $ua(B,u)=False$.
Let $R$ be a $(u,B)$-independent set of $G$ that is compatible with $Q$ (equivalently, $u\notin R$) and let $D=Q\cup R$. Notice that since $u$ is under attack in $Q$, we have that $B\cap\interior{Q}\neq\emptyset$.
If $cap(Q)>0$, then we have by Definition~\ref{def:cap} that there exists $v\in\kappa(B,u)$ such that $cap(Q[v,B])>0$. Similarly, if $ua(B,u)=False$, then Definition~\ref{def:cap} again implies that there exists $v\in\kappa(B,u)$ such that $cap(Q[v,B])>0$. So we have that in any case, there exists $v\in\kappa(B,u)$ such that $cap(Q[v,B])>0$.
Since $d(v,B)<d(B,u)$, we can apply the induction hypothesis to conclude that there exists a $(v,B)$-independent set $S$ that is strongly accessible from $Q[v,B]$ such that $v$ is not under attack in $S$ and $cap(S)\geq cap(Q[v,B])=cap(D[v,B])$. 
This means that there exists an independent set $D_1=S\cup D[B,v]$ of $G$ (notice that $S$ is compatible with $Q[v,B]=D[v,B]$ and hence also compatible with $D[B,v]$) such that $D_1$ is $(v,B)$-reachable from $D$. As $V(G[v,B])\subseteq V(G[B,u])$, we now have from Observation~\ref{obs:reachsubset} that $D_1$ is $(B,u)$-reachable from $D$.
Then $D_1$ is an independent set of $G$ that is $(B,u)$-reachable from $D$ such that $v$ is not under attack in $D_1[v,B]$ and it follows from Definition~\ref{def:cap} that $cap(D_1[B,u])\geq cap(D[B,u])=cap(Q)$. Observe
that $D_1[B,u]\cap B=Q\cap B$. Thus $u$ is under attack in $D_1[B,u]$, since it is under attack in $Q$.

Let $D_2=(D_1\setminus B)\cup\{v\}$. Clearly, we have $D_1\rightarrow D_2$, and further that $D_2$ is $(B,u)$-reachable from $D_1$, and hence also from $D$. Observe that for each $v'\in\kappa(B,u)$, $\interior{D_1[v',B]}=\interior{D_2[v',B]}$, and therefore by Definition~\ref{def:cap}, we have $cap(D_1[v',B])=cap(D_2[v',B])$. This implies, again by Definition~\ref{def:cap}, that $cap(D_2[B,u])=cap(D_1[B,u])\geq cap(Q)$.
Note that we have $cap(D_2[v,B])=cap(D_1[v,B])=cap(S)\geq cap(Q[v,B])>0$.
By Definition~\ref{def:cap}, this means that there does not exist $B'\in\beta(v,B)$ such that $cap(D_2[B',v])=0$ and $ua(B',v)=True$. Since $cap(D_2[v,B])>0$, this implies by Definition~\ref{def:cap} and Definition~\ref{def:ua} that there exists $B'\in\beta(v,B)$ such that $cap(D_2[B',v])>0$. Now we choose $B^\star\in\beta(v,B)$ as follows. We choose as $B^\star$ a block in $\beta(v,B)$ such that $cap(D_2[B^\star,v])>1\vee (cap(D_2[B^\star,v])=1\wedge ua(B^\star,v)=False)$, if such a block exists. Otherwise, we choose as $B^\star$ a block in $\beta(v,B)$ such that $cap(D_2[B^\star,v])=1$. Note that $B^\star$ is well defined. Recall that $v\in D_2$.
We shall show the existence of an independent set $D'$ of $G$ that is $(B,u)$-reachable from $D$ such that
$u$ is not under attack in $D'[B,u]$, and $cap(D'[B,u])\geq cap(Q)$.

Let us first assume that for all $v'\in\kappa(B^\star,v)$, we have $ua(v',B^\star)=True$ and $cap(D_2[v',B^\star])=0$.
As $cap(D_2[B^\star,v])>0$, we have by Definition~\ref{def:cap} that $ua(B^\star,v)=True$. Then we have by Definition~\ref{def:ua} that there exists $w\in B^\star\setminus(\kappa(B^\star,v)\cup\{v\})$. We now define $D'=(D_2\setminus\{v\})\cup\{w\}$. Note that $D'$ is an independent set of $G$ and $D_2\rightarrow D'$, which further implies that $D'$ is $(B,u)$-reachable from $D_2$, and hence also from $D$. Moreover, $u$ is not under attack in $D'[B,u]$. It can be seen using Definition~\ref{def:cap} that $cap(D'[v,B])=cap(D_2[v,B])-1$ and further that $cap(D'[B,u])=cap(D_2[B,u])\geq cap(Q)$.

Next, we consider the case when there exists $v'\in\kappa(B^\star,v)$, such that either $ua(v',B^\star)=False$ or $cap(D_2[v',B^\star])>0$.
As $v'\in\kappa(B^\star,v)$, $B^\star\in\beta(v,B)$, and $v\in\kappa(B,u)$, we have $d(v',B^\star)<d(B^\star,v)<d(v,B)<d(B,u)$.
By the induction hypothesis, there exists a $(v',B^\star)$-independent set $S'$ of $G$ that is strongly accessible from $D_2[v',B^\star]$ such that $v'$ is not under attack in $S'$ and $cap(S')\geq cap(D_2[v',B^\star])$.
Then there exists an independent set $D_3=S'\cup D_2[B^\star,v']$ of $G$ that is $(v',B^\star)$-reachable from $D_2$. Since $V(G[v',B^\star])\subseteq V(G[B,u])$, it follows from Observation~\ref{obs:reachsubset} that $D_3$ is $(B,u)$-reachable from $D_2$, and hence also from $D$. Since $D_3[v',B^\star]=S'$, we have that $v'$ is not under attack in $D_3[v',B^\star]$. Note that $cap(D_3[v',B^\star])=cap(S')\geq cap(D_2[v',B^\star])$, which implies by Definition~\ref{def:cap} that $cap(D_3[B^\star,v])\geq cap(D_2[B^\star,v])$ and further that $cap(D_3[B,u])\geq cap(D_2[B,u])\geq cap(Q)$.
We now define $D'=(D_3\setminus\{v\})\cup\{v'\}$. It is clear that $D'$ is an independent set of $G$ such that $D_3\rightarrow D'$, which also means that $D'$ is $(B,u)$-reachable from $D_3$, and hence also from $D$. Also, it is easy to see that $u$ is not under attack in $D'[B,u]$. As before, it can be seen using Definition~\ref{def:cap} that $cap(D'[v,B])=cap(D_3[v,B])-1$ and that $cap(D'[B,u])=cap(D_3[B,u])\geq cap(Q)$.

Thus in both cases, for any choice of $R$, we have an independent set $D'=D'[B,u]\cup R$ of $G$ that is $(B,u)$-reachable from $D=D[B,u]\cup R=Q\cup R$ such that $u$ is not under attack in $D'[B,u]$ and $cap(D'[B,u])\geq cap(Q)$.
So $Q'=D'[B,u]$ is a $(B,u)$-independent set of $G$ that is strongly accessible from $Q$ such that $u$ is not under attack in $Q'$ and $cap(Q')\geq cap(Q)$. This completes the proof.
\end{proof}

\begin{lemma}\label{lem:inserttoken}
Let $u\in V_{cut}(G)$ and $B\in\mathcal{B}_u(G)$. Let $D$ be an independent set of $G$ such that $u \in D$. If $cap(D[B,u])>0$, then there exists an independent set $D'$ of $G$ such that $D'$ is reachable from $D$, $u\notin D'$, $\interior{D'[u,B]}=\interior{D[u,B]}$ and $cap(D'[B,u])\geq cap(D[B,u])-1$.
\end{lemma}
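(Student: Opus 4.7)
The plan is to produce $D'$ by sliding the token on $u$ to some vertex of $B\setminus\{u\}$, possibly after first rearranging tokens sitting deeper inside $G[B,u]$. I would start with the easy structural observation that $u\in D$ together with $D$ being independent and $B$ being a clique forces $B\cap D=\{u\}$, so $|B\cap\interior{D[B,u]}|=0$ and Definition~\ref{def:cap} simplifies to $cap(D[B,u])=\sum_{v\in\kappa(B,u)} cap(D[v,B])+ua(B,u)$. The argument then splits into two cases depending on whether $B\setminus(\kappa(B,u)\cup\{u\})$ is empty.

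In the easy case, I pick any non-cut-vertex $w\in B\setminus(\kappa(B,u)\cup\{u\})$ and set $D'=(D\setminus\{u\})\cup\{w\}$. Every neighbour of $w$ in $G$ lies in $B$ (because $w\notin V_{cut}(G)$), and $B\cap D=\{u\}$, so this is a valid single-slide move and $D\rightarrow D'$. Using $V(G[u,B])\cap B=\{u\}$ and $V(G[v,B])\cap B=\{v\}$ for each $v\in\kappa(B,u)$, I get $\interior{D'[u,B]}=\interior{D[u,B]}$ and $D'[v,B]=D[v,B]$ for every $v\in\kappa(B,u)$; then the only change in the formula for $cap(D'[B,u])$ relative to $cap(D[B,u])$ is that $|B\cap\interior{D'[B,u]}|=1$ (containing $w$), so $cap(D'[B,u])=cap(D[B,u])-1$.

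In the harder case, $B=\kappa(B,u)\cup\{u\}$, so every vertex of $B\setminus\{u\}$ is a cut-vertex and no direct slide works. The strategy is to locate some $v^*\in\kappa(B,u)$ to which Lemma~\ref{lem:uafalse} applies with $p=(v^*,B)$. The hypothesis $cap(D[B,u])>0$ and the simplified formula above give $\sum_{v\in\kappa(B,u)} cap(D[v,B])+ua(B,u)>0$; under the standing assumption $B=\kappa(B,u)\cup\{u\}$, Definition~\ref{def:ua} reduces to $ua(B,u)=\neg\bigwedge_{v\in\kappa(B,u)}ua(v,B)$, so $ua(B,u)=True$ produces some $v^*$ with $ua(v^*,B)=False$, whereas $ua(B,u)=False$ makes $\bigwedge_{v}ua(v,B)=True$ and forces some $v^*$ with $cap(D[v^*,B])>0$. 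Either way, Lemma~\ref{lem:uafalse} yields a $(v^*,B)$-independent set $Q'$ strongly accessible from $D[v^*,B]$ with $v^*$ unattacked in $Q'$ and $cap(Q')\geq cap(D[v^*,B])$. Applying strong accessibility with the compatible partner $R=D[B,v^*]$ (compatible because $v^*\notin D$) produces an independent set $D_1:=Q'\cup D[B,v^*]$ reachable from $D$; the only neighbour of $v^*$ in $D_1$ is $u$ (no attack from inside $V(G[v^*,B])$ by the safety of $Q'$, and inside $B$ only $u$ belongs to $D_1$), so I can slide $u$ to $v^*$ to obtain $D':=(D_1\setminus\{u\})\cup\{v^*\}$.

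The bookkeeping that yields $cap(D'[B,u])\geq cap(D[B,u])-1$ in the harder case is what I expect to be the main obstacle. The only subtree that changes is $D'[v^*,B]=Q'\cup\{v^*\}$, and a subtle identity $cap(Q'\cup\{v^*\})=cap(Q')$ is needed; this holds because $v^*$, being the base of $(v^*,B)$, contributes neither to $\interior{\cdot}$ nor to any recursive subterm $C[v,B']$ for $v\in\kappa(B',v^*)$ and $B'\in\beta(v^*,B)$. Combined with $|B\cap\interior{D'[B,u]}|=1$ (containing only $v^*$) and $D'[v,B]=D[v,B]$ for $v\in\kappa(B,u)\setminus\{v^*\}$ (because the edits live in $V(G[v^*,B])$, which is disjoint from $V(G[v,B])$ when $v\neq v^*$), Definition~\ref{def:cap} yields $cap(D'[B,u])=cap(Q')+\sum_{v\neq v^*}cap(D[v,B])+ua(B,u)-1\geq cap(D[B,u])-1$. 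The slide itself and the identity $\interior{D'[u,B]}=\interior{D[u,B]}$ are routine in both cases because all reconfigurations happen strictly inside $V(G[B,u])$, which meets $V(G[u,B])$ only at $u$.
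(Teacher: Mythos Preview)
Your argument is correct and follows essentially the same approach as the paper: use Lemma~\ref{lem:uafalse} to free up a cut-vertex of $B$ when necessary, and otherwise slide the token on $u$ directly to a non--cut-vertex of $B$. The only difference is the case split: the paper branches on whether some $v\in\kappa(B,u)$ satisfies $ua(v,B)=False$ or $cap(D[v,B])>0$ (and only when none does, deduces the existence of a non--cut-vertex in $B$), whereas you branch first on whether $B\setminus(\kappa(B,u)\cup\{u\})$ is nonempty and, when it is empty, deduce the existence of such a $v^*$ from $cap(D[B,u])>0$ and Definition~\ref{def:ua}; both decompositions cover all cases and invoke the same key lemma.
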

\begin{proof}
Since $u\in D$, we know that $u$ is not under attack in $D[B,u]$.

Let us first consider the case when for each $v\in\kappa(B,u)$, we have $ua(v,B)=True$ and $cap(D[v,B])=0$. Then as $cap(D[B,u])>0$, it follows from Definition~\ref{def:cap} that $ua(B,u)=True$, which further implies by Definition~\ref{def:ua} that there exists a vertex $v \in B\setminus (\kappa(B,u)\cup\{u\})$.
We define $D'=(D\setminus \{u\})\cup\{v\}$. Note that $D'$ is an independent set of $G$ and that $D\rightarrow D'$, which means that $D'$ is reachable from $D$. Moreover, we have $\interior{D'[u,B]}=\interior{D[u,B]}$ and $u\notin D'$. It is easy to see using Definition~\ref{def:cap} that $cap(D'[B,u])=cap(D[B,u])-1$. So we are done in this case.

Next we consider the case when there exists $v\in\kappa(B,u)$ such that either $ua(v,B)=False$ or $cap(D[v,B])>0$. We now have by Lemma~\ref{lem:uafalse} that there exists a $(v,B)$-independent set $S$ of $G$ that is strongly-accessible from $D[v,B]$ such that $v$ is not under attack in $S$ and $cap(S)\geq cap(D[v,B])$. Thus $D_1=S\cup D[B,v]$ is an independent set of $G$ that is reachable from $D$. Since $D_1[v,B]=S$, we have that $cap(D_1[v,B])=cap(S)\geq cap(D[v,B])$. It follows from Definition~\ref{def:cap} that $cap(D_1[B,u])\geq cap(D[B,u])$. Further it is clear that $\interior{D_1[u,B]}=\interior{D[u,B]}$. We now define $D'=(D_1\setminus\{u\})\cup\{v\}$. It is easy to see that $D'$ is an independent set of $G$ (recall that $v$ is not under attack in $D_1[v,B]=S$) such that $D_1\rightarrow D'$, which implies that $D'$ is reachable from $D$, $u\notin D'$, and that $\interior{D'[u,B]}=\interior{D_1[u,B]}=\interior{D[u,B]}$. Finally, it can be seen using Definition~\ref{def:cap} that $cap(D'[B,u])=cap(D_1[B,u])-1\geq cap(D[B,u])-1$.
\end{proof}

\begin{lemma}\label{lemma:blockshift}
Let $u\in V_{cut}(G)$ and $B\in\mathcal{B}_u(G)$. Let $D$ be an independent set of $G$ such that $u\notin D$.
If there does not exist distinct $B',B''\in\mathcal{B}_u(G)$ such that $cap(D[B',u])=cap(D[B'',u])=0$ and $ua(B',u)=ua(B'',u)=True$, then there exists a $(u,B)$-independent set $Q$ of $G$ that is $\gamma$-accessible from $D[u,B]$ such that $cap(Q)=\sum\limits_{\forall B'\in\beta(u,B)}\Big(cap(D[B', u])-ua(B', u)\Big)+ua(u,B)$, where $\gamma=\min\{cap(D[B,u]),1\}$.
\end{lemma}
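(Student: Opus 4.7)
The plan is to split into cases based on whether $\beta(u,B)$ contains a block that triggers the first branch of Definition~\ref{def:cap}, i.e.\ a block $B^\star$ with $cap(D[B^\star,u]) = 0$ and $ua(B^\star,u) = True$. The hypothesis of the lemma forces at most one such ``bad'' block in $\mathcal{B}_u(G)$, so at most one sits in $\beta(u,B)$; and if one does, then $B$ itself cannot be bad, so either $ua(B,u) = False$ or $cap(D[B,u]) > 0$. If $\beta(u,B)$ contains no bad block, then the ``otherwise'' branch of Definition~\ref{def:cap} evaluated at $D[u,B]$ already yields the target sum, so I would take $Q = D[u,B]$ and invoke Observation~\ref{obs:access}(iii) for $\gamma$-accessibility.

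In the nontrivial case let $B^\star$ be the unique bad block in $\beta(u,B)$. Then $cap(D[u,B]) = 0$, $ua(u,B) = True$ by Definition~\ref{def:ua}, and by Lemma~\ref{lem:ua}(ii) some token $v^\star \in B^\star \cap D$ sits adjacent to $u$. Plugging in these values simplifies the target to $\sum_{B' \in \beta(u,B) \setminus \{B^\star\}}(cap(D[B',u]) - ua(B',u))$, a sum of nonnegative terms. If this sum is zero, $Q = D[u,B]$ already achieves $cap(Q) = 0 = \text{target}$; otherwise some $B^\sharp \in \beta(u,B) \setminus \{B^\star\}$ must satisfy $cap(D[B^\sharp,u]) \geq ua(B^\sharp,u) + 1$, furnishing a block with surplus room to absorb one extra token.

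In this last subcase, I would define $Q$ to be $D[u,B]$ with the token $v^\star$ evicted from $B^\star$ and a new token planted at a suitable vertex $w \in B^\sharp$; the intended effect is to push $cap(Q[B^\star,u])$ up to $1$ (so $B^\star$ no longer triggers the degenerate branch) and pull $cap(Q[B^\sharp,u])$ down by $1$ (without turning $B^\sharp$ bad, thanks to the surplus), whereupon a direct application of Definition~\ref{def:cap} gives $cap(Q) = \text{target}$. To realise $Q$ as a reachability witness, for any compatible $R$ with $cap(R) \geq \gamma$ I would first apply Lemma~\ref{lem:uafalse} to every good $B' \in \beta(u,B) \setminus \{B^\star\}$ and to $R$ itself to clear $u$ of attacks from all sides except $B^\star$---the application to $R$ is justified because either $\gamma = 1$ gives $cap(R) \geq 1$, or $\gamma = 0$ forces $cap(D[B,u]) = 0$ and hence $ua(B,u) = False$ since $B$ is not bad---stitch the block-wise reconfigurations together via Lemma~\ref{lem:reconfig}, slide $v^\star$ onto $u$, push it into $B^\sharp$ via Lemma~\ref{lem:inserttoken}, and finally undo the auxiliary transformations in all blocks other than $B^\star$ and $B^\sharp$ using strong accessibility.

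The main obstacle I anticipate is ensuring that the two capacity shifts come out to exactly $+1$ in $B^\star$ and $-1$ in $B^\sharp$ rather than mere inequalities, since Lemma~\ref{lem:inserttoken} only promises $cap(D'[B^\sharp,u]) \geq cap(D[B^\sharp,u]) - 1$ and the removal of $v^\star$ can in principle push $cap(Q[B^\star,u])$ strictly above $1$. Matching the target as an equality will therefore require either a careful choice of the landing vertex $w \in B^\sharp$ together with a finer analysis of Definition~\ref{def:cap} in the subtrees below, or an argument that residual slack can always be absorbed back; the delicate edge case being $B^\sharp = \kappa(B^\sharp,u) \cup \{u\}$, in which $w$ is forced to sit on a cut-vertex and one must reason about the child substructures below $w$ to certify that the slide is legal and that the capacity drop is exact.
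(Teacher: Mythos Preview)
Your approach is essentially the same as the paper's: identify the unique bad block (the paper calls it $X$, you call it $B^\star$), find a surplus block (the paper calls it $B^\star$, you call it $B^\sharp$), clear all other blocks around $u$ via Lemma~\ref{lem:uafalse} and Lemma~\ref{lem:reconfig}, slide the attacking token onto $u$, push it into the surplus block via Lemma~\ref{lem:inserttoken}, and then revert the auxiliary blocks using strong accessibility. The justification you give for why Lemma~\ref{lem:uafalse} applies to $R$ (either $\gamma=1$ so $cap(R)\geq 1$, or $\gamma=0$ forces $ua(B,u)=False$) is exactly what the paper uses.

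On your two worries. First, removing $v^\star$ from the bad block raises the capacity there by \emph{exactly} $1$, not possibly more: the subtree $G[B^\star,u]$ is never touched during the clearing phase, so $\sum_{v\in\kappa(B^\star,u)}cap(\cdot[v,B^\star])$ is unchanged from its value in $D$ (which is $0$, since $cap(D[B^\star,u])=0$, $ua=1$, and $|B^\star\cap\overline{D}|=1$), and only the term $|B^\star\cap\overline{\cdot}|$ drops from $1$ to $0$. Second, the edge case $B^\sharp=\kappa(B^\sharp,u)\cup\{u\}$ is already handled inside Lemma~\ref{lem:inserttoken}, so you need not analyse it separately.

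Your remaining concern---that Lemma~\ref{lem:inserttoken} only guarantees $cap(D'[B^\sharp,u])\geq cap(D[B^\sharp,u])-1$, so the final $cap(Q)$ might overshoot the target---is legitimate, and in fact the paper's own proof only establishes $cap(Q)\geq\sum_{B'\in\beta(u,B)}(cap(D[B',u])-ua(B',u))+ua(u,B)$, not equality. The equality in the lemma statement appears to be an overclaim; every subsequent use of the lemma (in the proof of Lemma~\ref{lem:evolution}) only needs the inequality $cap(Q)\geq y^{(i)}[u,B]$, for which $\geq$ suffices. So you should simply aim for $\geq$ and not expend effort trying to force equality.
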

\begin{proof}
If $cap(D[u,B])=\sum\limits_{\forall B'\in\beta(u,B)}\Big(cap(D[B', u])-ua(B', u)\Big)+ua(u,B)$, then we can simply let $Q=D[u,B]$, and we are done (recall Observation~\ref{obs:access}). So let us assume that that is not the case.

Suppose first that $cap(D[B,u])=0$ and $ua(B,u)=True$. From the assumption in the statement of the lemma, we know that there does not exist distinct $B',B''\in\mathcal{B}_u(G)$ such that $cap(D[B',u])=cap(D[B'',u])=0$ and $ua(B',u)=ua(B'',u)=True$. This means that for each $B'\in\beta(u,B)$, either $cap(D[B',u])>0$ or $ua(B',u)=False$. Then we have from Definition~\ref{def:cap} that $cap(D[u,B])=\sum\limits_{\forall B'\in\beta(u,B)}\Big(cap(D[B', u])-ua(B', u)\Big)+ua(u,B)$, which contradicts our assumption above. So we assume that either $cap(D[B,u])>0$ or $ua(B,u)=False$.

As $cap(D[u,B])\neq\sum\limits_{\forall B'\in\beta(u,B)}\Big(cap(D[B', u])-ua(B', u)\Big)+ua(u,B)$, we have from Definition~\ref{def:cap} that there exists $X\in\beta(u,B)$ such that $cap(D[X,u])=0$ and $ua(X,u)=True$, and also that $cap(D[u,B])=0$. Note that from the assumption in the statement of the lemma, this means that for each $B'\in\beta(u,X)$, we have either $cap(D[B',u])>0$ or $ua(B',u)=False$. It now follows from Definition~\ref{def:ua} that $\sum\limits_{\forall B'\in\beta(u,B)}\Big(cap(D[B', u])-ua(B', u)\Big)+ua(u,B)\geq 0$. As $cap(D[u,B])=0$ and $cap(D[u,B])\neq\sum\limits_{\forall B'\in\beta(u,B)}\Big(cap(D[B', u])-ua(B', u)\Big)+ua(u,B)$, we can conclude that $\sum\limits_{\forall B'\in\beta(u,B)}\Big(cap(D[B', u])-ua(B', u)\Big)+ua(u,B)>0$. Then it follows from Definition~\ref{def:cap} that there exists $B^\star\in\beta(u,B)$ such that either $cap(D[B^\star,u])>1\vee (cap(D[B^\star,u])=1\wedge ua(B^\star,u)=False)$. By Lemma~\ref{lem:ua}\ref{it:uatrue} we know that $u$ is under attack in $D[X,u]$.

Now consider any $(B,u)$-independent set $R$ of $G$ such that $u\notin R$ and $cap(R)=\gamma$.
Let $D'=D[u,B]\cup R$. Note that we have either $cap(D'[B,u])=cap(R)=\gamma>0$ or $ua(B,u)=False$ (as $\gamma=0$ implies that $cap(D[B,u])=0$), and that $D'[B',u]=D[B',u]$ for all $B'\in\beta(u,B)$. Thus we can conclude that for each $B'\in\beta(u,X)$, either $cap(D'[B',u])>0$ or $ua(B',u)=False$.
Let $\beta(u,X)=\{B_1,B_2,\ldots,B_s\}$. Note that $B^\star\in\{B_1,B_2,\ldots,B_s\}$. We assume without loss of generality that $B_s=B^\star$. For each $i\in\{1,2,\ldots,s\}$, since we know that either $cap(D'[B_i,u])>0$ or $ua(B_i,u)=False$, we can use Lemma~\ref{lem:uafalse} to conclude that there exists a $(B_i,u)$-independent set $Q_i$ of $G$ that is strongly accessible from $D'[B_i,u]$ such that $u$ is not under attack in $Q_i$ and $cap(Q_i)\geq cap(D'[B_i,u])$. By applying Lemma~\ref{lem:reconfig} (setting $C=D'$, $\mathcal{A}=\beta(u,X)=\{B_1,B_2,\ldots,B_s\}$, and $Q_{B_i}=Q_i$ for each $i\in\{1,2,\ldots,s\}$), we know that $D_1=Q_1\cup Q_2\cup\cdots\cup Q_s\cup D'[X,u]$ is an independent set of $G$ that is reachable from $D'$. Clearly, we have $u\notin D_1$, $u$ is not under attack in $D_1[u,X]$, and that $cap(D_1[B^\star,u])=cap(Q_s)\geq cap(D'[B_s,u])=cap(D'[B^\star,u])$. Define $D_2=(D_1\setminus X)\cup\{u\}$. Notice that $D_1\rightarrow D_2$ (recall that $u$ is under attack in $D_1[X,u]=D'[X,u]$), and therefore $D_2$ is reachable from $D'$. As $\interior{D_2[B^\star,u]}=\interior{D_1[B^\star,u]}$, we have by Definition~\ref{def:cap} that $cap(D_2[B^\star,u])=cap(D_1[B^\star,u])\geq cap(D'[B^\star,u])>0$. Moreover, we have from Definition~\ref{def:cap} that $cap(D_2[X,u])=cap(D_1[X,u])+1=cap(D'[X,u])+1=cap(D[X,u])+1$ ($=1$). Now using Lemma~\ref{lem:inserttoken}, we have that there exists an independent set $D_3$ that is reachable from $D_2$, and hence also from $D'$, such that $u\notin D_3$, $\interior{D_3[u,B^\star]}=\interior{D_2[u,B^\star]}$, and $cap(D_3[B^\star,u])\geq cap(D_2[B^\star,u])-1=cap(D_1[B^\star,u])-1\geq cap(D'[B^\star,u])-1=cap(D[B^\star,u])-1$. Notice that for each $i\in\{1,2,\ldots,s-1\}$, $D[B_i,u]$ is strongly accessible from $D_3[B_i,u]=D_2[B_i,u]=D_1[B_i,u]=Q_i$. Applying Lemma~\ref{lem:reconfig} (setting $C=D_3$, $\mathcal{A}=\{B_1,B_2,\ldots,B_{s-1}\}$, and $Q_{B_i}=D[B_i,u]$ for each $i\in\{1,2,\ldots,s-1\}$), we get that $D_4=D[B_1,u]\cup D[B_2,u]\cup\cdots\cup D[B_{s-1},u]\cup D_3[B_s,u]\cup D_3[X,u]$ is an independent set of $G$ that is reachable from $D_3$, and hence also from $D'$. Define $Q=D_4[u,B]$. Notice that for each $B'\in\mathcal{B}_u(G)\setminus\{X,B^\star\}=\{B_1,B_2,\ldots,B_{s-1}\}$, we have that $D_4[B',u]=D'[B',u]$, which implies that either $cap(D_4[B',u])>0$ or $ua(B',u)=False$. Note that $u\notin D_4$. Since $B\notin\{X,B^\star\}$, we have $D_4[B,u]=R$. Thus $Q=D_4[u,B]$ is $\gamma$-accessible from $D'[u,B]=D[u,B]$. Since $D_4[X,u]=D_3[X,u]=D_2[X,u]$, we have that $cap(D_4[X,u])=cap(D[X,u])+1=1$. Similarly, as $D_4[B^\star,u]=D_3[B^\star,u]$, we also have $cap(D_4[B^\star,u])=cap(D_3[B^\star,u])\geq cap(D[B^\star,u])-1$. Since we know that if $cap(D[B^\star,u])=1$, then $ua(B^\star,u)=False$, we can conclude that either $cap(D_4[B^\star,u])>0$ or $ua(B^\star,u)=False$. Thus we can conclude that for each $B'\in\beta(u,B)$, either $cap(D_4[B',u])>0$ or $ua(B',u)=False$. Now we have from Definition~\ref{def:cap} that
\begin{eqnarray*}
cap(Q)=cap(D_4[u,B])&=&\sum\limits_{\forall B'\in\beta(u,B)}\Big(cap(D_4[B', u])-ua(B',u)\Big)+ua(u,B)\\
&=&\sum\limits_{\forall B'\in\beta(u,B)\setminus\{X,B^\star\}}\Big(cap(D_4[B', u])-ua(B',u)\Big)+ua(u,B)\\&&+cap(D_4[X,u])-ua(X,u)+cap(D_4[B^\star,u])-ua(B^\star,u)\\
&\geq&\sum\limits_{\forall B'\in\beta(u,B)\setminus\{X,B^\star\}}\Big(cap(D[B', u])-ua(B',u)\Big)+ua(u,B)\\&& + cap(D[X,u])+1-ua(X,u)+cap(D[B^\star,u])-1-ua(B^\star,u)\\
&=&\sum\limits_{\forall B'\in\beta(u,B)}\Big(cap(D[B', u])-ua(B',u)\Big)+ua(u,B)
\end{eqnarray*}
This completes the proof.
\end{proof}

The following observation is easy to verify.
\begin{observation}\label{obs:sum}
Let $u\in V_{cut}(G)$ and $B\in\mathcal{B}_u(G)$. For any independent set $C$ of $G$, we have $$\sum\limits_{\forall u'\in \kappa(B,u)} |\interior{C[u',B]}|=|\interior{C[B,u]}|-|B\cap\interior{C[B,u]}|$$ and, $$\sum\limits_{\forall B'\in \beta(u,B)} |\interior{C[B',u]}|=|\interior{C[u,B]}|$$
\end{observation}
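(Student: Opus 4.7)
The plan is to prove both identities by giving a disjoint vertex-set decomposition of $V(G[B,u])$ and $V(G[u,B])$ respectively, then intersecting with $C$ and dropping the base vertex $u$. Both identities should then fall out by additivity of cardinality.

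For the first identity, I would start by describing $V(G[B,u])$. Recall that $G[B,u]$ is obtained from $G$ by deleting all edges incident to $u$ whose other endpoint is not in $B$ and then taking the connected component of $u$. Thus every vertex of $G[B,u]$ lies either in $B$ itself or in some branch hanging off a cut-vertex $u' \in B \setminus \{u\}$ of $G$; more precisely, by Observation~\ref{obs:blockgraph} applied to connected subgraphs of $G[B,u] - B$, one has the disjoint decomposition
\[
V(G[B,u]) \;=\; B \;\cup\; \bigsqcup_{u' \in \kappa(B,u)} \bigl(V(G[u',B]) \setminus \{u'\}\bigr),
\]
where the union is disjoint because distinct branches $G[u',B]$ meet $B$ only at their respective bases $u'$. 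Intersecting with $C$ and removing $u$ (so as to pass to $\interior{C[B,u]}$) gives
\[
\interior{C[B,u]} \;=\; \bigl(B \cap \interior{C[B,u]}\bigr) \;\cup\; \bigsqcup_{u' \in \kappa(B,u)} \interior{C[u',B]},
\]
and taking cardinalities of the disjoint union yields exactly the first identity after rearranging.

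For the second identity, the argument is entirely analogous but one level ``higher'' in the block-tree. The graph $G[u,B]$ is obtained by deleting from $G$ all edges from $u$ into $B$ and keeping the component of $u$; its vertices are $u$ together with the branches rooted at $u$ going through the other blocks at $u$. Since $\beta(u,B) = \mathcal{B}_u(G[u,B]) = \mathcal{B}_u(G) \setminus \{B\}$, one obtains the disjoint decomposition
\[
V(G[u,B]) \;=\; \{u\} \;\cup\; \bigsqcup_{B' \in \beta(u,B)} \bigl(V(G[B',u]) \setminus \{u\}\bigr).
\]
Intersecting with $C$ and removing the single element $u$ gives $\interior{C[u,B]} = \bigsqcup_{B' \in \beta(u,B)} \interior{C[B',u]}$, from which the second identity follows immediately by cardinality.

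There is essentially no obstacle here beyond carefully justifying that the indicated unions are disjoint, which follows from the block-tree structure (two distinct blocks at a common cut-vertex share only that cut-vertex, and two branches hanging off distinct cut-vertices of a block $B$ are separated in $G[B,u]$ by $B$). Once that disjointness is in place, both identities are direct cardinality bookkeeping, which is why the authors state the result as an observation without proof.
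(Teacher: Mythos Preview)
Your proof is correct and is exactly the natural argument the paper has in mind; the observation is stated there without proof as ``easy to verify,'' and your disjoint vertex-set decompositions of $V(G[B,u])$ and $V(G[u,B])$ followed by intersection with $C$ and removal of the base vertex are precisely the intended verification.
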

\begin{lemma}\label{lem:ureachable}
Let $C_0,C_1,\ldots,C_k$, where $k\geq 0$, be independent sets of $G$ such that $C_0\rightarrow C_1\rightarrow\cdots\rightarrow C_k$, and $u\in V_{cut}(G)$. If $u\notin C_0\cup C_1\cup\cdots\cup C_k$, then
for each $B\in\mathcal{B}_u(G)$, we have that $C_k[B,u]$ is strongly accessible from $C_0[B,u]$.
\end{lemma}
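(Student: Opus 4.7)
The plan is to exploit the hypothesis $u\notin C_0\cup\cdots\cup C_k$ to show that every slide in the given sequence is localized to one ``side'' of the cut-vertex $u$, so that the $B$-side portion of the sequence can be freely coupled with any compatible $R$ on the other side.

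First I would observe that since neither endpoint of the slide $C_i\to C_{i+1}$ can be $u$, the sliding edge $v_1v_2$ must lie entirely on one side of $u$: the sets $V(G[B,u])\setminus\{u\}$ and $V(G[u,B])\setminus\{u\}$ sit in different connected components of $G-\{u\}$, and both endpoints $v_1,v_2$ avoid $u$. Consequently every move is either a ``$B$-side'' move, after which $C_{i+1}[u,B]=C_i[u,B]$ and $C_i[B,u]\to C_{i+1}[B,u]$, or a ``$(u,B)$-side'' move, after which $C_{i+1}[B,u]=C_i[B,u]$. The same cut-vertex argument also shows that no edge of $G$ joins a vertex of $V(G[B,u])\setminus\{u\}$ to a vertex of $V(G[u,B])\setminus\{u\}$.

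Now fix any $(u,B)$-independent set $R$ compatible with $C_0[B,u]$; compatibility is just $u\notin R$. Setting $D_i=C_i[B,u]\cup R$ for each $i$, I would verify that $D_0\to D_1\to\cdots\to D_k$ is a valid $(B,u)$-reachability sequence from $C_0[B,u]\cup R$ to $C_k[B,u]\cup R$ in $G$. Each $D_i$ is independent, since $C_i[B,u]$ and $R$ are each independent and no edge of $G$ crosses between them. Each consecutive pair $D_i,D_{i+1}$ is either equal (for a $(u,B)$-side move) or differs by a single slide along an edge contained in $V(G[B,u])$ (for a $B$-side move). Finally, for every $B'\in\mathcal{B}_u(G)\setminus\{B\}$, the restriction $D_i[B',u]$ equals $R\cap V(G[B',u])$ independently of $i$, so the sequence indeed keeps all blocks at $u$ other than $B$ frozen, as required for $(B,u)$-reachability. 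Since $R$ was arbitrary, this gives strong accessibility of $C_k[B,u]$ from $C_0[B,u]$.

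The main work here is purely combinatorial bookkeeping; no induction on $d(p)$ or appeal to Lemmas~\ref{lem:ua}--\ref{lemma:blockshift} is needed. The only subtlety I anticipate is making the ``two sides of $u$'' decomposition rigorous, which reduces to the standard block-graph facts that $V(G[B,u])\cap V(G[u,B])=\{u\}$ and that $u$ separates $V(G[B,u])\setminus\{u\}$ from $V(G[u,B])\setminus\{u\}$ in $G$.
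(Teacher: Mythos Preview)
Your proposal is correct and follows essentially the same approach as the paper: both localize each slide to one side of the cut-vertex $u$ (the paper via Observation~\ref{obs:blockgraph}), then replay the $B$-side moves on the sequence $D_i=C_i[B,u]\cup R$ to witness $(B,u)$-reachability for arbitrary compatible $R$. Your direct definition $D_i=C_i[B,u]\cup R$ is slightly cleaner than the paper's case-by-case inductive construction, but it unwinds to the same sets and the same verification.
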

\begin{proof}
Suppose that $C_{i-1}\neq C_i$ for some $i\in\{1,2,\ldots,k\}$. Since $C_{i-1}\rightarrow C_i$, we have that there exist $v_1,v_2\in V(G)$ such that $(C_{i-1}\setminus C_i)=\{v_1\}$, $(C_i\setminus C_{i-1})=\{v_2\}$ and $v_1v_2\in E(G)$. Since $u\notin C_0\cup C_1\cup\cdots\cup C_k$, we have that $u\notin\{v_1,v_2\}$, which implies that $G[\{v_1,v_2\}]$ is a connected subgraph of $G$ that does not contain $u$.
Now by Observation~\ref{obs:blockgraph}, we have that 
there exists $B'\in\mathcal{B}_u(G)$ such that
$v_1,v_2\in V(G[B',u])\setminus\{u\}$. Thus, we can conclude that if $C_{i-1}\neq C_i$ for some $i\in\{1,2,\ldots,k\}$, then there exists $B'\in\mathcal{B}_u(G)$ such that $(C_{i-1}\setminus C_i)\cup (C_i\setminus C_{i-1})\subseteq V(G[B',u])\setminus\{u\}$. In particular, we have $C_{i-1}[B',u]\setminus C_i[B',u]=\{v_1\}$, $C_i[B',u]\setminus C_{i-1}[B',u]=\{v_2\}$, and $C_{i-1}[u,B']=C_i[u,B']$.

Let $B\in\mathcal{B}_u(G)$. Let $R$ be any $(u,B)$-independent set of $G$ that is compatible with $C_0[B,u]$ (i.e. $u\notin R$). In order to prove the statement of the lemma, we define, for each $i\in\{0,1,\ldots,k\}$, an independent set $D_i$ such that $D_i[u,B]=R$, satisfying $C_0[B,u]\cup R=D_0\rightarrow D_1\rightarrow \cdots \rightarrow D_k= C_k[B,u]\cup R$. Let $D_0=C_0[B,u]\cup R$. For each $i\in\{1,2,\ldots,k\}$, we define $D_i$ as follows. If $C_i=C_{i-1}$, then we simply let $D_i=D_{i-1}$. Suppose that $C_i\neq C_{i-1}$. Then we know that there exists $B'\in\mathcal{B}_u(G)$ such that $(C_{i-1}\setminus C_i)\cup (C_i\setminus C_{i-1})\subseteq V(G[B',u])\setminus\{u\}$. 
If $B'=B$, then we let $D_i=C_i[B,u]\cup R$ (since $u\notin C_0\cup C_1\cup\cdots\cup C_k$, $D_i$ is an independent set of $G$).
Otherwise, we have $C_i[B,u]=C_{i-1}[B,u]$, and we let $D_i=D_{i-1}$. Therefore, for each $i\in\{1,2,\ldots,k\}$, we have that $D_i[B,u]=C_i[B,u]$ and $D_i[u,B]=D_{i-1}[u,B]=\cdots=D_0[u,B]=R$. This implies that $D_k=D_k[B,u]\cup D_k[u,B]=C_k[B,u]\cup R$. Now, it only remains to be shown that $D_{i-1}\rightarrow D_i$ for each $i\in\{1,2,\ldots,k\}$. If $C_i=C_{i-1}$, then $D_i=D_{i-1}$, and therefore we have $D_{i-1}\rightarrow D_i$. So suppose that $C_i\neq C_{i-1}$. Then there exists $B'\in\mathcal{B}_u(G)$ such that $(C_{i-1}\setminus C_i)\cup (C_i\setminus C_{i-1})\subseteq V(G[B',u])\setminus\{u\}$.
If $B'\neq B$, then we have $D_{i-1}=D_i$, and therefore $D_{i-1}\rightarrow D_i$, and we are again done.
So we can assume that $B'=B$. Recall that we have $D_i[B,u]=C_i[B,u]$, $D_{i-1}[B,u]=C_{i-1}[B,u]$, and $D_i[u,B]=D_{i-1}[u,B]$. Recall that $C_{i-1}[B,u]\setminus C_i[B,u]=\{v_1\}$, $C_i[B,u]\setminus C_{i-1}[B,u]=\{v_2\}$, where $C_{i-1}\setminus C_i=\{v_1\}$, $C_i\setminus C_{i-1}=\{v_2\}$, and $v_1v_2\in E(G)$. We now have that $D_{i-1}[B,u]\setminus D_i[B,u]=\{v_1\}$, $D_i[B,u]\setminus D_{i-1}[B,u]=\{v_2\}$, and $D_i[u,B]=D_{i-1}[u,B]$. Thus, $D_{i-1}\setminus D_i=\{v_1\}$ and $D_i\setminus D_{i-1}=\{v_2\}$. Since $v_1v_2\in E(G)$, we now have that $D_{i-1}\rightarrow D_i$. This implies that $D_0\rightarrow D_1\rightarrow \cdots \rightarrow D_k$, and we are done.
\end{proof}
\begin{corollary}\label{cor:ureachable}
Let $C_0,C_1,\ldots,C_k$, where $k\geq 0$, be independent sets of $G$ such that $C_0\rightarrow C_1\rightarrow\cdots\rightarrow C_k$, and $u\in V_{cut}(G)$. If $u\notin C_0\cup C_1\cup\cdots\cup C_k$, then:
\begin{enumerate}
\renewcommand{\theenumi}{(\roman{enumi})}
\renewcommand{\labelenumi}{(\roman{enumi})}
\item\label{it:cardinality} for each $B\in\mathcal{B}_u(G)$, we have $|\interior{C_0[B,u]}|=|\interior{C_k[B,u]}|$ and $|\interior{C_0[u,B]}|=|\interior{C_k[u,B]}|$, and
\item\label{it:blocks} 
for any $\mathcal{A}\subseteq\mathcal{B}_u(G)$, $\bigcup_{B\in\mathcal{A}} C_k[B,u]\cup\bigcup_{B\notin\mathcal{A}} C_0[B,u]$ is $(\mathcal{A},u)$-reachable from $C_0$.
\end{enumerate}
\end{corollary}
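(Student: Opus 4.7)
The plan is to derive both parts as essentially immediate consequences of Lemma~\ref{lem:ureachable}, which already establishes that $C_k[B,u]$ is strongly accessible from $C_0[B,u]$ for every $B\in\mathcal{B}_u(G)$. Since the corollary has the same hypotheses as that lemma, I would begin by invoking it to get these strong accessibility relations as the starting point for both items.

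For part~\ref{it:cardinality}, fix $B\in\mathcal{B}_u(G)$. Strong accessibility of $C_k[B,u]$ from $C_0[B,u]$ implies $0$-accessibility (as explicitly noted in the excerpt after Definition~\ref{def:strongacc}), and then Observation~\ref{obs:access}\ref{it:cardinality} gives $|\interior{C_0[B,u]}|=|\interior{C_k[B,u]}|$. The second equality, $|\interior{C_0[u,B]}|=|\interior{C_k[u,B]}|$, follows by applying the same argument to every $B'\in\beta(u,B)$ and summing: by Observation~\ref{obs:sum},
\[
|\interior{C_0[u,B]}|=\sum_{B'\in\beta(u,B)}|\interior{C_0[B',u]}|=\sum_{B'\in\beta(u,B)}|\interior{C_k[B',u]}|=|\interior{C_k[u,B]}|.
\]

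For part~\ref{it:blocks}, I would feed the strong accessibility relations from Lemma~\ref{lem:ureachable} directly into Lemma~\ref{lem:reconfig}. Concretely, since $u\notin C_0$, I set $C=C_0$, keep the same $\mathcal{A}$, and for each $B\in\mathcal{A}$ define $Q_B:=C_k[B,u]$, which by Lemma~\ref{lem:ureachable} is strongly accessible from $C_0[B,u]=C[B,u]$. The conclusion of Lemma~\ref{lem:reconfig} is then verbatim the statement of part~\ref{it:blocks}: the set $\bigcup_{B\in\mathcal{A}}C_k[B,u]\cup\bigcup_{B\notin\mathcal{A}}C_0[B,u]$ is an independent set of $G$ that is $(\mathcal{A},u)$-reachable from $C_0$.

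There is no real obstacle here; the corollary is a repackaging of Lemma~\ref{lem:ureachable} via Lemma~\ref{lem:reconfig} and Observation~\ref{obs:access}. The only small care needed is the bookkeeping for the second cardinality equality in~\ref{it:cardinality}, where one must sum the block-wise equalities over $B'\in\beta(u,B)$ using Observation~\ref{obs:sum}.
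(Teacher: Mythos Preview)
Your proposal is correct and follows essentially the same approach as the paper: both parts are derived from Lemma~\ref{lem:ureachable} together with Observation~\ref{obs:access} and Lemma~\ref{lem:reconfig}. The only cosmetic difference is that for the second equality in~\ref{it:cardinality} the paper uses the global identity $|C_0|=|C_k|$ and subtracts, whereas you sum the block-wise equalities over $\beta(u,B)$ via Observation~\ref{obs:sum}; both are valid one-line arguments.
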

\begin{proof}
We prove the two statements separately.
\medskip

\noindent\textit{Proof of~\ref{it:cardinality}}:
Let $B\in\mathcal{B}_u(G)$. Since we have by Lemma~\ref{lem:ureachable} that $C_k[B,u]$ is strongly accessible from $C_0[B,u]$, we have by Observation~\ref{obs:access} that $|C_k[B,u]|=|C_0[B,u]|$. As $u\notin C_0\cup C_1\cup\cdots\cup C_k$, we now have $|\interior{C_0[B,u]}|=|C_0[B,u]|=|C_k[B,u]|=|\interior{C_k[B,u]}|$. Since $|C_0|=|C_k|$, it now follows that $|\interior{C_0[u,B]}|=|C_0|-|\interior{C_0[B,u]}|=|C_k|-|\interior{C_k[B,u]}|=|\interior{C_k[u,B]}|$.
\medskip

\noindent\textit{Proof of~\ref{it:blocks}:}
Let $\mathcal{A}\subseteq\mathcal{B}_u(G)$. For each $B\in\mathcal{A}$, define $Q_B=C_k[B,u]$. It follows from Lemma~\ref{lem:ureachable} that for each $B\in\mathcal{A}$, $Q_B$ is strongly accessible from $C_0[B,u]$. We now have from Lemma~\ref{lem:reconfig} that $\bigcup_{B\in\mathcal{A}} C_k[B,u]\cup\bigcup_{B\notin\mathcal{A}} C_0[B,u]$ is $(\mathcal{A},u)$-reachable from $C_0$.
\end{proof}
\section{Potentials: the global perspective}\label{sec:pot}
\begin{definition}\label{def:pot}
For a connected block graph $G$, and $p\in\mathcal{P}_G$, we define the \emph{potential} of $p$ with respect to an independent set $C$ of $G$, denoted by $pot_G(C,p)=\max\{cap(C'[p])+|\interior{C'[p]}|-|\interior{C[p]}|\colon C'$ is an independent set of $G$ that is reachable from $C\}$.    
\end{definition}
When the graph $G$ is clear from the context, we sometimes abbreviate $pot_G(C,p)$ to just $pot(C,p)$.
We claim that the procedure {\sc Compute-potentials}, listed as Algorithm~\ref{alg:cap}, when given a connected block graph $G$ and an independent set $C$ of it, computes the value of $pot_G(C,p)$ for each $p\in\mathcal{P}_G$. 

\begin{algorithm}[t]
  \SetAlgoLined
  \DontPrintSemicolon
  \SetNoFillComment
  \SetCommentSty{}
  \SetKwProg{myproc}{Procedure}{}{}
  \SetKwInOut{Input}{Input}
  \SetKwInOut{Output}{Output}
  
  \myproc{\sc Compute-potentials ($G,C$)}{
    \Input{A block graph $G$ and an independent set $C$ of $G$}
    \Output{$y[p]\in\mathbb{N}$, $\forall p\in\mathcal{P}_G$}
    \For{$p\in\mathcal{P}_G$}{
        $y[p]\gets 0$\\
    }
    $updated\gets True$\\
    \While{$updated$}{
        \label{line:while}
        $updated\gets False$\\
        \For{$p\in\mathcal{P}_G$}{\label{line:for}
            $y'\gets y[p]$\\
            \If{$p=(B,u)$, for some $u\in V_{cut}(G)$ and $B\in\mathcal{B}_u(G)$}{
                $y'\gets\sum\limits_{\forall u'\in \kappa_G(B,u)}y[u',B]+ua_G(B,u)-|B \cap \interior{C[B,u]}|$\label{line:y'Bu}\\
            } 
            \ElseIf{$p=(u,B)$, for some $u\in V_{cut}(G)$, $B\in\mathcal{B}_u(G)$, and $\nexists B', B''\in\mathcal{B}_u(G)$ such that $y[B',u]=y[B'',u]=0$ and $ua_G(B',u)=ua_G(B'',u)=True$}{
                $y'\gets\sum\limits_{\forall B'\in\beta_G(u,B)}\Big(y[B', u]-ua_G(B', u)\Big)+ua_G(u,B)$\label{line:y'uB}\\
            }
            \If{$y[p]<y'$}{
                $y[p]\gets y'$\label{line:assign}\\
            	$updated\gets True$\label{line:updated}\\
                \textbf{break}\label{line:break}\\
            }
        }
    }
    \Return $y$
 }
\caption{The algorithm for computing potentials}
\label{alg:cap}
\end{algorithm}
\medskip

For the rest of this section, we assume that $G$ is a connected block graph and $C$ is an independent set of $G$. For $p\in\mathcal{P}$, let $\mathbf{x}[p]$ denote the final value of the variable $y[p]$ that is computed by the procedure {\sc Compute-potentials}$(G,C)$.
Our aim in this section will be to prove that for each $p\in\mathcal{P}$, $\mathbf{x}[p]=pot(C,p)$.
\medskip

Let us analyze Algorithm~\ref{alg:cap} in detail.
Suppose that the \textbf{while} loop starting on line~\ref{line:while} gets executed $t$ times in total during the execution of Algorithm~\ref{alg:cap}. For each $i\in\{1,2,\ldots,t\}$ and $p\in\mathcal{P}$, we let $y^{(i)}[p]$ denote the value of the variable $y[p]$ just before the $i$-th iteration of the \textbf{while} loop starts. Since it is clear that no variable $y[p]$, for $p\in\mathcal{P}$, is updated during the last iteration of the \textbf{while} loop, it follows from the definition of $\mathbf{x}[p]$ that for each $p\in\mathcal{P}$, $y^{(t)}[p]=\mathbf{x}[p]$.

It is easy to see that the algorithm never decreases the value of a variable $y[p]$, for any $p\in\mathcal{P}$, and that exactly one of the variables $y[p]$, for $p\in\mathcal{P}$, gets updated during every iteration of the \textbf{while} loop other than the last iteration. Thus we have the following two observations.

\begin{observation}\label{obs:ymonotone}
Let $p\in\mathcal{P}$. For $1\leq i<j\leq t$, $y^{(i)}[p]\leq y^{(j)}[p]$, and therefore, $\mathbf{x}[p]=y^{(t)}[p]\geq y^{(1)}[p]=0$.
\end{observation}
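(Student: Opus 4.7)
The plan is to read the monotonicity directly off the pseudocode of Algorithm~\ref{alg:cap}. First I would note that the initialization loop that precedes the \textbf{while} loop sets $y[p]\gets 0$ for every $p\in\mathcal{P}$, so $y^{(1)}[p]=0$ for all $p$; this takes care of the final inequality $\mathbf{x}[p]\geq 0$ once monotonicity is established.

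For the monotonicity itself, I would focus on a single iteration of the \textbf{while} loop. The only line in the body at which a variable $y[p]$ (for $p\in\mathcal{P}$) is actually reassigned is line~\ref{line:assign}, and that assignment is guarded by the condition $y[p]<y'$. Hence whenever $y[p]$ changes value during an iteration, it is replaced by a strictly larger number; otherwise it remains unchanged. Moreover, line~\ref{line:break} forces an immediate exit from the inner \textbf{for} loop after any update, so in each iteration of the \textbf{while} loop at most one $y[p]$ can change. These two facts together imply that for every $p\in\mathcal{P}$ and every $i\in\{1,2,\ldots,t-1\}$, $y^{(i+1)}[p]\geq y^{(i)}[p]$.

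A straightforward induction on $j-i$ then yields $y^{(i)}[p]\leq y^{(j)}[p]$ for all $1\leq i<j\leq t$. Combining with $y^{(1)}[p]=0$ gives $\mathbf{x}[p]=y^{(t)}[p]\geq y^{(1)}[p]=0$, as required.

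I do not anticipate any real obstacle here: the statement is a bookkeeping fact about the algorithm, essentially saying that Algorithm~\ref{alg:cap} implements a monotone fixed-point iteration. The more substantive work in this section will come later, when one must show that the terminal values $\mathbf{x}[p]$ actually equal $pot(C,p)$; for that, one will need both an achievability direction (some reachable $C'$ realises the value $\mathbf{x}[p]$ as $cap(C'[p])+|\interior{C'[p]}|-|\interior{C[p]}|$) and a matching upper bound (no reachable $C'$ exceeds $\mathbf{x}[p]$), with the latter likely proceeding by induction on the number of reconfiguration steps using the local lemmas of Section~\ref{sec:local}. But for the present observation, only the trivial monotonicity argument above is needed.
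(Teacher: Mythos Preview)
Your proposal is correct and matches the paper's own justification: the paper simply notes, immediately before stating the observation, that line~\ref{line:assign} is the only place any $y[p]$ is reassigned and is guarded by $y[p]<y'$, so no $y[p]$ ever decreases, and together with the initialization $y[p]\gets 0$ this yields the observation. There is nothing to add.
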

\begin{observation}\label{obs:uniquep}
For every $i\in\{2,3,\ldots,t\}$, there exists $p'\in\mathcal{P}$ such that $y^{(i)}[p']>y^{(i-1)}[p']$ and $y^{(i)}[p]=y^{(i-1)}[p]$ for every $p\in \mathcal{P}\setminus\{p'\}$.
\end{observation}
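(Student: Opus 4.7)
The plan is to argue directly from the control flow of Algorithm~\ref{alg:cap}. Fix $i \in \{2, 3, \ldots, t\}$ and focus on what happens during iteration $i-1$ of the outer \textbf{while} loop, which transforms the state $y^{(i-1)}$ into the state $y^{(i)}$.

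First I would establish that at most one variable is updated per iteration of the \textbf{while} loop. This follows from inspecting lines~\ref{line:assign}--\ref{line:break}: whenever the test $y[p] < y'$ succeeds and the assignment $y[p] \gets y'$ is executed, the algorithm immediately executes \textbf{break}, exiting the inner \textbf{for} loop. Hence within a single pass through the \textbf{for} loop, the assignment on line~\ref{line:assign} is executed at most once, so at most one entry of the array $y$ can change its value during iteration $i-1$.

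Next I would argue that at least one variable must be updated during iteration $i-1$. Since $i-1 \leq t-1 < t$, iteration $i-1$ is not the last iteration of the \textbf{while} loop. The loop condition is controlled by the Boolean variable $updated$, which is reset to $False$ at the start of each iteration and can only become $True$ via line~\ref{line:updated}, which is executed in lockstep with the assignment on line~\ref{line:assign}. If no variable were updated during iteration $i-1$, then $updated$ would still be $False$ at the end of that iteration and the \textbf{while} loop would terminate after exactly $i-1 < t$ iterations, contradicting the definition of $t$. Therefore the assignment on line~\ref{line:assign} must have fired during iteration $i-1$.

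Combining these two facts, exactly one variable $y[p']$ is updated during iteration $i-1$, giving $y^{(i)}[p'] > y^{(i-1)}[p']$ for that unique $p' \in \mathcal{P}$, while $y^{(i)}[p] = y^{(i-1)}[p]$ for every other $p$. There is no substantive obstacle here; the statement is really just a bookkeeping consequence of the \textbf{break} on line~\ref{line:break} together with the termination condition of the \textbf{while} loop, and the proof is essentially a direct appeal to the pseudocode.
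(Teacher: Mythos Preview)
Your argument is correct and follows exactly the reasoning the paper sketches in the sentence preceding the observation (``exactly one of the variables $y[p]$ \ldots\ gets updated during every iteration of the \textbf{while} loop other than the last iteration''); you have simply supplied the details that the paper leaves implicit.
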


Consider the last iteration of the \textbf{while} loop. Since this is the last iteration of the \textbf{while} loop, line~\ref{line:updated} is not executed during this iteration. This means that the \textbf{break} statement of line~\ref{line:break} is never executed during this iteration, which further implies that the \textbf{for} loop of line~\ref{line:for} gets executed for every $p\in\mathcal{P}$.

\begin{lemma}\label{lem:xnonnegative}
Let $u\in V_{cut}(G)$ and $B\in\mathcal{B}_u(G)$. Then
$$\sum\limits_{\forall u'\in\kappa(B,u)} \mathbf{x}[u',B]+ua(B,u)-|B \cap \interior{C[B,u]}|\geq 0$$
\end{lemma}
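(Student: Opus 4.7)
The plan is to reduce the inequality to a single nontrivial case and then invoke an auxiliary claim about the algorithm's output. Since $B$ is a clique and $C$ is an independent set, $|B\cap\interior{C[B,u]}|\leq 1$; combined with $\mathbf{x}[u',B]\geq 0$ (Observation~\ref{obs:ymonotone}) and $ua(B,u)\in\{0,1\}$, the desired inequality is immediate whenever $|B\cap\interior{C[B,u]}|=0$, and also when $|B\cap\interior{C[B,u]}|=1$ with $ua(B,u)=True$. All the work lies in the case $|B\cap\interior{C[B,u]}|=1$ and $ua(B,u)=False$. Let $v$ be the unique vertex of $B\cap\interior{C[B,u]}$; then $v\in C$ and $v\neq u$, and applying Definition~\ref{def:ua} to $ua(B,u)=False$ gives $B=\kappa(B,u)\cup\{u\}$ and $ua(v',B)=True$ for every $v'\in\kappa(B,u)$. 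In particular $v\in\kappa(B,u)$ and $ua(v,B)=True$, so it suffices to prove $\mathbf{x}[v,B]\geq 1$.

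To obtain $\mathbf{x}[v,B]\geq 1$, I will establish the auxiliary claim $\mathbf{x}[p]\geq cap(C[p])$ for every $p\in\mathcal{P}_G$. Once this is proved, Lemma~\ref{lem:ua}\ref{it:uatrue} applied at $(v,B)$ (note that since $v\in C$, $v$ is not under attack in $C[v,B]$, and $ua(v,B)=True$) yields $cap(C[v,B])\geq 1$, so $\mathbf{x}[v,B]\geq cap(C[v,B])\geq 1$, from which $\sum_{u'\in\kappa(B,u)}\mathbf{x}[u',B]\geq\mathbf{x}[v,B]\geq 1=|B\cap\interior{C[B,u]}|-ua(B,u)$, and the lemma follows.

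The auxiliary claim is proved by induction on $d(p)$. The key lever is that in the last iteration of the \textbf{while} loop, no variable is updated, so for every $p$ the value $y'$ computed from the final $y$ values satisfies $y'\leq\mathbf{x}[p]$. In the base case $d(p)=0$, Observation~\ref{obs:capcomplete} together with the formula on line~\ref{line:y'Bu} shows $y'=cap(C[p])$, giving $\mathbf{x}[p]\geq cap(C[p])$. In the inductive step, for $p=(B',u')$ the formula on line~\ref{line:y'Bu} combined with the inductive hypothesis applied to each $\mathbf{x}[v',B']$ with $v'\in\kappa(B',u')$ yields $\mathbf{x}[p]\geq y'\geq cap(C[p])$ by Definition~\ref{def:cap}. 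For $p=(u',B')$, if the first branch of Definition~\ref{def:cap} gives $cap(C[p])=0$ we are done; otherwise we are in the second branch, and must show that the condition guarding line~\ref{line:y'uB} actually holds in the last iteration so that its formula is applied.

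The main obstacle is exactly this last verification, because the algorithm's condition ranges over all of $\mathcal{B}_{u'}(G)$ while the corresponding condition in Definition~\ref{def:cap} ranges only over $\beta(u',B')=\mathcal{B}_{u'}(G)\setminus\{B'\}$. If the algorithm's condition failed in the last iteration, there would exist distinct $B'',B'''\in\mathcal{B}_{u'}(G)$ with $\mathbf{x}[B'',u']=\mathbf{x}[B''',u']=0$ and $ua(B'',u')=ua(B''',u')=True$; by induction and Lemma~\ref{lem:ua}\ref{it:capnonnegative} both satisfy $cap(C[\cdot])=0$, and at least one of them lies in $\beta(u',B')$, producing a block in $\beta(u',B')$ witnessing the first branch of Definition~\ref{def:cap} and so contradicting our being in the second branch. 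Hence the condition on line~\ref{line:y'uB} holds, the formula applies, and combined with the inductive hypothesis it yields $\mathbf{x}[p]\geq y'\geq cap(C[p])$, completing the claim and hence the lemma.
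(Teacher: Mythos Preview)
Your proof is correct, but it takes a different route from the paper. Both proofs reduce to the same nontrivial case (a unique $v\in B\cap\interior{C[B,u]}$ with $ua(B,u)=False$, hence $v\in\kappa(B,u)$ and $ua(v,B)=True$) and both then argue that $\mathbf{x}[v,B]\geq 1$. The paper does this directly: it shows that for every block $B'\in\mathcal{B}_v(G)$ with $ua(B',v)=True$, one has $\mathbf{x}[B',v]>0$ (using $v\in C$ so $B'\cap\interior{C[B',v]}=\emptyset$, and inspecting line~\ref{line:y'Bu} in the last iteration); hence the guard on line~\ref{line:y'uB} for $(v,B)$ is satisfied in the last iteration, and the resulting $y'$ is positive, contradicting $\mathbf{x}[v,B]=0$. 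Your approach instead proves the general auxiliary inequality $\mathbf{x}[p]\geq cap(C[p])$ for all $p$ by induction on $d(p)$, and then feeds in Lemma~\ref{lem:ua}\ref{it:uatrue} at $p=(v,B)$. Your inequality is essentially the special case $C'=C$ of the later Lemma~\ref{lem:forevery}, but your self-contained inductive argument avoids any circularity with Lemmas~\ref{lem:xBu}--\ref{lem:xuB}. The paper's argument is shorter and tailored to the single vertex $v$; yours yields a reusable intermediate result at the cost of more machinery.

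One small wording issue: in the $(u',B')$ step you assert that ``by induction \ldots\ both satisfy $cap(C[\cdot])=0$''. Induction only applies to a block in $\beta(u',B')$ (where the depth decreases); if one of $B'',B'''$ happens to equal $B'$, the inductive hypothesis is unavailable for it. This is harmless, since you only need the conclusion for the one lying in $\beta(u',B')$, and you correctly identify that at least one does. Just rephrase to apply induction only to that block.
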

\begin{proof}
Suppose for the sake of contradiction that for some $u\in V_{cut}(G)$ and $B\in\mathcal{B}_u(G)$, we have $\sum\limits_{\forall u'\in\kappa(B,u)} \mathbf{x}[u',B]+ua(B,u)-|B \cap \interior{C[B,u]}|<0$. Then
$\sum\limits_{\forall u'\in\kappa(B,u)} y^{(t)}[u',B]+ua(B,u)-|B \cap \interior{C[B,u]}|<0$. By Observation~\ref{obs:ymonotone}, we have that $y^{(t)}[u',B]\geq 0$ for each $u'\in\kappa(B,u)$, which implies that $|B\cap \interior{C[B,u]}|=1$, $ua(B,u)=False$, and $y^{(t)}[u',B]=0$ for each $u'\in\kappa(B,u)$. Let $B\cap \interior{C[B,u]}=\{v\}$. By Definition~\ref{def:ua}, we now have that $v\in\kappa(B,u)$ and $ua(v,B)=True$. Note also that $y^{(t)}[v,B]=0$. Let $\mathcal{B}_v^T=\{B'\in\mathcal{B}_v(G)\colon ua(B',v)=True\}$ (note that it is possible that $B\in\mathcal{B}_v^T$). Let $B'\in\mathcal{B}_v^T$. As $v\in C$, we have that $B'\cap \interior{C[B',v]}=\emptyset$.
Since the \textbf{for} loop gets executed for each $p\in\mathcal{P}$ during the last iteration of the \textbf{while} loop, we have that line~\ref{line:y'Bu} gets executed for $p=(B',v)$ during the last iteration of the \textbf{while} loop. The value of $y'$ that gets computed at this step is greater than 0 since $ua(B',v)=True$, $B'\cap \interior{C[B',v]}=\emptyset$, and $y^{(t)}[u',B']\geq 0$ for all $u'\in\kappa(B',v)$ by Observation~\ref{obs:ymonotone}. Since line~\ref{line:updated} is not executed subsequently, this means that the value of $y[B',v]$ at this point of time, which is $y^{(t)}[B',v]$, is greater than the value of $y'$, and is therefore greater than 0. Thus, we have that for each $B'\in\mathcal{B}_v^T$, $y^{(t)}[B',v]>0$.
This means that line~\ref{line:y'uB} gets executed for $p=(v,B)$ during the last iteration of the \textbf{while} loop. Since $ua(v,B)=True$, $y^{(t)}[B',v]\geq 0$ for each $B'\in\beta(v,B)$ (by Observation~\ref{obs:ymonotone}), and $y^{(t)}[B',v]-ua(B',v)\geq 0$ for each $B'\in\mathcal{B}_v^T$, we have that the value of the variable $y'$ computed during this execution of line~\ref{line:y'uB} is positive. As $y^{(t)}[v,B]=0$, line~\ref{line:updated} would have subsequently been executed, which contradicts the fact that this is the last iteration of the \textbf{while} loop.
\end{proof}

\begin{lemma}\label{lem:xBu}
For any $u\in V_{cut}(G)$ and $B\in\mathcal{B}_u(G)$, $$\mathbf{x}[B,u]=\sum\limits_{\forall u'\in\kappa(B,u)} \mathbf{x}[u',B]+ua(B,u)-|B \cap \interior{C[B,u]}|$$
\end{lemma}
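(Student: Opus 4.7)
The plan is to prove the equality by establishing two inequalities, $\mathbf{x}[B,u] \geq f$ and $\mathbf{x}[B,u] \leq f$, where
\[
f = \sum_{u' \in \kappa(B,u)} \mathbf{x}[u',B] + ua(B,u) - |B \cap \interior{C[B,u]}|.
\]
Note that by Lemma~\ref{lem:xnonnegative}, $f \geq 0$.

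For the lower bound $\mathbf{x}[B,u] \geq f$, I would exploit the fixed-point nature of termination. In the last (the $t$-th) iteration of the \textbf{while} loop, no variable gets updated, so the \textbf{break} on line~\ref{line:break} is never executed and the \textbf{for} loop of line~\ref{line:for} runs through every $p \in \mathcal{P}$. In particular, it runs for $p = (B,u)$, at which point line~\ref{line:y'Bu} computes $y' = \sum_{u' \in \kappa(B,u)} y^{(t)}[u',B] + ua(B,u) - |B \cap \interior{C[B,u]}|$, since none of the variables $y[u',B]$ have been touched during iteration $t$. As $y^{(t)}[u',B] = \mathbf{x}[u',B]$ for each $u' \in \kappa(B,u)$, this gives $y' = f$. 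Since the condition of line~\ref{line:assign} is not met during this iteration, $\mathbf{x}[B,u] = y^{(t)}[B,u] \geq y' = f$.

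For the upper bound $\mathbf{x}[B,u] \leq f$, I would show by induction on $i \in \{1,2,\ldots,t\}$ that $y^{(i)}[B,u] \leq f$. The base case $i=1$ is immediate because $y^{(1)}[B,u] = 0 \leq f$ by Lemma~\ref{lem:xnonnegative}. For the inductive step, observe that the variable $y[B,u]$ can only be modified via line~\ref{line:y'Bu} (the \textbf{ElseIf} branch does not apply when $p = (B,u)$). If during the $i$-th iteration $y[B,u]$ is not updated, then $y^{(i+1)}[B,u] = y^{(i)}[B,u] \leq f$ by the inductive hypothesis. If it is updated, then by Observation~\ref{obs:uniquep} the values of $y[u',B]$ for $u' \in \kappa(B,u)$ coincide with $y^{(i)}[u',B]$ at the moment line~\ref{line:y'Bu} is executed, so the new value is
\[
y^{(i+1)}[B,u] = \sum_{u' \in \kappa(B,u)} y^{(i)}[u',B] + ua(B,u) - |B \cap \interior{C[B,u]}|.
\]
By Observation~\ref{obs:ymonotone}, $y^{(i)}[u',B] \leq \mathbf{x}[u',B]$ for every $u' \in \kappa(B,u)$, hence $y^{(i+1)}[B,u] \leq f$. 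Setting $i = t$ yields $\mathbf{x}[B,u] \leq f$, and combined with the lower bound this gives the desired equality.

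The argument is essentially bookkeeping; the only subtlety is verifying that when line~\ref{line:y'Bu} fires for $p = (B,u)$ during a given iteration, none of the dependent variables $y[u',B]$ have already been altered during that same iteration, which is exactly what the single-update-per-iteration behaviour encoded in Observation~\ref{obs:uniquep} guarantees. No deeper structural insight about block graphs is needed here; the heavier reachability lemmas from Section~\ref{sec:local} will come into play only when relating $\mathbf{x}[p]$ to $pot(C,p)$.
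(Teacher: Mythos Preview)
Your proof is correct and follows essentially the same approach as the paper's. The lower bound argument is identical; for the upper bound the paper traces the last assignment to $y[B,u]$ (taking the minimal $i$ with $y^{(i)}[B,u]=\mathbf{x}[B,u]$ and bounding the value computed on line~\ref{line:y'Bu} at that moment), whereas you prove the slightly stronger invariant $y^{(i)}[B,u]\le f$ for all $i$ by induction---but the underlying mechanism (monotonicity plus the single-update-per-iteration structure) is the same in both.
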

\begin{proof}

Let $u\in V_{cut}(G)$ and $B\in\mathcal{B}_u(G)$. 
Note that line~\ref{line:updated} is not executed during the last iteration of the \textbf{while} loop, and that the \textbf{for} loop is executed for every $p\in\mathcal{P}$ in the last iteration. It follows that the value of $y'$ that is computed in line~\ref{line:y'Bu} when $p=(B,u)$, during the last iteration of the \textbf{while} loop, is at most the value of the variable $y[B,u]$ at that point, which is equal to $y^{(t)}[B,u]$. So we get $y^{(t)}[B,u]\geq\sum\limits_{\forall u'\in\kappa(B,u)} y^{(t)}[u',B]+ua(B,u)-|B \cap \interior{C[B,u]}|$. We thus have $\mathbf{x}[B,u]\geq\sum\limits_{\forall u'\in\kappa(B,u)} \mathbf{x}[u',B]+ua(B,u)-|B \cap \interior{C[B,u]}|$.
Let $i$ be the smallest integer in $\{1,2,\ldots,t\}$ such that $y^{(i)}[B,u]=\mathbf{x}[B,u]$. Clearly, $i$ exists since $y^{(t)}[B,u]=\mathbf{x}[B,u]$. Suppose that $i>1$. By our choice of $i$, we have $y^{(i-1)}[B,u]<y^{(i)}[B,u]$, which implies that the variable $y[B,u]$ is assigned the value $\mathbf{x}[B,u]$ during an execution of line~\ref{line:assign} of the $(i-1)$-th iteration of the \textbf{while} loop. It can be seen that the value of $y'$ at that point of time is equal to $\sum\limits_{\forall u'\in\kappa(B,u)} y^{(i-1)}[u',B]+ua(B,u)-|B \cap \interior{C[B,u]}|$, which is computed during the previous execution of line~\ref{line:y'Bu}, and it is this value that is assigned to $y[B,u]$.
We thus get

\begin{eqnarray*}
\mathbf{x}[B,u]=y^{(i)}[B,u]&=&\sum\limits_{\forall u'\in\kappa(B,u)} y^{(i-1)}[u',B]+ua(B,u)-|B \cap \interior{C[B,u]}|\\
&\leq&\sum\limits_{\forall u'\in\kappa(B,u)} y^{(t)}[u',B]+ua(B,u)-|B \cap \interior{C[B,u]}|\mbox{ (by Observation~\ref{obs:ymonotone})}\\
&=&\sum\limits_{\forall u'\in\kappa(B,u)} \mathbf{x}[u',B]+ua(B,u)-|B \cap \interior{C[B,u]}|
\end{eqnarray*}
and we are done. So we can assume that $i=1$. Then we have $\mathbf{x}[B,u]=y^{(1)}[B,u]=0$. We now have from Lemma~\ref{lem:xnonnegative} that $\mathbf{x}[B,u]=0\leq\sum\limits_{\forall u'\in\kappa(B,u)} \mathbf{x}[u',B]+ua(B,u)-|B \cap \interior{C[B,u]}|$, and we are done.
\end{proof}
\begin{lemma}\label{lem:xuB}
For any $u\in V_{cut}(G)$ and $B\in\mathcal{B}_u(G)$,\\
if $\exists B',B''\in\mathcal{B}_u(G)$ such that $\mathbf{x}[B',u]=\mathbf{x}[B'',u]=0$ and $ua(B',u)=ua(B'',u)=True$, then
$$\mathbf{x}[u,B]=0$$ and otherwise, $$\mathbf{x}[u,B]=\sum\limits_{\forall B'\in\beta(u,B)}\Big(\mathbf{x}[B', u]-ua(B', u)\Big)+ua(u,B)$$
\end{lemma}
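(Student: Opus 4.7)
The plan is to mirror the proof of Lemma~\ref{lem:xBu}, splitting into two cases according to whether the condition guarding line~\ref{line:y'uB} is satisfied with respect to the final $y$-values.

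For the first case, suppose there exist distinct $B',B''\in\mathcal{B}_u(G)$ with $\mathbf{x}[B',u]=\mathbf{x}[B'',u]=0$ and $ua(B',u)=ua(B'',u)=True$. By Observation~\ref{obs:ymonotone} together with the initialization, $y^{(i)}[B',u]=y^{(i)}[B'',u]=0$ at every iteration $i$, so the \textbf{elseif} condition fails for $p=(u,B)$ (in fact for every $p$ of the form $(u,\cdot)$) in every iteration; hence line~\ref{line:y'uB} is never executed for $(u,B)$. Since $(u,B)$ does not match line~\ref{line:y'Bu} either, the variable $y'$ remains equal to $y[u,B]$ throughout every iteration that processes $p=(u,B)$, so $y[u,B]$ is never updated. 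Thus $\mathbf{x}[u,B]=y^{(1)}[u,B]=0$, as claimed.

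For the second case, suppose no such pair exists. I would establish two inequalities. For the upper bound, I look at the last iteration: the \textbf{for} loop traverses every $p\in\mathcal{P}$ and the \textbf{elseif} guard holds for $p=(u,B)$ by the assumption on $\mathbf{x}$, so line~\ref{line:y'uB} assigns $y'=\sum_{B'\in\beta(u,B)}(\mathbf{x}[B',u]-ua(B',u))+ua(u,B)$; since no update happens in the last iteration, $\mathbf{x}[u,B]=y^{(t)}[u,B]\geq y'$. For the reverse direction, let $i$ be the smallest index with $y^{(i)}[u,B]=\mathbf{x}[u,B]$. When $i>1$, the assignment of $\mathbf{x}[u,B]$ to $y[u,B]$ must have been executed on line~\ref{line:assign} of iteration $i-1$, preceded by line~\ref{line:y'uB} (the only branch that can update $y[u,B]$ when $p=(u,B)$), yielding $\mathbf{x}[u,B]=\sum_{B'\in\beta(u,B)}(y^{(i-1)}[B',u]-ua(B',u))+ua(u,B)$, which by Observation~\ref{obs:ymonotone} is at most the corresponding sum with $\mathbf{x}[B',u]$ in place of $y^{(i-1)}[B',u]$.

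The main obstacle is the base case $i=1$, in which $\mathbf{x}[u,B]=0$ and the update mechanism does not directly supply a lower bound on the sum. Here my plan is to show that the sum is non-negative in this case, so combined with the upper bound above it must equal $0=\mathbf{x}[u,B]$. Since the termination assumption forbids two blocks $B'\in\mathcal{B}_u(G)$ with $\mathbf{x}[B',u]=0$ and $ua(B',u)=True$, at most one such $B^*$ can exist. If no such $B^*$ exists, every summand $\mathbf{x}[B',u]-ua(B',u)$ is non-negative and the whole sum is at least $ua(u,B)\geq 0$. If such a $B^*$ exists with $B^*=B$, then $B^*\notin\beta(u,B)$ and the same argument applies. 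If $B^*$ exists with $B^*\neq B$, then $B^*\in\beta(u,B)$ contributes $-1$ to the sum, but $ua(B^*,u)=True$ together with Definition~\ref{def:ua} forces $ua(u,B)=True$, so the sum is at least $-1+1=0$. In every subcase the sum is non-negative, which completes the argument.
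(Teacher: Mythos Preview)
Your proposal is correct and follows essentially the same approach as the paper. The only cosmetic difference is organizational: the paper establishes non-negativity of the sum $\sum_{B'\in\beta(u,B)}(\mathbf{x}[B',u]-ua(B',u))+ua(u,B)$ up front and then runs a contradiction argument (so that the case $i=1$ is ruled out because $y^{(t)}[u,B]>\text{sum}\geq 0$ forces $i>1$), whereas you treat $i=1$ as an explicit base case and prove non-negativity there; the underlying reasoning (at most one negative summand, compensated by $ua(u,B)=True$ via Definition~\ref{def:ua}) is identical.
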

\begin{proof}
The argument is similar to that in the previous proof.
Let $u\in V_{cut}(G)$ and $B\in\mathcal{B}_u(G)$.
Suppose that $\exists B',B''\in\mathcal{B}_u(G)$ such that $\mathbf{x}[B',u]=\mathbf{x}[B'',u]=0$ and $ua(B',u)=ua(B'',u)=True$. Then since
$y^{(t)}[p]=\mathbf{x}[p]$ for all $p\in\mathcal{P}$, we have that $y^{(t)}[B',u]=y^{(t)}[B'',u]=0$. By Observation~\ref{obs:ymonotone}, we now have that $y^{(i)}[B',u]=y^{(i)}[B'',u]=0$ for every $i\in\{1,2,\ldots,t\}$. This means that line~\ref{line:y'uB} does not get executed for the pair $(u,B)$ in any iteration of the \textbf{while} loop, which means that $\mathbf{x}[u,B]=y^{(t)}[u,B]=y^{(1)}[u,B]=0$. So we are done in this case. We shall now assume that $\nexists B',B''\in\mathcal{B}_u(G)$ such that $\mathbf{x}[B',u]=\mathbf{x}[B'',u]=0$ and $ua(B',u)=ua(B'',u)=True$; this means that $\nexists B',B''\in\mathcal{B}_u(G)$ such that $y^{(t)}[B',u]=y^{(t)}[B'',u]=0$ and $ua(B',u)=ua(B'',u)=True$. Then line~\ref{line:y'uB} gets executed for the pair $(u,B)$ during the last iteration of the \textbf{while} loop. By Observation~\ref{obs:ymonotone}, we have that if $y^{(t)}[B', u]-ua(B', u)<0$ for some $B'\in\beta(u,B)$, then $y^{(t)}[B',u]=0$ and $ua(B',u)=True$, which means that $y^{(t)}[B', u]-ua(B', u)=-1$. The fact that line~\ref{line:y'uB} got executed for the pair $(u,B)$ during the last iteration of the \textbf{while} loop then implies that there exists at most one block in $\beta(u,B)$, say $B''$, such that $y^{(t)}[B'', u]-ua(B'', u)<0$. If such a $B''$ exists, then $ua(u,B)=True$ (from Definition~\ref{def:ua} and the fact that $ua(B'',u)=True$). It follows that the value of the variable $y'$ computed during the execution of line~\ref{line:y'uB} for the pair $(u,B)$ during the last iteration of the \textbf{while} loop, which is equal to $\sum\limits_{\forall B'\in\beta(u,B)}\Big(y^{(t)}[B', u]-ua(B', u)\Big)+ua(u,B)$, is always non-negative.
Since line~\ref{line:updated} does not get executed during the last iteration of the \textbf{while} loop, it must be the case that this value is at most $y^{(t)}[u,B]$, and therefore we have $y^{(t)}[u,B]\geq \sum\limits_{\forall B'\in\beta(u,B)}\Big(y^{(t)}[B', u]-ua(B', u)\Big)+ua(u,B)$.
If $y^{(t)}[u,B]=\sum\limits_{\forall B'\in\beta(u,B)}\Big(y^{(t)}[B', u]-ua(B', u)\Big)+ua(u,B)$, we are done since $\mathbf{x}[p]=y^{(t)}[p]$ for all $p\in\mathcal{P}$. So we can assume that $y^{(t)}[u,B]>\sum\limits_{\forall B'\in\beta(u,B)}\Big(y^{(t)}[B', u]-ua(B', u)\Big)+ua(u,B)$. By our observations above, we then have that $y^{(t)}[u,B]>0$. Let $i$ be the smallest integer in $\{1,2,\ldots,t\}$ such that $y^{(t)}[u,B]=y^{(i)}[u,B]$. As $y^{(t)}[u,B]>0$, we have that $i>1$. From our choice of $i$, we have that lines~\ref{line:y'uB} and~\ref{line:assign} get executed for the pair $(u,B)$ during the $(i-1)$-th iteration of the \textbf{while} loop. Thus $y^{(t)}[u,B]=y^{(i)}[u,B]=\sum\limits_{\forall B'\in\beta(u,B)}\Big(y^{(i-1)}[B', u]-ua(B', u)\Big)+ua(u,B)$. Now by Observation~\ref{obs:ymonotone}, it follows that $y^{(t)}[u,B]\leq\sum\limits_{\forall B'\in\beta(u,B)}\Big(y^{(t)}[B', u]-ua(B', u)\Big)+ua(u,B)$, which is a contradiction.
\end{proof}

\subsection{Time complexity of the algorithm}

\begin{lemma}\label{lem:xupperbound}
For any $p\in\mathcal{P}$, $\mathbf{x}[p]\leq |\mathcal{B}(G[p])|$.
\end{lemma}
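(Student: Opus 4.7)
The plan is to prove the bound by induction on $d(p)$, using the recursive characterizations of $\mathbf{x}[p]$ provided by Lemma~\ref{lem:xBu} and Lemma~\ref{lem:xuB}, together with the elementary counting identities
\[
|\mathcal{B}(G[B,u])| = 1 + \sum_{u'\in\kappa(B,u)} |\mathcal{B}(G[u',B])|,\qquad
|\mathcal{B}(G[u,B])| = \sum_{B'\in\beta(u,B)} |\mathcal{B}(G[B',u])|,
\]
which hold because each block of $G[B,u]$ either equals $B$ itself or lies in a unique $G[u',B]$ with $u'\in\kappa(B,u)$, and each block of $G[u,B]$ lies in a unique $G[B',u]$ with $B'\in\beta(u,B)$.

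For the base case $d(p)=0$, the pair $p=(B,u)$ satisfies $\kappa(B,u)=\emptyset$ and $|\mathcal{B}(G[p])|=1$, so Lemma~\ref{lem:xBu} gives $\mathbf{x}[p]=ua(B,u)-|B\cap \interior{C[B,u]}|\leq 1 = |\mathcal{B}(G[p])|$. For the inductive step, I would split into the two cases in the standard way. When $p=(B,u)$, Lemma~\ref{lem:xBu} combined with the inductive hypothesis on each $(u',B)$ with $u'\in\kappa(B,u)$ (whose depth is strictly smaller) yields
\[
\mathbf{x}[B,u] \;\leq\; \sum_{u'\in\kappa(B,u)} |\mathcal{B}(G[u',B])| + ua(B,u) \;\leq\; \sum_{u'\in\kappa(B,u)} |\mathcal{B}(G[u',B])| + 1 \;=\; |\mathcal{B}(G[B,u])|,
\]
where we drop the non-positive term $-|B\cap \interior{C[B,u]}|$.

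When $p=(u,B)$, if the degenerate clause of Lemma~\ref{lem:xuB} applies then $\mathbf{x}[u,B]=0$ and the bound is trivial. Otherwise Lemma~\ref{lem:xuB} gives
\[
\mathbf{x}[u,B] \;=\; \sum_{B'\in\beta(u,B)}\bigl(\mathbf{x}[B',u]-ua(B',u)\bigr) + ua(u,B),
\]
and the inductive hypothesis bounds each $\mathbf{x}[B',u]$ by $|\mathcal{B}(G[B',u])|$. Hence
\[
\mathbf{x}[u,B] \;\leq\; |\mathcal{B}(G[u,B])| \;+\; \Bigl(ua(u,B)-\sum_{B'\in\beta(u,B)} ua(B',u)\Bigr).
\]
The only thing to verify is that the parenthesized quantity is non-positive, and this is forced by Definition~\ref{def:ua}: since $ua(u,B)=\bigvee_{B'\in\beta(u,B)} ua(B',u)$, if $ua(u,B)=True$ then at least one $ua(B',u)$ equals $True$, making the sum at least $1$; if $ua(u,B)=False$ then every $ua(B',u)=False$ and the difference is $0$. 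Either way the correction term is $\leq 0$, completing the induction. No part of the argument looks delicate — the main thing to keep straight is matching the two recursive cases of $\mathbf{x}$ with the two counting identities for $|\mathcal{B}(G[p])|$.
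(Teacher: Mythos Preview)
Your proof is correct and follows essentially the same approach as the paper's own proof: induction on $d(p)$, invoking Lemma~\ref{lem:xBu} and Lemma~\ref{lem:xuB} for the two cases, using the same block-counting identities, and handling the $ua$ correction term via Definition~\ref{def:ua}. The arguments are virtually identical up to cosmetic differences in how the inequalities are grouped.
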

\begin{proof}
We prove this by induction on $d(p)$. For the base case, observe that if $d(p)=0$, then $p=(B,u)$ for some $u\in V_{cut}(G)$ and $B\in\mathcal{B}_u(G)$, $ua(B,u)=True$, and $G[B,u]$ is a complete graph. Then we have by Lemma~\ref{lem:xBu} that $\mathbf{x}[p]=1-|B\cap\overline{C[B,u]}|\leq 1=|\mathcal{B}(G[B,u])|$. This proves the base case. For the inductive step, let us assume that for all $p'\in\mathcal{P}$ such that $d(p')<d(p)$, the statement of the lemma is true. Suppose that $p=(B,u)$ for some $u\in V_{cut}(G)$ and $B\in\mathcal{B}_u(G)$. Then from Lemma~\ref{lem:xBu}, we have that $\mathbf{x}[B,u]=\sum\limits_{\forall u'\in\kappa(B,u)} \mathbf{x}[u',B]+ua(B,u)-|B \cap \interior{C[B,u]}|$. Since $d(u',B)<d(B,u)$ for all $u'\in\kappa(B,u)$, we can use the induction hypothesis to conclude that $\mathbf{x}[B,u]\leq\sum\limits_{\forall u'\in\kappa(B,u)} |\mathcal{B}(G[u',B])|+ua(B,u)-|B \cap \interior{C[B,u]}|$. It is easy to see that $|\mathcal{B}(G[B,u])|=1+\sum\limits_{\forall u'\in\kappa(B,u)} |\mathcal{B}(G[u',B])|$. We thus have that $\mathbf{x}[B,u]\leq |\mathcal{B}(G[B,u])|-1+ua(B,u)-|B \cap \interior{C[B,u]}|\leq |\mathcal{B}(G[B,u])|$.
Next, suppose that $p=(u,B)$ for some $u\in V_{cut}(G)$ and $B\in\mathcal{B}_u(G)$. Then we have by Lemma~\ref{lem:xuB} that either $\mathbf{x}[u,B]=0$ or $\mathbf{x}[u,B]=\sum\limits_{\forall B'\in\beta(u,B)}\Big(\mathbf{x}[B', u]-ua(B', u)\Big)+ua(u,B)$. In the former case, we are done since $|\mathcal{B}(G[u,B])|\geq 0$. So we assume that $\mathbf{x}[u,B]=\sum\limits_{\forall B'\in\beta(u,B)}\Big(\mathbf{x}[B', u]-ua(B', u)\Big)+ua(u,B)$. Since $d(B',u)<d(u,B)$ for all $B'\in\beta(u,B)$, we have by the induction hypothesis that $\mathbf{x}[u,B]\leq\sum\limits_{\forall B'\in\beta(u,B)}\Big(|\mathcal{B}(G[B', u])|-ua(B', u)\Big)+ua(u,B)$. It can be seen from Definition~\ref{def:ua} that if $ua(B',u)=True$ for any $B'\in\beta(u,B)$, then $ua(u,B)=True$. Since $\sum\limits_{\forall B'\in\beta(u,B)} |\mathcal{B}(G[B', u])|=|\mathcal{B}(G[u,B])|$, it follows that $\sum\limits_{\forall B'\in\beta(u,B)}\Big(|\mathcal{B}(G[B', u])|-ua(B', u)\Big)+ua(u,B)\leq |\mathcal{B}(G[u,B])|$. Thus, we get $\mathbf{x}[u,B]\leq |\mathcal{B}(G[u,B])|$, and we are done.
\end{proof}

\begin{lemma}\label{lem:polytime}
The procedure {\sc Compute-potentials}$(G,C)$ runs in time $O(|E(G)|^4)$.
\end{lemma}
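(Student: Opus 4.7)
The plan is to split the running time estimate into (i) a bound on the total number of iterations of the outer \textbf{while} loop and (ii) a bound on the cost of a single iteration, then multiply the two bounds.

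For (i), Observation~\ref{obs:uniquep} tells us that every iteration of the \textbf{while} loop except possibly the last strictly increases exactly one entry $y[p]$. Since the $y[p]$ are integer-valued, each such iteration increments some $y[p]$ by at least one, and by Observation~\ref{obs:ymonotone} these values grow monotonically from $0$ to $\mathbf{x}[p]$. Thus the total number of non-terminal iterations is at most $\sum_{p \in \mathcal{P}_G} \mathbf{x}[p]$. Lemma~\ref{lem:xupperbound} gives $\mathbf{x}[p] \leq |\mathcal{B}(G[p])| \leq |\mathcal{B}(G)|$, and since the block-tree is a tree on $V_{cut}(G) \cup \mathcal{B}(G)$ we obtain $|\mathcal{P}_G| = 2(|V_{cut}(G)| + |\mathcal{B}(G)| - 1) = O(|V(G)|)$. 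Combined with $|\mathcal{B}(G)| = O(|V(G)|)$, this yields at most $O(|V(G)|^2)$ iterations of the \textbf{while} loop overall.

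For (ii), in a single \textbf{while} iteration the inner \textbf{for} loop scans $\mathcal{P}_G$ at most once, visiting at most $|\mathcal{P}_G| = O(|V(G)|)$ values of $p$ before either executing \textbf{break} or finishing. For each such $p$, the only non-trivial work is the summation on line~\ref{line:y'Bu} (which has $|\kappa_G(B,u)|$ terms) or the summation on line~\ref{line:y'uB} (which has $|\beta_G(u,B)|$ terms), together with the \textbf{elif} condition that gates line~\ref{line:y'uB}, which scans $\mathcal{B}_u(G)$ for a pair $B',B''$ with $y[B',u]=y[B'',u]=0$ and $ua_G(B',u)=ua_G(B'',u)=True$; each of these is $O(|V(G)|)$. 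Hence one iteration of the \textbf{while} loop costs $O(|V(G)|) \cdot O(|V(G)|) = O(|V(G)|^2)$.

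Multiplying the two bounds gives a total running time of $O(|V(G)|^4)$, and since $G$ is connected we have $|V(G)| - 1 \leq |E(G)|$, so this is $O(|E(G)|^4)$, as required. The only delicate point is that the count of \textbf{while} iterations must be controlled by the \emph{final} values $\mathbf{x}[p]$ via Lemma~\ref{lem:xupperbound} (applied to every $p$ simultaneously), rather than by some naive potential-function argument; the rest is routine accounting of the inner loop and of the sizes of $\mathcal{P}_G$, $\kappa_G(B,u)$, and $\beta_G(u,B)$.
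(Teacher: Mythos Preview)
Your proof is correct and follows essentially the same approach as the paper: bound the number of \textbf{while} iterations by $\sum_{p\in\mathcal{P}_G}\mathbf{x}[p]$ using Observations~\ref{obs:ymonotone} and~\ref{obs:uniquep} together with Lemma~\ref{lem:xupperbound}, then bound the cost of a single iteration by $O(|\mathcal{P}_G|)$ scans each taking $O(|\mathcal{P}_G|)$ time. The only cosmetic difference is that the paper phrases everything in terms of $n=|V_{cut}(G)|$ and $m=|\mathcal{B}(G)|$ (obtaining $O(m^4)$ and then using $m\le |E(G)|$), whereas you pass through $|V(G)|$; since $|V_{cut}(G)|+|\mathcal{B}(G)|=O(|V(G)|)$ for a connected graph, the two routes are equivalent.
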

\begin{proof}
Let $n=|V_{cut}(G)|$ and $m=|\mathcal{B}(G)|$.
Note that an adjacency list representation of the block tree of $\mathcal{T}(G)$ can be computed in $O(|V(G)|+|E(G)|)$ time~\cite{hoptarj}. Recall that $\mathcal{P}=E(\mathcal{T}(G))$. Thus the operations like determining $\beta(u,B)$ and $\kappa(B,u)$, for some $u\in V_{cut}(G)$ and $B\in\mathcal{B}_u(G)$, become just adjacency checking operations on $\mathcal{T}(G)$.
From Observations~\ref{obs:uniquep} and~\ref{obs:ymonotone}, it follows that the \textbf{while} loop of Algorithm~\ref{alg:cap} gets executed at most $\sum_{p\in\mathcal{P}} \mathbf{x}[p]$ times, which by Lemma~\ref{lem:xupperbound} is at most $|\mathcal{P}|m$ times. It is not difficult to see that $|\mathcal{P}|=2|E(\mathcal{T}(G))|=2(n+m-1)$. Thus, we have that the \textbf{while} loop gets executed at most $2m(n+m-1)$ times. As the \textbf{for} loop gets executed at most $|\mathcal{P}|$ times inside each iteration of the \textbf{while} loop, and it takes $O(n+m)$ time for one iteration of the \textbf{for} loop, we can conclude that each iteration of the \textbf{while} loop takes time at most $O(|\mathcal{P}|(n+m))=O((n+m)^2)$. Thus the total running time of the algorithm is $O(m(n+m)^3)=O(m^4)$ (since $|\mathcal{B}(G)|=m\geq n=|V_{cut}(G)|$). As $m\leq |E(G)|$, we conclude that the algorithm runs in $O(|E(G)|^4)$ time.
\end{proof}
\subsection{Proof of correctness}
\begin{lemma}\label{lem:evolution}
Let $p\in\mathcal{P}$.
For every $i\in\{1,2,\ldots,t\}$, there exists a $p$-independent set $Q$ of $G$ that is $y^{(i)}[\overline{p}]$-accessible from $C[p]$ and has $cap(Q)\geq y^{(i)}[p]$.
\end{lemma}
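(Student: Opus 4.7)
The plan is to induct on $i$. For the base case $i=1$, every $y^{(1)}[p]=0$ by initialization, so I take $Q=C[p]$: by Observation~\ref{obs:access}(iii), $C[p]$ is $0$-accessible from itself, and $cap(C[p])\geq 0$ by Lemma~\ref{lem:ua}\ref{it:capnonnegative}.

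For the inductive step at iteration $i$, Observation~\ref{obs:uniquep} identifies a unique $p^{*}\in\mathcal{P}$ whose $y$-value strictly increases; every other $y$-value is unchanged. Hence for $p\neq p^{*}$, the $Q$ furnished by the induction hypothesis at step $i-1$ continues to satisfy the capacity requirement, and its accessibility parameter either remains $y^{(i-1)}[\overline{p}]$ (if $\overline{p}\neq p^{*}$) or strictly increases (if $\overline{p}=p^{*}$); in the latter case, Observation~\ref{obs:access}(ii) upgrades $y^{(i-1)}[\overline{p}]$-accessibility to $y^{(i)}[\overline{p}]$-accessibility, finishing the case.

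The substantive content is the case $p=p^{*}$, which splits on the form of $p^{*}$. Suppose first that $p^{*}=(B,u)$; then line~\ref{line:y'Bu} fired and $y^{(i)}[B,u]=\sum_{v\in\kappa(B,u)} y^{(i-1)}[v,B]+ua(B,u)-|B\cap\overline{C[B,u]}|$. For each $v\in\kappa(B,u)$, the induction hypothesis applied to $(v,B)$ at iteration $i-1$ supplies a $(v,B)$-independent set $Q_{v}$ that is $y^{(i-1)}[B,v]$-accessible from $C[v,B]$ with $cap(Q_{v})\geq y^{(i-1)}[v,B]$. Define $Q=(B\cap C)\cup\bigcup_{v\in\kappa(B,u)}Q_{v}$. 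Since the branches $V(G[v,B])$ are pairwise disjoint and meet $B$ only at the respective $v$, and since compatibility of each $Q_{v}$ with $C[v,B]$ pins down $Q_{v}\cap\{v\}=C\cap\{v\}$, $Q$ is indeed a $(B,u)$-independent set; furthermore $Q[v,B]=Q_{v}$ and $B\cap\overline{Q}=B\cap\overline{C[B,u]}$, so Definition~\ref{def:cap} immediately yields $cap(Q)\geq y^{(i)}[B,u]$. The symmetric sub-case $p^{*}=(u,B)$ proceeds by obtaining $Q_{B'}$ for each $B'\in\beta(u,B)$ from induction and assembling $Q$ along the lines of Lemma~\ref{lemma:blockshift}.

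The principal obstacle is verifying $y^{(i)}[\overline{p^{*}}]$-accessibility of $Q$. In the $p^{*}=(B,u)$ sub-case, I fix a $(u,B)$-independent set $R$ compatible with $C[B,u]$ with $cap(R)\geq y^{(i-1)}[u,B]$ and aim to reach $Q\cup R$ from $C[B,u]\cup R$. The natural plan is to swap each $C[v_{j},B]$ into $Q_{v_{j}}$ sequentially, invoking the $y^{(i-1)}[B,v_{j}]$-accessibility of $Q_{v_{j}}$ at each step; this forces the ambient $(B,v_{j})$-capacity in the running state $D_{j-1}$ to stay at least $y^{(i-1)}[B,v_{j}]$. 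The recurrence of line~\ref{line:y'Bu} combined with Observation~\ref{obs:ymonotone} gives the upper bound $y^{(i-1)}[B,v_{j}]\leq y^{(i-1)}[u,B]+\sum_{k\neq j} y^{(i-1)}[v_{k},B]+ua(B,v_{j})-|B\cap\overline{C[B,v_{j}]}|$, and after absorbing $cap(R)$ for the $u$-side and the $cap(Q_{v_{k}})$ contributions from already-swapped branches, the remaining difficulty is handling unswapped branches where $cap(C[v_{k},B])$ may be strictly below $y^{(i-1)}[v_{k},B]$. I expect the argument to proceed by processing the branches in the order in which their $y$-values were updated by the algorithm, interleaving the swaps with applications of Lemma~\ref{lem:uafalse} to elevate branch capacities before each swap so that the running inductive total keeps pace with the algorithm's recorded potentials.
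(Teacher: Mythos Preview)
Your induction on $i$ with the bare lemma statement as the invariant is too weak to close the accessibility argument for $p=p^{*}$. The obstacle you flag is real and does not dissolve by reordering swaps or invoking Lemma~\ref{lem:uafalse}: starting from $C[B,u]\cup R$ and swapping the branches $C[v_k,B]\to Q_{v_k}$ one at a time, the unswapped branches contribute only $cap(C[v_k,B])$ to the ambient $(B,v_j)$-capacity, and nothing ties this to $y^{(i-1)}[v_k,B]$. Processing in update order does not help either, because at every step $i$ you are rebuilding $Q$ from $C$ rather than from a state whose branches already carry the required capacities; the sequential swap scheme you sketch is precisely what needs the missing invariant.

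The paper avoids this with a double induction: an outer induction on $d(p)$ and, for fixed $p$, an inner induction on $i$ that proves a \emph{strengthened} statement (Claims~\ref{clm:uBacc} and~\ref{clm:Buacc}). For $p=(B,u)$ the strengthened invariant asserts that the witness $Q_i$ can be chosen with $B\cap Q_i=B\cap C[B,u]$ and, for every $v\in\kappa(B,u)$, both $cap(Q_i[v,B])\geq y^{(i)}[v,B]$ and $Q_i[v,B]$ is $y^{(i)}[B,v]$-accessible from $C[v,B]$. With this in hand, moving from $Q_{i-1}$ to $Q_i$ requires swapping only the \emph{single} branch $x$ whose $y$-value changed at step $i$; every other branch of $Q_{i-1}$ already has capacity at least $y^{(i-1)}[v,B]=y^{(i)}[v,B]$, so the ambient $(B,x)$-capacity automatically meets the threshold for the one swap. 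The replacement piece $Z$ for branch $x$, with $cap(Z)\geq y^{(i)}[x,B]$ and the correct accessibility parameter, is supplied by the outer depth hypothesis applied to $(x,B)$. Your final paragraph is gesturing at exactly this incremental construction, but without the branch-level invariant and the outer depth induction you cannot manufacture the replacement piece with the right accessibility at step $i$, nor guarantee the running state has enough capacity to admit each swap. The $(u,B)$ case is handled analogously via Claim~\ref{clm:uBacc}, with Lemma~\ref{lemma:blockshift} entering only at the very end to convert the branch-level data into the required $cap(Q)$ bound.
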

\begin{proof}
We prove this by induction on $d(p)$.
As the base case, let us consider the situation when $d(p)=0$. Then we have by Definition~\ref{def:depth} that $p=(B,u)$ for some $u\in V_{cut}(G)$ and $B\in\mathcal{B}_u(G)$, and $\kappa(p)=\emptyset$. 
Further, we have by Definition~\ref{def:ua} that $ua(B,u)=True$. Also, by Observation~\ref{obs:capcomplete}, $cap(C[p])=1-|\interior{C[p]}|$. As $\kappa(B,u)=\emptyset$, whenever line~\ref{line:y'Bu} of Algorithm~\ref{alg:cap} is executed for $p=(B,u)$, the value of $y'$ that is computed is equal to $ua(p)-|\interior{C[p]}|=1-|\interior{C[p]}|=cap(C[p])$. It follows that for all $i\in\{1,2,\ldots,t\}$, $y^{(i)}[p]\leq cap(C[p])$.
We can therefore let $Q=C[p]$ and we are done (recall Observation~\ref{obs:access}). For the inductive step, we assume that:

\begin{enumerate}
\item[($*$)]\label{lemmahypothesis}
for all $p'\in\mathcal{P}$ such that $d(p')<d(p)$, and for each $j\in\{1,2,\ldots,t\}$, there exists a $p'$-independent set $Z$ of $G$ such that $Z$ is $y^{(j)}[\overline{p'}]$-accessible from $C[p']$ and $cap(Z)\geq y^{(j)}[p']$
\end{enumerate}

\begin{claim}\label{clm:uBacc}
Suppose that $p=(u,B)$ for some $u\in V_{cut}(G)$ and $B\in\mathcal{B}_u(G)$.
For all $i\in\{1,2,\ldots,t\}$, there exists a $(u,B)$-independent set $Q_i$ of $G$ that is $y^{(i)}[B,u]$-accessible from $C[u,B]$ such that for all $B'\in\beta(u,B)$, $cap(Q_i[B',u])\geq y^{(i)}[B',u]$, and $Q_i[B',u]$ is $y^{(i)}[u,B']$-accessible from $C[B',u]$.
\end{claim}
We prove this claim by induction on $i$. For $i=1$, we simply let $Q_1=C[u,B]$. Clearly, as $y^{(1)}[B,u]=0$, we have by Observation~\ref{obs:access} that $Q_1$ is $y^{(1)}[B,u]$-accessible from $C[u,B]$. For each $B'\in\beta(u,B)$, we have $y^{(1)}[B',u]=0$, $y^{(1)}[u,B']=0$, and $Q_1[B',u]=C[B',u]$, which implies by Observation~\ref{obs:access} that $Q_1[B',u]$ is $y^{(1)}[u,B']$-accessible from $C[B',u]$, and implies by Lemma~\ref{lem:ua}\ref{it:capnonnegative} that $cap(Q_1[B',u])\geq 0=y^{(1)}[B',u]$. For the inductive step, we assume that $i>1$ and that there exists a $(u,B)$-independent set $Q_{i-1}$ of $G$ that is $y^{(i-1)}[B,u]$-accessible from $C[u,B]$ such that for all $B'\in\beta(u,B)$, $Q_{i-1}[B',u]$ is $y^{(i-1)}[u,B']$-accessible from $C[B',u]$ and $cap(Q_{i-1}[B',u])\geq y^{(i-1)}[B',u]$.

If $y^{(i)}[B',u]=y^{(i-1)}[B',u]$ for all $B'\in\beta(u,B)$, then we simply let $Q_i=Q_{i-1}$, and we are done by the induction hypothesis (recall that $y^{(i)}[B,u]\geq y^{(i-1)}[B,u]$ and $y^{(i)}[u,B']\geq y^{(i-1)}[u,B']$ for all $B'\in\beta(u,B)$ from Observation~\ref{obs:ymonotone}). So let us assume that there exists $X\in\beta(u,B)$ such that $y^{(i)}[X,u]\neq y^{(i-1)}[X,u]$, which implies by Observation~\ref{obs:uniquep} that for all $B'\in\beta(u,B)\setminus\{X\}$, $y^{(i)}[B',u]=y^{(i-1)}[B',u]$. Notice that for each $B'\in\beta(u,B)\setminus\{X\}$, $cap(Q_{i-1}[B',u])\geq y^{(i-1)}[B',u]=y^{(i)}[B',u]$.

Since $d(X,u)<d(u,B)$, we have by~($*$) that there exists an $(X,u)$-independent set $Z$ of $G$ that is $y^{(i)}[u,X]$-accessible from $C[X,u]$ such that $cap(Z)\geq y^{(i)}[X,u]$.
We define $Q_i=(Q_{i-1}\setminus Q_{i-1}[X,u])\cup Z$ (notice that $Z$, $Q_{i-1}[X,u]$, and $C[X,u]$ are $(X,u)$-independent sets that are all compatible with each other as the $Z$ and $Q_{i-1}[X,u]$ are both $y^{(i)}[u,X]$-accessible from $C[X,u]$; this also implies that $Q_i$ is compatible with $C[u,B]$). It follows from our earlier observations that for each $B'\in\beta(u,B)$, $cap(Q_i[B',u])\geq y^{(i)}[B',u]$ and that $Q_i[B',u]$ is $y^{(i)}[u,B']$-accessible from $C[B',u]$. It remains to be shown that $Q_i$ is $y^{(i)}[B,u]$-accessible from $C[u,B]$.

Consider any $(B,u)$-independent set $R$ of $G$ that is compatible with $Q_{i-1}$, and hence is also compatible with $C[u,B]$, such that $cap(R)\geq y^{(i)}[B,u]$.
We will be done if we show that $Q_i\cup R$ is reachable from $C[u,B]\cup R$.
Since $Q_{i-1}$ is $y^{(i)}[B,u]$-accessible from $C[u,B]$ (since $y^{(i)}[B,u]\geq y^{(i-1)}[B,u]$), we know that the independent set $D=Q_{i-1}\cup R$ of $G$ is reachable from $C[u,B]\cup R$.
Notice that $cap(D[B,u])=cap(R)\geq y^{(i)}[B,u]\geq y^{(i-1)}[B,u]$. Also, for any $B'\in\beta(u,B)$, we have $cap(D[B',u])=cap(Q_{i-1}[B',u])\geq y^{(i-1)}[B',u]$. Moreover, for all $B'\in\beta(u,B)\setminus\{X\}$, we have $cap(D[B',u])=cap(Q_{i-1}[B',u])\geq y^{(i-1)}[B',u]=y^{(i)}[B',u]$. We now define a $(u,X)$-independent set $Q'$ of $G$ that is compatible with $D[X,u]$ in such a way that $cap(Q')\geq y^{(i)}[u,X]$ and $D[X,u]\cup Q'=Q_{i-1}[X,u]\cup Q'$ is reachable from $D$ and hence also from $C[u,B]\cup R$.

If $y^{(i)}[u,X]=0$, then we simply define $Q'=D[u,X]$. Clearly, we have $cap(Q')\geq y^{(i)}[u,X]=0$ and that $Q_{i-1}[X,u]\cup Q'=D[X,u]\cup Q'=D[X,u]\cup D[u,X]=D$ is reachable from $D$ in this case.
So we assume that $y^{(i)}[u,X]\neq 0$, which implies that $y^{(i)}[u,X]>0$. We now consider two cases.

First, if $u\in D$ (this means that $u\in C$ as well), we again define $Q'=D[u,X]$. Since $u$ is not under attack in $D[B',u]$ for any $B'\in\beta(u,X)$, we can conclude using Lemma~\ref{lem:ua}\ref{it:uatrue} that there does not exist $B'\in\beta(u,X)$ such that $cap(D[B',u])=0$ and $ua(B',u)=True$. Then we have by Definition~\ref{def:cap} that $cap(Q')=cap(D[u,X])=\displaystyle\sum_{B'\in\beta(u,X)}\Big(cap(D[B',u])-ua(B',u)\Big)+ua(u,X)$. Notice that we again have that $Q_{i-1}[X,u]\cup Q'=D$ is reachable from $D$.

Next, we consider the case when $u\notin D$.
Since $y^{(i)}[u,X]=y^{(i-1)}[u,X]$ (from Observation~\ref{obs:uniquep}), we have that $y^{(i-1)}[u,X]>0$ (from our assumption above that $y^{(i)}[u,X]>0$).
This means that for some $j\in\{1,2,\ldots,i-2\}$, line~\ref{line:assign} of Algorithm~\ref{alg:cap} is executed for $(u,X)$ during the $j$-th iteration of the \textbf{while} loop (recall that $y^{(i-1)}[u,X]=y^{(i)}[u,X]$). It is clear that line~\ref{line:y'uB} must have been executed just before this, which implies that there does not exist distinct $B',B''\in\mathcal{B}_u(G)$ such that $y^{(j)}[B',u]=y^{(j)}[B'',u]=0$ and $ua(B',u)=ua(B'',u)=True$. It follows from Observation~\ref{obs:ymonotone} that there does not exist distinct $B',B''\in\mathcal{B}_u(G)$ such that $y^{(i-1)}[B',u]=y^{(i-1)}[B'',u]=0$ and $ua(B',u)=ua(B'',u)=True$.
This further implies that
there does not exist distinct $B',B''\in\mathcal{B}_u(G)$ such that $cap(D[B',u])=cap(D[B'',u])=0$ and $ua(B',u)=ua(B'',u)=True$. We define $Q'$ to be the $(u,X)$-independent set of $G$ given by Lemma~\ref{lemma:blockshift} that is $\gamma$-accessible from $D[u,X]$ such that
$cap(Q')=\sum\limits_{\forall B'\in\mathcal{B}_u[u,X]}\Big(cap(D[B', u])-ua(B', u)\Big)+ua(u,X)$, where $\gamma=\min\{cap(D[X,u]),1\}$. 
This means that $Q'$ is $cap(D[X,u])$-accessible from $D[u,X]$, which implies that the independent set $D[X,u]\cup Q'=Q_{i-1}[X,u]\cup Q'$ is reachable from $D=D[X,u]\cup D[u,X]$.

We now have that in both cases, $Q'$ is a $(u,X)$-independent set of $G$ that is compatible with $D[X,u]$ such that $D[X,u]\cup Q'=Q_{i-1}[X,u]\cup Q'$ is reachable from $D$, and hence also from $C[u,B]\cup R$. Further we also have that in both cases $cap(Q')=\sum\limits_{\forall B'\in\mathcal{B}_u[u,X]}\Big(cap(D[B', u])-ua(B', u)\Big)+ua(u,X)$.
By our assumption that $y^{(i)}[u,X]>0$, there exists a maximum integer $j$ in $\{1,2,\ldots,i-2\}$ such that line~\ref{line:assign} gets executed for $(u,X)$ in the $j$-th iteration of the \textbf{while} loop in Algorithm~\ref{alg:cap}. Clearly, the value of $y'$ at that point is $\sum\limits_{\forall B'\in \mathcal{B}_u[u,X]}\Big(y^{(j)}[B',u]-ua(B',u)\Big)+ua(u,X)$, and this is also equal to $y^{(i)}[u,X]$ by our choice of $j$. Then we have by Observation~\ref{obs:ymonotone} that $0<y^{(i)}[u,X]\leq\sum\limits_{\forall B'\in \mathcal{B}_u[u,X]}\Big(y^{(i)}[B',u]-ua(B',u)\Big)+ua(u,X)$. Then we have that $cap(Q')\geq\sum\limits_{\forall B'\in\mathcal{B}_u[u,X]}\Big(y^{(i)}[B', u]-ua(B', u)\Big)+ua(u,X)\geq y^{(i)}[u,X]=y^{(i-1)}[u,X]$, as required.

Now since $Q_{i-1}[X,u]$ is $y^{(i-1)}[u,X]$-accessible from $C[X,u]$, it follows that the independent set $C[X,u]\cup Q'$ of $G$ is reachable from $Q_{i-1}[X,u]\cup Q'$ and hence also from $C[u,B]\cup R$. Recall that $Z$ is an $(X,u)$-independent set of $G$ that is $y^{(i)}[u,X]$-accessible from $C[X,u]$ such that $cap(Z)\geq y^{(i)}[X,u]$. Thus $Z\cup Q'$ is an independent set of $G$ that is reachable from $C[X,u]\cup Q'$, and hence also reachable from $C[u,B]\cup R$. Notice that the fact that $y^{(i)}[X,u]\neq y^{(i-1)}[X,u]$ implies that $y^{(i)}[X,u]\geq 1$, which means that $cap(Z)\geq 1$. Since $D[u,X]$ is $1$-accessible from $Q'$ (notice that by the way we defined $Q'$, we have either $Q'=D[u,X]$ or that $Q'$ is $\gamma$-accessible from $D[u,X]$ for some $\gamma\leq 1$), it follows that $D'=Z\cup D[u,X]$ is reachable from $Z\cup Q'$, and therefore also from $C[u,B]\cup R$. Since $D'[B,u]=D[B,u]=R$ and $D'[u,B]=(D[u,B]\setminus D[X,u])\cup Z=(Q_{i-1}\setminus Q_{i-1}[X,u])\cup Z=Q_i$, we have that $Q_i\cup R$ is reachable from $C[u,B]\cup R$, and hence we are done. This completes the proof of the claim.

\begin{claim}\label{clm:Buacc}
Suppose that $p=(B,u)$ for some $u\in V_{cut}(G)$ and $B\in\mathcal{B}_u(G)$.
For all $i\in\{1,2,\ldots,t\}$, there exists a $(B,u)$-independent set $Q_i$ of $G$ that is $y^{(i)}[u,B]$-accessible from $C[B,u]$ such that $B\cap Q_i=B\cap C[B,u]$, and for each $v\in\kappa(B,u)$, $Q_i[v,B]$ is $y^{(i)}[B,v]$-accessible from $C[v,B]$ and $cap(Q_i[v,B])\geq y^{(i)}[v,B]$.
\end{claim}
As in the previous claim, we use induction on $i$. For $i=1$, we let $Q_1=C[B,u]$, and we are done as before. For the inductive step, we assume that $i>1$ and that there exists a $(B,u)$-independent set $Q_{i-1}$ of $G$ that is $y^{(i-1)}[u,B]$-accessible from $C[B,u]$ such that $B\cap Q_{i-1}=B\cap C[B,u]$, and for each $v\in\kappa(B,u)$, $cap(Q_{i-1}[v,B])\geq y^{(i-1)}[v,B]$ and $Q_{i-1}[v,B]$ is $y^{(i-1)}[B,v]$-accessible from $C[v,B]$. If $y^{(i)}[v,B]=y^{(i-1)}[v,B]$ for all $v\in\kappa(B,u)$, then we simply let $Q_i=Q_{i-1}$, and we are done by the induction hypothesis. So let us assume that there exists $x\in\kappa(B,u)$ such that $y^{(i)}[x,B]\neq y^{(i-1)}[x,B]$.

Since $d(x,B)<d(B,u)$, we have by ($*$) that there exists an $(x,B)$-independent set $Z$ of $G$ that is $y^{(i)}[B,x]$-accessible from $C[x,B]$ such that $cap(Z)\geq y^{(i)}[x,B]$. We define $Q_i=(Q_{i-1}\setminus Q_{i-1}[x,B])\cup Z$. Clearly, we have that for each $v\in\kappa(B,u)$, $cap(Q_i[v,B])\geq y^{(i)}[v,B]$ and that $Q_i[v,B]$ is $y^{(i)}[B,v]$-accessible from $C[v,B]$. Also, it can be seen that since $B\cap Q_{i-1}=B\cap C[B,u]$ and $Z$ is compatible with $C[x,B]$, we have $B\cap Q_i=B\cap Q_{i-1}=B\cap C[B,u]$. It remains to be shown that $Q_i$ is $y^{(i)}[u,B]$-accessible from $C[B,u]$.

Consider any $(u,B)$-independent set $R$ of $G$ that is compatible with $Q_{i-1}$, and hence also with $C[u,B]$, such that $cap(R)\geq y^{(i)}[u,B]$.
We will be done if we show that $Q_i\cup R$ is reachable from $C[B,u]\cup R$.
Since $Q_{i-1}$ is $y^{(i)}[u,B]$-accessible from $C[B,u]$ (as $y^{(i)}[u,B]\geq y^{(i-1)}[u,B]$), we know that the independent set $D=Q_{i-1}\cup R$ of $G$ is reachable from $C[B,u]\cup R$. Notice that $cap(D[u,B])=cap(R)\geq y^{(i)}[u,B]\geq y^{(i-1)}[u,B]$.
Clearly, we have $B\cap D[B,u]=B\cap Q_{i-1}=B\cap C[B,u]$, and for each $v\in\kappa(B,u)$, we have $cap(D[v,B])=cap(Q_{i-1}[v,B])\geq y^{(i-1)}[v,B]$. Moreover, for each $v\in\kappa(B,u)\setminus\{x\}$, we have $cap(D[v,B])=cap(Q_{i-1}[v,B])\geq y^{(i-1)}[v,B]=y^{(i)}[v,B]$.
This means that $cap(D[B,x])=\sum\limits_{\forall v\in\kappa(B,x)} cap(D[v,B]) - |B\cap\interior{D[B,x]}| + ua(B,x)\geq \sum\limits_{\forall v\in\kappa(B,x)} y^{(i)}[v,B] - |B\cap\interior{D[B,x]}| + ua(B,x)$.
Since $B\cap D[B,u]=B\cap C[B,u]$, we have $B\cap\interior{D[B,x]}=B\cap\interior{C[B,x]}$, which implies that
$cap(D[B,x])\geq\sum\limits_{\forall v\in\kappa(B,x)} y^{(i)}[v,B] - |B\cap\interior{C[B,x]}| + ua(B,x)$.
If $y^{(i)}[B,x]>0$, there exists a maximum integer $j$ in $\{1,2,\ldots,i-2\}$ such that line~\ref{line:assign} gets executed for $(B,x)$ during the $j$-th iteration of the \textbf{while} loop of Algorithm~\ref{alg:cap}. Clearly, the value of $y'$ at that point is equal to $\sum\limits_{\forall v\in\kappa(B,x)} y^{(j)}[v,B] - |B\cap\interior{C[B,x]}| + ua(B,x)\leq\sum\limits_{\forall v\in\kappa(B,x)} y^{(i)}[v,B] - |B\cap\interior{C[B,x]}| + ua(B,x)$, and by our choice of $j$, it is also equal to $y^{(i)}[B,x]$. Thus we have that either $y^{(i)}[B,x])=0$ or $0<y^{(i)}[B,x]\leq\sum\limits_{\forall v\in\kappa(B,x)} y^{(i)}[v,B] - |B\cap\interior{C[B,x]}| + ua(B,x)$. We now have using Lemma~\ref{lem:ua}\ref{it:capnonnegative} that in any case,
$cap(D[B,x])\geq y^{(i)}[B,x]\geq y^{(i-1)}[B,x]$.
Since $Q_{i-1}[x,B]$ is $y^{(i-1)}[B,x]$-accessible from $C[x,B]$, it follows that the independent set $C[x,B]\cup D[B,x]$ of $G$ is reachable from $Q_{i-1}[x,B]\cup D[B,x]=D$ and hence also from $C[B,u]\cup R$. Recall that $Z$ is an $(x,B)$-independent set of $G$ that is $y^{(i)}[B,x]$-accessible from $C[x,B]$ such that $cap(Z)\geq y^{(i)}[x,B]$. Thus $D'=Z\cup D[B,x]$ is an independent set of $G$ that is reachable from $C[x,B]\cup D[B,x]$, and hence also reachable from $C[B,u]\cup R$. Since $D'[u,B]=D[u,B]=R$ and $D'[B,u]=(D[B,u]\setminus D[x,B])\cup Z=(Q_{i-1}\setminus Q_{i-1}[x,B])\cup Z=Q_i$, we have that $Q_i\cup R=D'$ is reachable from $C[B,u]\cup R$, and hence we are done. This completes the proof of the claim.
\bigskip

We are now ready to complete the proof of the lemma. If $y^{(i)}[p]=0$, then we can simply let $Q=C[p]$, and we would be done. So we assume that $y^{(i)}[p]\neq 0$. Suppose that $p=(u,B)$. By Claim~\ref{clm:uBacc}, we have that there exists a $(u,B)$-independent set $Q_i$ of $G$ that is $y^{(i)}[B,u]$-accessible from $C[u,B]$ such that for all $B'\in\beta(u,B)$, $cap(Q_i[B',u])\geq y^{(i)}[B',u]$. Consider any $(B,u)$-independent set $R$ of $G$ that is compatible with $Q_i$, and hence also compatible with $C[u,B]$, such that $cap(R)\geq y^{(i)}[B,u]$. We know that $D=Q_i\cup R$ is reachable from $C[u,B]\cup R$. We now consider two cases depending upon whether $u\in C$ or $u\notin C$, and in each case we define a $(u,B)$-independent set $Q$ of $G$ which has the property that $cap(Q)=\sum\limits_{\forall B'\in\beta(u,B)}\Big(cap(D[B', u])-ua(B', u)\Big)+ua(u,B)$ and $Q\cup R$ is reachable from $D$.

If $u\in C$, then we define $Q=Q_i$. Since $Q=Q_i$ is compatible with $C[u,B]$, we know that $u\in Q$ and therefore $u$ is not under attack in $Q$. By Lemma~\ref{lem:ua}\ref{it:uatrue}, we now have that there does not exist $B'\in\beta(u,B)$ such that $cap(B',u)=0$ and $ua(B',u)=True$. Then we have by Definition~\ref{def:cap} that $cap(Q)=\sum\limits_{\forall B'\in\beta(u,B)}\Big(cap(D[B', u])-ua(B', u)\Big)+ua(u,B)$. Clearly, we have that $Q\cup R=Q_i\cup R=D$ is reachable from $D$.

Suppose that $u\notin C$. If there exists $B',B''\in\mathcal{B}_u(G)$ such that $cap(D[B',u])=cap(D[B'',u])=0$ and $ua(B',u)=ua(B'',u)=True$, then we have that $y^{(i)}[B',u]=y^{(i)}[B'',u]=0$, which implies (by Algorithm~\ref{alg:cap}) that $y^{(i)}[u,B]=0$, which contradicts our assumption that $y^{(i)}[p]\neq 0$. So we can use Lemma~\ref{lemma:blockshift} to conclude that there exists a $(u,B)$-independent set $Q$ of $G$ that is $\gamma$-accessible from $D[u,B]$ such that $cap(Q)=\sum\limits_{\forall B'\in\beta(u,B)}\Big(cap(D[B', u])-ua(B', u)\Big)+ua(u,B)$, where $\gamma=\min\{cap(D[B,u]),1\}$.
Since $D[B,u]=R$, we have that $Q\cup R$ is reachable from $D[u,B]\cup R=D$.

Thus we have that in both cases, $cap(Q)=\sum\limits_{\forall B'\in\beta(u,B)}\Big(cap(D[B', u])-ua(B', u)\Big)+ua(u,B)$ and $Q\cup R$ is reachable from $D$, and hence also reachable from $C[u,B]\cup R$. This implies that $Q$ is $y^{(i)}[B,u]$-accessible from $C[u,B]$. Finally, we have $cap(Q)=\sum\limits_{\forall B'\in\beta(u,B)}\Big(cap(D[B', u])-ua(B', u)\Big)+ua(u,B)=\sum\limits_{\forall B'\in\beta(u,B)}\Big(cap(Q_i[B',u])-ua(B', u)\Big)+ua(u,B)\geq\sum\limits_{\forall B'\in\beta(u,B)}\Big(y^{(i)}[B', u]-ua(B', u)\Big)+ua(u,B)$. From Algorithm~\ref{alg:cap} and our assumption that $y^{(i)}[u,B]\neq 0$, we have that $0<y^{(i)}[u,B]\leq\sum_{B'\in\beta(u,B)} (y^{(i)}[B',u]-ua(B',u))+ua(u,B)$. It follows that $cap(Q)\geq y^{(i)}[u,B]$, and we are done.

Finally, suppose that $p=(B,u)$. We define $Q$ to be the $(B,u)$-independent set $Q_i$ of $G$ that can be obtained by Claim~\ref{clm:Buacc}. Thus we have that $Q$ is $y^{(i)}[u,B]$-accessible from $C[B,u]$, $B\cap Q=B\cap C[B,u]$, and that for each $v\in\kappa(B,u)$, $cap(Q[v,B])\geq y^{(i)}[v,B]$. Notice that $B\cap\interior{Q}=B\cap\interior{C[B,u]}$.
Then we have from Definition~\ref{def:cap} that $cap(Q)=\sum\limits_{v\in\kappa(B,u)} cap(Q[v,B])-|B\cap\interior{Q}|+ua(B,u)\geq\sum\limits_{v\in\kappa(B,u)} y^{(i)}[v,B]-|B\cap\interior{C[B,u]}|+ua(B,u)$. As before, from our assumption that $y^{(i)}[B,u]\neq 0$ and Algorithm~\ref{alg:cap}, it follows that $0<y^{(i)}[B,u]\leq\sum\limits_{v\in\kappa(B,u)} y^{(i)}[v,B]-|B\cap\interior{C[B,u]}|+ua(B,u)$. We thus have that $cap(Q)\geq y^{(i)}[B,u]$ and we are done. This completes the proof.
\end{proof}

\begin{lemma}\label{lem:thereexists}
Let $p\in\mathcal{P}$.
There exists an independent set $C'$ of $G$ that is reachable from $C$ such that $cap(C'[p])\geq\mathbf{x}[p]$ and $|\interior{C'[p]}|=|\interior{C[p]}|$.
\end{lemma}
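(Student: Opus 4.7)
The plan is to prove by induction on $i \in \{1, 2, \ldots, t\}$ the following simultaneous strengthening: for every $p \in \mathcal{P}$ there exists an independent set $\widetilde{C}^{(i,p)}$ of $G$ reachable from $C$ such that $cap(\widetilde{C}^{(i,p)}[p]) \geq y^{(i)}[p]$, $cap(\widetilde{C}^{(i,p)}[\overline{p}]) \geq y^{(i)}[\overline{p}]$, $|\interior{\widetilde{C}^{(i,p)}[p]}| = |\interior{C[p]}|$, and $|\interior{\widetilde{C}^{(i,p)}[\overline{p}]}| = |\interior{C[\overline{p}]}|$. Since $y^{(t)}[p] = \mathbf{x}[p]$, setting $i = t$ and $C' = \widetilde{C}^{(t,p)}$ will yield the lemma.

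The base case $i = 1$ is immediate from Lemma~\ref{lem:ua}\ref{it:capnonnegative} by choosing $\widetilde{C}^{(1,p)} = C$, since $y^{(1)}[p] = y^{(1)}[\overline{p}] = 0$. For the inductive step, fix $p \in \mathcal{P}$ and let $p^\star$ be the unique pair that gets updated in the transition from iteration $i-1$ to iteration $i$ (Observation~\ref{obs:uniquep}). If $p^\star \notin \{p, \overline{p}\}$, the values of $y^{(\cdot)}[p]$ and $y^{(\cdot)}[\overline{p}]$ are unchanged, so the IH witness $\widetilde{C}^{(i-1,p)}$ suffices. Otherwise, by symmetry assume $p^\star = p$. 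Applying Lemma~\ref{lem:evolution} at iteration $i$ produces a $p$-independent set $Q$ that is $y^{(i)}[\overline{p}]$-accessible from $C[p]$ with $cap(Q) \geq y^{(i)}[p]$, and likewise a $\overline{p}$-independent set $R$ that is $y^{(i)}[p]$-accessible from $C[\overline{p}]$ with $cap(R) \geq y^{(i)}[\overline{p}]$. By Observation~\ref{obs:access} the $\interior$-cardinalities of $Q$ and $R$ match those of $C[p]$ and $C[\overline{p}]$, and since both $Q$ and $R$ agree with $C$ on whether the base vertex belongs to the set, $Q \cup R$ is a legitimate independent set of $G$ with all the required capacity and cardinality properties.

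What remains is to show $\widetilde{C}^{(i,p)} = Q \cup R$ is reachable from $C$. The intended plan is a two-step boost starting from $\widetilde{C}^{(i-1,p)}$: first replace its $p$-side by $Q$ using the $y^{(i)}[\overline{p}]$-accessibility of $Q$ from $C[p]$ (note $cap(\widetilde{C}^{(i-1,p)}[\overline{p}]) \geq y^{(i-1)}[\overline{p}] = y^{(i)}[\overline{p}]$ by the IH and since only $p^\star = p$ was updated), and then replace the $\overline{p}$-side by $R$ using the $y^{(i)}[p]$-accessibility of $R$ from $C[\overline{p}]$ (valid because the newly inserted $Q$ provides $cap(Q) \geq y^{(i)}[p]$ on the $p$-side). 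Cardinality preservation at each step is automatic by Observation~\ref{obs:access}.

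The main obstacle is justifying the first of these two steps: the accessibility of $Q$ is stated with respect to $C[p]$, not $\widetilde{C}^{(i-1,p)}[p]$. To apply it we need to show that $\widetilde{C}^{(i-1,p)}[p]$ lies in the same $y^{(i)}[\overline{p}]$-accessibility equivalence class as $C[p]$. I expect this to follow by tracing through Lemma~\ref{lem:evolution}'s construction---which inductively produces $\widetilde{C}^{(i-1,p)}[p]$ as a member of that class when its $\overline{p}$-side already carries enough capacity---perhaps after enlarging the induction hypothesis to explicitly assert this accessibility relationship alongside the capacity and cardinality bounds. A secondary lever is Corollary~\ref{cor:ureachable}\ref{it:blocks}, which lets one localize modifications block-by-block around the shared cut vertex, allowing the $p$-side boost to be carried out without disturbing the $\overline{p}$-side capacity built up by the IH.
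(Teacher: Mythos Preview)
Your overall approach matches the paper's: induct on $i$, invoke Lemma~\ref{lem:evolution} to obtain a $p$-independent set $Q$ with $cap(Q)\geq y^{(i)}[p]$, and splice it in against a $\overline{p}$-side that already has capacity $\geq y^{(i)}[\overline{p}]$. You have also correctly located the one real obstacle and its fix.

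The gap you name is genuine: with only capacity and cardinality in the induction hypothesis, you cannot justify replacing $\widetilde{C}^{(i-1,p)}[p]$ by $Q$, because $k$-accessibility of $Q$ is stated relative to $C[p]$, not relative to $\widetilde{C}^{(i-1,p)}[p]$. Your proposed remedy---enlarge the induction hypothesis to also assert that $\widetilde{C}^{(i-1,p)}[p]$ is $y^{(i-1)}[\overline{p}]$-accessible from $C[p]$ and $\widetilde{C}^{(i-1,p)}[\overline{p}]$ is $y^{(i-1)}[p]$-accessible from $C[\overline{p}]$---is exactly what the paper does, and once you commit to it the argument closes cleanly: since $y^{(i)}[\overline{p}]=y^{(i-1)}[\overline{p}]$ (only $p^\star=p$ changed), transitivity of the accessibility equivalence relation gives that $Q$ is $y^{(i)}[\overline{p}]$-accessible from $\widetilde{C}^{(i-1,p)}[p]$, and the $\overline{p}$-side capacity $\geq y^{(i)}[\overline{p}]$ then licenses the swap. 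You should state this enlarged hypothesis from the outset rather than leaving it as an afterthought.

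Two minor simplifications relative to your plan. First, the second replacement (swapping in $R$ on the $\overline{p}$-side) is unnecessary: after inserting $Q$, the old $\overline{p}$-side $\widetilde{C}^{(i-1,p)}[\overline{p}]$ already satisfies $cap\geq y^{(i-1)}[\overline{p}]=y^{(i)}[\overline{p}]$ and, by Observation~\ref{obs:access}(ii), its $y^{(i-1)}[p]$-accessibility from $C[\overline{p}]$ upgrades to $y^{(i)}[p]$-accessibility since $y^{(i)}[p]>y^{(i-1)}[p]$. So $Q\cup\widetilde{C}^{(i-1,p)}[\overline{p}]$ already works. Second, the ``secondary lever'' Corollary~\ref{cor:ureachable}\ref{it:blocks} is not needed here; the accessibility clause in the strengthened hypothesis does all the work.
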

\begin{proof}
Note that as $\mathbf{x}[p]=y^{(t)}[p]$, we will be done if we prove the following stronger statement:
\medskip

\textit{Let $p\in\mathcal{P}$. For each $i\in\{1,2,\ldots,t\}$, there exists an independent set $C^{(i)}$ that is reachable from $C$ such that $C^{(i)}[p]$ is $y^{(i)}(\overline{p})$-accessible from $C[p]$, $cap(C^{(i)}[p])\geq y^{(i)}[p]$, $C^{(i)}[\overline{p}]$ is $y^{(i)}[p]$-accessible from $C[\overline{p}]$, and $cap(C^{(i)}[\overline{p}])\geq y^{(i)}[\overline{p}]$.}
\medskip

We prove this statement by induction on $i$. For $i=1$, we have $y^{(i)}[p]=y^{(i)}[\overline{p}]=0$, so we can simply let $C^{(1)}=C$. Suppose that $i>1$ and that there is an independent set $C^{(i-1)}$ that is reachable from $C$ such that $C^{(i-1)}[p]$ is $y^{(i-1)}[\overline{p}]$-accessible from $C[p]$, $cap(C^{(i-1)}[p])\geq y^{(i-1)}[p]$, $C^{(i-1)}[\overline{p}]$ is $y^{(i-1)}[p]$-accessible from $C[\overline{p}]$, and $cap(C^{(i-1)}[\overline{p}])\geq y^{(i-1)}[\overline{p}]$. If $y^{(i)}[p]=y^{(i-1)}[p]$ and $y^{(i)}[\overline{p}]=y^{(i-1)}[\overline{p}]$, then we simply let $C^{(i)}=C^{(i-1)}$ and we are done. So we can assume by symmetry that $y^{(i)}[p]>y^{(i-1)}[p]$ and $y^{(i)}[\overline{p}]=y^{(i-1)}[\overline{p}]$. By Lemma~\ref{lem:evolution}, we have that there exists a $p$-independent set $Q$ that is $y^{(i)}[\overline{p}]$-accessible from $C[p]$ such that $cap(Q)\geq y^{(i)}[p]$. Since $C^{(i-1)}[p]$ is also $y^{(i)}[\overline{p}]$-accessible from $C[p]$, we have that $Q$ is $y^{(i)}[\overline{p}]$-accessible from $C^{(i-1)}[p]$. We define $C^{(i)}=Q\cup C^{(i-1)}[\overline{p}]$. Since $cap(C^{(i-1)}[\overline{p}])\geq y^{(i-1)}[\overline{p}]=y^{(i)}[\overline{p}]$, we can conclude that $C^{(i)}$ is reachable from $C^{(i-1)}$. Clearly, $C^{(i)}[p]=Q$ is $y^{(i)}[\overline{p}]$-accessible from $C[p]$, $cap(C^{(i)}[p])=cap(Q)\geq y^{(i)}[p]$, $C^{(i)}[\overline{p}]=C^{(i-1)}[\overline{p}]$ is $y^{(i)}[p]$-accessible from $C[\overline{p}]$ (since $y^{(i)}[p]>y^{(i-1)}[p]$) and $cap(C^{(i)}[\overline{p}])=cap(C^{(i-1)}[\overline{p}])\geq y^{(i-1)}[\overline{p}]=y^{(i)}[\overline{p}]$. This completes the proof.
\end{proof}

\begin{lemma}\label{lem:forevery}
For every $p\in\mathcal{P}$ and every independent set $C'$ of $G$ that is reachable from $C$, we have $\mathbf{x}[p]\geq cap(C'[p])+|\interior{C'[p]}|-|\interior{C[p]}|$.
\end{lemma}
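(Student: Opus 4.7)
The plan is to proceed by induction on the depth $d(p)$. In the base case $d(p) = 0$, the block $G[p]$ is a clique with $V(G[p]) = B$, so $cap(C'[p]) = 1 - |\interior{C'[p]}|$ by Observation~\ref{obs:capcomplete}, giving $cap(C'[p]) + |\interior{C'[p]}| = 1$; on the other hand, $\mathbf{x}[p] + |\interior{C[p]}|$ is also $1$, so equality holds.

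For the inductive step with $p = (B,u)$, the definitions combine neatly: using Definition~\ref{def:cap} together with Observation~\ref{obs:sum} I obtain
\[cap(C'[B,u]) + |\interior{C'[B,u]}| \;=\; \sum_{v \in \kappa(B,u)} \bigl(cap(C'[v,B]) + |\interior{C'[v,B]}|\bigr) + ua(B,u).\]
Since $d(v,B) < d(B,u)$, the induction hypothesis bounds each summand by $\mathbf{x}[v,B] + |\interior{C[v,B]}|$; summing and invoking Lemma~\ref{lem:xBu} together with Observation~\ref{obs:sum} applied to $C$ yields exactly the desired bound $\mathbf{x}[B,u] + |\interior{C[B,u]}|$.

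For $p = (u,B)$ I split on which branch of Definition~\ref{def:cap} computes $cap(C'[u,B])$ and which branch of Lemma~\ref{lem:xuB} computes $\mathbf{x}[u,B]$. The straightforward subcase is when both are in the non-degenerate ``formula'' branch: a decomposition analogous to the $(B,u)$ case gives $cap(C'[u,B]) + |\interior{C'[u,B]}| = \sum_{B' \in \beta(u,B)}\bigl(cap(C'[B',u]) + |\interior{C'[B',u]}| - ua(B',u)\bigr) + ua(u,B)$, after which the induction hypothesis and the explicit formula for $\mathbf{x}[u,B]$ from Lemma~\ref{lem:xuB} close the bound. When the zero branch fires for $cap(C'[u,B])$ instead, I use $cap(C'[u,B]) = 0$ and bound $|\interior{C'[u,B]}| = \sum_{B'} |\interior{C'[B',u]}|$ via the identity $|\interior{C'[B',u]}| = \bigl(cap(C'[B',u]) + |\interior{C'[B',u]}|\bigr) - cap(C'[B',u])$ before applying the induction hypothesis.

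The main obstacle lies in the degenerate subcase of Lemma~\ref{lem:xuB}, where $\mathbf{x}[u,B] = 0$ because there exist distinct $B_1, B_2 \in \mathcal{B}_u(G)$ with $\mathbf{x}[B_i,u] = 0$ and $ua(B_i,u) = True$. Here the plain termwise IH is too loose, and one must use the tighter consequence that the IH applied to the blocking branches yields $cap(C'[B_i,u]) + |\interior{C'[B_i,u]}| \leq |\interior{C[B_i, u]}|$, tying any positive value of $cap(C'[B_i,u])$ to a strict drop in $|\interior{C'[B_i,u]}|$. Combined with Observation~\ref{obs:sum} (which rewrites $|\interior{C[u,B]}|$ as a sum over $\beta(u,B)$), this should close the bound. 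Careful case analysis is required depending on whether both $B_1, B_2$ lie in $\beta(u,B)$ or whether one of them equals $B$ itself, and on whether $cap(C'[u,B])$ is in its zero branch or its formula branch; in each scenario the ``$-1$'' contributions from the blocking branches must be tracked explicitly against the $ua(u,B) = True$ bonus to verify that the net bound is at most $|\interior{C[u,B]}|$.
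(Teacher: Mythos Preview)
Your induction on $d(p)$ alone is not enough; there is a genuine gap in the $(u,B)$ case whenever $cap(C'[u,B])$ is computed via the zero branch of Definition~\ref{def:cap}. Concretely, suppose $\mathbf{x}[u,B]$ is given by the formula branch of Lemma~\ref{lem:xuB}, and that $\beta(u,B)$ contains two blocks $B_1,B_2$ with $ua(B_1,u)=ua(B_2,u)=True$ and (say) $\mathbf{x}[B_1,u]=\mathbf{x}[B_2,u]=5$, so that $\mathbf{x}[u,B]=(5-1)+(5-1)+1=9$. Your depth induction hypothesis on each $(B_i,u)$ allows a reachable $C'$ with $cap(C'[B_i,u])=0$ and $|\interior{C'[B_i,u]}|=|\interior{C[B_i,u]}|+5$, which puts $cap(C'[u,B])$ in the zero branch and yields only the bound $|\interior{C'[u,B]}|\le|\interior{C[u,B]}|+10$, strictly worse than the required $|\interior{C[u,B]}|+9$. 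More generally, summing your inequality gives the requirement $\sum_{B'\in\beta(u,B)}ua(B',u)-ua(u,B)\le\sum_{B'\in\beta(u,B)}cap(C'[B',u])$, which fails whenever there are at least two branches with $ua=True$ and all have $cap(C'[B',u])=0$. The same obstruction, in sharper form, appears in the degenerate case $\mathbf{x}[u,B]=0$: there the depth induction hypothesis on a non-blocking branch $B_3$ with large $\mathbf{x}[B_3,u]$ permits $|\interior{C'[B_3,u]}|$ to exceed $|\interior{C[B_3,u]}|$ by that amount, and no amount of ``$-1$'' bookkeeping at the blocking branches can absorb it.

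The paper closes this gap by adding a second, outer layer of induction on $dist(C,C')$, the length of a shortest reconfiguration sequence. Choosing $C'$ to minimize $dist(C,C')$ among counterexamples, one examines the actual sequence $C=C_0\rightarrow\cdots\rightarrow C_k=C'$ and locates the first (or last) index $i$ with $u\in C_i$; Corollary~\ref{cor:ureachable} then controls the cardinalities $|\interior{C_j[B',u]}|$ on the $u$-free stretch, and the outer induction hypothesis applies to the intermediate configuration $C_{i\pm 1}$ (which has smaller $dist$) to derive a contradiction via Lemma~\ref{lem:ua}\ref{it:uatrue}. The point is that the bound you need on $|\interior{C'[u,B]}|-|\interior{C[u,B]}|$ is genuinely a statement about token flow through the cut-vertex $u$, and cannot be recovered from the branch-by-branch depth induction alone; you must use the reconfiguration sequence.
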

\begin{proof}
For any independent set $C'$ of $G$ that is reachable from $C$, define $dist(C,C')$ to be the minimum integer $k$ for which there exist independent sets $C_0,C_1,\ldots,C_k$ such that $C=C_0\rightarrow C_1\rightarrow\cdots\rightarrow C_k=C'$. Suppose for the sake of contradiction that there exists some independent set $C'$ of $G$ that is reachable from $C$ such that for some $p\in\mathcal{P}$, $\mathbf{x}[p]<cap(C'[p])+|\interior{C'[p]}|-|\interior{C[p]}|$. We assume that $C'$ is such an independent set for which $dist(C,C')$ is as small as possible; i.e. for all independent sets $C''$ of $G$ such that $dist(C,C'')<dist(C,C')$, we have that for all $p\in\mathcal{P}$, $\mathbf{x}[p]\geq cap(C''[p])+|\interior{C''[p]}|-|\interior{C[p]}|$. Note that it is possible that $C'=C$.

Let $p\in\mathcal{P}$ such that $\mathbf{x}[p]<cap(C'[p])+|\interior{C'[p]}|-|\interior{C[p]}|$ having smallest depth; i.e. for all $p'\in\mathcal{P}$ such that $d(p')<d(p)$, we have $\mathbf{x}[p']\geq cap(C'[p'])+|\interior{C'[p']}|-|\interior{C[p']}|$.
Suppose that $d(p)=0$.
Then it follows from Definition~\ref{def:depth} that $p=(B,u)$ for some $u\in V_{cut}(G)$ and $B\in\mathcal{B}_u(G)$, and that $G[p]$ is a complete graph. Then by Lemma~\ref{lem:xBu}, we know that $\mathbf{x}[p]=\mathbf{x}[B,u]=ua(B,u)-|B\cap\interior{C[B,u]}|=1-|\interior{C[B,u]}|$. By Observation~\ref{obs:capcomplete}, we have $cap(C'[B,u])=1-|\interior{C'[B,u]}|$, which can be rewritten as $cap(C'[B,u])+|\interior{C'[B,u]}|=1$. We thus have that $\mathbf{x}[B,u]=cap(C'[B,u])+|\interior{C'[B,u]}|-|\interior{C[B,u]}|$, which contradicts the assumption that $\mathbf{x}[p]<cap(C'[p])+|\interior{C'[p]}|-|\interior{C[p]}|$.

So we can assume that $d(p)>0$.

Suppose that $p=(B,u)$ for some $u\in V_{cut}(G)$ and $B\in\mathcal{B}_u(G)$. By Lemma~\ref{lem:xBu}, we have that $\mathbf{x}[B,u]=\displaystyle\sum_{\forall u'\in \kappa(B,u)} \mathbf{x}[u',B]+ua(B,u)-|B \cap\interior{C[B,u]}|$. Since for all $u'\in\kappa(B,u)$, we have $d(u',B)<d(B,u)$, we have by our choice of $p$ that $\mathbf{x}[u',B]\geq cap(C'[u',B])+|\interior{C'[u',B]}|-|\interior{C[u',B]}|$.
Thus we have:
\begin{eqnarray*}
\mathbf{x}[B,u]&\geq&\sum_{\forall u'\in \kappa(B,u)} \Big(cap(C'[u',B])+|\interior{C'[u',B]}|-|\interior{C[u',B]}|\Big)+ua(B,u)-|B \cap\interior{C[B,u]}|\\
&=&|\interior{C'[B,u]}|-|B\cap\interior{C'[B,u]}|-\big(|\interior{C[B,u]}|-|B\cap\interior{C[B,u]}|\big)\\&&\quad\quad\quad+ua(B,u)-|B \cap\interior{C[B,u]}|+\sum_{\forall u'\in \kappa(B,u)} cap(C'[u',B])\mbox{\hspace{0.2in} (By Observation~\ref{obs:sum})}\\
&=&cap(C'[B,u])+|\interior{C'[B,u]}|-|\interior{C[B,u]}| \mbox{\hspace{1.6in} (By Definition~\ref{def:cap})}
\end{eqnarray*}
which contradicts the assumption that $\mathbf{x}[p]<cap(C'[p])+|\interior{C'[p]}|-|\interior{C[p]}|$.

Next, suppose that $p=(u,B)$ for some $u\in V_{cut}(G)$ and $B\in\mathcal{B}_u(G)$.

Suppose first that there exist $B',B''\in\mathcal{B}_u(G)$ such that $\mathbf{x}[B',u]=\mathbf{x}[B'',u]=0$ and $ua(B',u)=ua(B'',u)=True$. Then we have by Lemma~\ref{lem:xuB} that $\mathbf{x}[u,B]=0$.
Note that at least one of $B',B''$ is in $\beta(u,B)$. We assume without loss of generality that $B'\in\beta(u,B)$.
If $u\in C$, then $u$ is not under attack in $C[B',u]$. As $ua(B',u)=True$, we then have by Lemma~\ref{lem:ua}\ref{it:uatrue} that $cap(C[B',u])>0$. Since $d(B',u)<d(p)$, we have by our choice of $p$ that $\mathbf{x}[B',u]\geq cap(C[B',u])+|\interior{C[B',u]}|-|\interior{C[B',u]}|=cap(C[B',u])>0$, which contradicts the fact that $\mathbf{x}[B',u]=0$. So we can conclude that $u\notin C$.


Since $C'$ is reachable from $C$, there exist independent sets $C_0,C_1,\ldots,C_k$ of $G$ such that $k=dist(C,C')$ and $C=C_0\rightarrow C_1\rightarrow\cdots\rightarrow C_k=C'$. If $u\notin C_0\cup C_1\cup\cdots\cup C_k$, then by Corollary~\ref{cor:ureachable}\ref{it:cardinality}, we have that $|\interior{C'[u,B]}|=|\interior{C[u,B]}|$. Then $cap(C'[u,B])+|\interior{C'[u,B]}|-|\interior{C[u,B]}|=cap(C'[u,B])$, which means that we will be done if we show that $cap(C'[u,B])=0$. Since $u\notin C_0\cup C_1\cup\cdots\cup C_k$, we have by Corollary~\ref{cor:ureachable}\ref{it:cardinality} that $|\interior{C'[B',u]}|=|\interior{C[B',u]}|$ and $|\interior{C'[B'',u]}|=|\interior{C[B'',u]}|$.
Recall that since $d(B',u)<d(p)$, we have by our choice of $p$ that $0=\mathbf{x}[B',u]\geq cap(C'[B',u])+|\interior{C'[B',u]}|-|\interior{C[B',u]}|$. We can then conclude using Lemma~\ref{lem:ua}\ref{it:capnonnegative} that $cap(C'[B',u])=0$. We now have by Definition~\ref{def:cap} that $cap(C'[u,B])=0$, and so we are done. We can thus assume that $u\in C_0\cup C_1\cup\cdots\cup C_k$. Then there exists a smallest integer $i$ in $\{0,1,\ldots,k\}$ such that $u\in C_i$. By our earlier observation that $u\notin C$, we have that $i>0$ (this also means that $C'\neq C$). By our choice of $i$, we have $u\notin C_0\cup C_1\cup\cdots\cup C_{i-1}$ and since $C_{i-1}\rightarrow C_i$, we also have that $u$ is not under attack in at least one of $C_{i-1}[B',u]$ or $C_{i-1}[B'',u]$. Let $B^\star\in\{B',B''\}$ be such that that $u$ is not under attack in $C_{i-1}[B^\star,u]$. Then by Lemma~\ref{lem:ua}\ref{it:uatrue}, we have that $cap(C_{i-1}[B^\star,u])>0$. Clearly, $C_{i-1}$ is reachable from $C$, and $dist(C,C_{i-1})=i-1<k=dist(C,C')$. Then we have by our choice of $C'$ that $0=\mathbf{x}[B^\star,u]\geq cap(C_{i-1}[B^\star,u])+|\interior{C_{i-1}[B^\star,u]}|-|\interior{C[B^\star,u]}|$, which implies that $|\interior{C_{i-1}[B^\star,u]}|<|\interior{C[B^\star,u]}|$. But since $u\notin C_0\cup C_1\cup\cdots\cup C_{i-1}$, we have by Corollary~\ref{cor:ureachable}\ref{it:cardinality} that $|\interior{C_{i-1}[B^\star,u]}|=|\interior{C[B^\star,u]}|$, which is a contradiction.

So we can assume that there does not exist $B',B''\in\mathcal{B}_u(G)$ such that $\mathbf{x}[B',u]=\mathbf{x}[B'',u]=0$ and $ua(B',u)=ua(B'',u)=True$. Then by Lemma~\ref{lem:xuB}, we have that $\mathbf{x}[u,B]=\sum\limits_{\forall B'\in\beta(u,B)}\Big(\mathbf{x}[B', u]-ua(B', u)\Big)+ua(u,B)$. For all $B'\in\beta(u,B)$, since $d(B',u)<d(p)$, we have by our choice of $p$ that $\mathbf{x}[B',u]\geq cap(C'[B',u])+|\interior{C'[B',u]}|-|\interior{C[B',u]}|$.
We thus get
\begin{eqnarray*}
\mathbf{x}[u,B]&\geq&\sum\limits_{\forall B'\in\beta(u,B)}\Big(cap(C'[B',u])+|\interior{C'[B', u]}|-|\interior{C[B',u]}|-ua(B', u)\Big)+ua(u,B)\\
&=&|\interior{C'[u,B]}|-|\interior{C[u,B]}|+\sum\limits_{\forall B'\in\beta(u,B)}\Big(cap(C'[B',u])-ua(B', u)\Big)+ua(u,B)\\&&\mbox{\hspace{3in} (By Observation~\ref{obs:sum})}
\end{eqnarray*}
Recall that $\mathbf{x}[p]<cap(C'[p])+|\interior{C'[p]}|-|\interior{C[p]}|$; i.e. $\mathbf{x}[u,B]<cap(C'[u,B])+|\interior{C'[u,B]}|-|\interior{C[u,B]}|$. This means that $cap(C'[u,B])\neq \sum\limits_{\forall B'\in\beta(u,B)}\Big(cap(C'[B',u])-ua(B', u)\Big)+ua(u,B)$, which implies by Definition~\ref{def:cap} that $cap(C'[u,B])=0$. This implies that $\mathbf{x}[u,B]<|\interior{C'[u,B]}|-|\interior{C[u,B]}|$. Since $\mathbf{x}[u,B]\geq 0$ (by Observation~\ref{obs:ymonotone}), this means that $|\interior{C'[u,B]}|-|\interior{C[u,B]}|>0$, or in other words, $|\interior{C'[u,B]}|>|\interior{C[u,B]}|$.

As $\mathbf{x}[u,B]=\sum\limits_{\forall B'\in\beta(u,B)}\Big(\mathbf{x}[B', u]-ua(B', u)\Big)+ua(u,B)$, we now have by Observation~\ref{obs:sum} that $\sum\limits_{\forall B'\in\beta(u,B)}\Big(\mathbf{x}[B', u]-ua(B', u)\Big)+ua(u,B)<|\interior{C'[u,B]}|-|\interior{C[u,B]}|=\sum\limits_{\forall B'\in\beta(u,B)}\Big(|\interior{C'[B', u]}|-|\interior{C[B',u]}|\Big)$. Recall that for each $B'\in\beta(u,B)$, $\mathbf{x}[B',u]\geq cap(C'[B',u])+|\interior{C'[B',u]}|-|\interior{C[B',u]}|$, and since $cap(C'[B',u])\geq 0$ by Lemma~\ref{lem:ua}\ref{it:capnonnegative}, we get $\mathbf{x}[B',u]\geq |\interior{C'[B',u]}|-|\interior{C[B',u]}|$.
Combining, we get $$\sum\limits_{\forall B'\in\beta(u,B)}\Big(|\interior{C'[B',u]}|-|\interior{C[B',u]}|-ua(B', u)\Big)+ua(u,B)<\sum\limits_{\forall B'\in\beta(u,B)}\Big(|\interior{C'[B', u]}|-|\interior{C[B',u]}|\Big)$$
If $ua(u,B)=False$, then we have from Definition~\ref{def:ua} that $ua(B',u)=False$ for each $B'\in\beta(u,B)$, and we will get a contradiction to the above inequality. So we can assume that $ua(u,B)=True$.
It now follows from the inequality $\sum\limits_{\forall B'\in\beta(u,B)}\Big(\mathbf{x}[B', u]-ua(B', u)\Big)+ua(u,B)<\sum\limits_{\forall B'\in\beta(u,B)}\Big(|\interior{C'[B', u]}|-|\interior{C[B',u]}|\Big)$ that $\sum\limits_{\forall B'\in\beta(u,B)}\Big(\mathbf{x}[B', u]-\big(|\interior{C'[B', u]}|-|\interior{C[B',u]}|\big)-ua(B', u)\Big)<-1$.
Recalling that $\mathbf{x}[B',u]\geq |\interior{C'[B',u]}|-|\interior{C[B',u]}|$ for each $B'\in\beta(u,B)$, we can now conclude that there exist distinct $B',B''\in\beta(u,B)$ such that
$\mathbf{x}[B',u]=|\interior{C'[B',u]}|-|\interior{C[B',u]}|$, $\mathbf{x}[B'',u]=|\interior{C'[B'',u]}|-|\interior{C[B'',u]}|$, and $ua(B',u)=ua(B'',u)=True$. Note that this implies that $cap(C'[B',u])=cap(C'[B'',u])=0$, since we know that $\mathbf{x}[B',u]\geq cap(C'[B',u])+|\interior{C'[B',u]}|-|\interior{C[B',u]}|$ and $\mathbf{x}[B'',u]\geq cap(C'[B'',u])+|\interior{C'[B'',u]}|-|\interior{C[B'',u]}|$, and also that $cap(C'[B',u])\geq 0$ and $cap(C'[B'',u])\geq 0$ by Lemma~\ref{lem:ua}\ref{it:capnonnegative}. Then by Lemma~\ref{lem:ua}\ref{it:uatrue}, we have that $u$ is under attack in both $C'[B',u]$ and $C'[B'',u]$.

Since $C'$ is reachable from $C$, there exist independent sets $C_0,C_1,\ldots,C_k$ of $G$, where $k\geq 0$, such that $C=C_0\rightarrow C_1\rightarrow\cdots\rightarrow C_k=C'$. Since $|\interior{C'[u,B]}|>|\interior{C[u,B]}|$, we have by Corollary~\ref{cor:ureachable}\ref{it:cardinality} that $u\in C_0\cup C_1\cup\cdots\cup C_k$. Let $i$ be the maximum integer in $\{0,1,\ldots,k\}$ such that $u\in C_i$. Clearly, $i<k$ since $u$ is under attack in $C'$. Since $C_{i+1}\rightarrow C_{i+2}\rightarrow\cdots\rightarrow C_k$, and we have by our choice of $i$ that $u\notin C_{i+1}\cup C_{i+2}\cup\cdots\cup C_k$, we can conclude using Corollary~\ref{cor:ureachable}\ref{it:cardinality} that $|\interior{C_{i+1}[B',u]}|=|\interior{C'[B',u]}|$ and $|\interior{C_{i+1}[B'',u]}|=|\interior{C'[B'',u]}|$. Since $C_i\rightarrow C_{i+1}$, we also have that $u$ is not under attack in at least one of $C_{i+1}[B',u]$, $C_{i+1}[B'',u]$. Note that this implies that $i+1<k$, or in other words, $C_{i+1}\neq C'$. We assume without loss of generality that $u$ is not under attack in $C_{i+1}[B',u]$. Then by Lemma~\ref{lem:ua}\ref{it:uatrue}, we have that $cap(C_{i+1}[B',u])>0$. As $C_{i+1}$ is reachable from $C$, and $dist(C,C_{i+1})=i+1<k=dist(C,C')$, we have by our choice of $C'$ that $\mathbf{x}[B',u]\geq cap(C_{i+1}[B',u])+|\interior{C_{i+1}[B',u]}|-|\interior{C[B',u]}|=cap(C_{i+1}[B',u])+|\interior{C'[B',u]}|-|\interior{C[B',u]}|$, which implies that $\mathbf{x}[B',u]>|\interior{C'[B',u]}|-|\interior{C[B',u]}|$. This contradicts the fact that $\mathbf{x}[B',u]=|\interior{C'[B',u]}|-|\interior{C[B',u]}|$.
\end{proof}
\begin{corollary}
For each $p\in\mathcal{P}$, $\mathbf{x}[p]=pot(C,p)$. 
\end{corollary}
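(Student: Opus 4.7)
The plan is to derive the corollary immediately from the two preceding lemmas, Lemma~\ref{lem:thereexists} and Lemma~\ref{lem:forevery}, which together sandwich $pot(C,p)$ exactly at $\mathbf{x}[p]$. Recall that by Definition~\ref{def:pot}, $pot(C,p) = \max\{cap(C'[p]) + |\interior{C'[p]}| - |\interior{C[p]}| : C'$ is an independent set of $G$ reachable from $C\}$, and the whole point of Lemmas~\ref{lem:thereexists} and~\ref{lem:forevery} is to produce matching lower and upper bounds on this maximum.

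First, for the upper bound $\mathbf{x}[p] \geq pot(C,p)$: fix any independent set $C'$ reachable from $C$. Lemma~\ref{lem:forevery} gives exactly $\mathbf{x}[p] \geq cap(C'[p]) + |\interior{C'[p]}| - |\interior{C[p]}|$. Since this holds for every $C'$ reachable from $C$, taking the maximum over all such $C'$ yields $\mathbf{x}[p] \geq pot(C,p)$.

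Next, for the lower bound $\mathbf{x}[p] \leq pot(C,p)$: Lemma~\ref{lem:thereexists} supplies a particular independent set $C'$ reachable from $C$ for which $cap(C'[p]) \geq \mathbf{x}[p]$ and $|\interior{C'[p]}| = |\interior{C[p]}|$. For this $C'$ we obtain
\[
cap(C'[p]) + |\interior{C'[p]}| - |\interior{C[p]}| \;=\; cap(C'[p]) \;\geq\; \mathbf{x}[p],
\]
and since $pot(C,p)$ is the maximum of such quantities over reachable independent sets, it is at least the value realised by this particular $C'$, giving $pot(C,p) \geq \mathbf{x}[p]$. Combining the two inequalities yields $\mathbf{x}[p] = pot(C,p)$, which completes the proof.

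There is no real obstacle here — all of the technical work has been absorbed into Lemmas~\ref{lem:thereexists} and~\ref{lem:forevery}, so the corollary is literally a two-line combination of those bounds. The only thing to watch for is to cite both lemmas for each $p \in \mathcal{P}$ uniformly (no special-casing on whether $p = (u,B)$ or $p = (B,u)$), since both lemmas are already phrased for arbitrary $p \in \mathcal{P}$.
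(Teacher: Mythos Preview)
Your proposal is correct and follows essentially the same approach as the paper's own proof: both derive $pot(C,p)\leq\mathbf{x}[p]$ from Lemma~\ref{lem:forevery} by taking the maximum over all reachable $C'$, and $pot(C,p)\geq\mathbf{x}[p]$ from the particular $C'$ supplied by Lemma~\ref{lem:thereexists}.
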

\begin{proof}
Clearly, for each independent set $C'$ of $G$ that is reachable from $C$, we have from Lemma~\ref{lem:forevery} that $cap(C'[p])+|\interior{C'[p]}|-|\interior{C[p]}|\leq\mathbf{x}[p]$. It follows that $pot(C,p)\leq\mathbf{x}[p]$. From Lemma~\ref{lem:thereexists}, we have that there exists an independent set $C'$ of $G$ that is reachable from $C$ such that $cap(C'[p])\geq\mathbf{x}[p]$ and $|\interior{C'[p]}|=|\interior{C[p]}|$. Then we have that $cap(C'[p])+|\interior{C'[p]}|-|\interior{C[p]}|=cap(C'[p])\geq\mathbf{x}[p]$. This implies that $pot(C,p)\geq\mathbf{x}[p]$. This completes the proof.
\end{proof}
By the above corollary, we can reword Lemmas~\ref{lem:xBu}, \ref{lem:xuB}, and~\ref{lem:thereexists} as follows.
\begin{corollary}\label{cor:pot}
Let $G$ be any connected block graph and $C$ be an independent set of $G$.
\begin{enumerate}
\renewcommand{\labelenumi}{(\roman{enumi})}
\renewcommand{\theenumi}{(\roman{enumi})}
\item\label{cor:xBu} For any $u\in V_{cut}(G)$ and $B\in\mathcal{B}_u(G)$, $$pot_G(C,(B,u))=\sum\limits_{\forall u'\in\kappa_G(B,u)} pot_G(C,(u',B))+ua_G(B,u)-|B \cap \interior{C[B,u]}|$$
\item\label{cor:xuB} For any $u\in V_{cut}(G)$ and $B\in\mathcal{B}_u(G)$,

if $\exists B',B''\in\mathcal{B}_u(G)$ such that $pot_G(C,(B',u))=pot_G(C,(B'',u))=0$ and $ua_G(B',u)=ua_G(B'',u)=True$, then $$pot_G(C,(u,B))=0$$

and otherwise,
$$pot_G(C,(u,B))=\sum\limits_{\forall B'\in\beta_G(u,B)}\Big(pot_G(C,(B', u))-ua_G(B', u)\Big)+ua_G(u,B)$$
\item\label{cor:thereexists} For each $p\in\mathcal{P}_G$, there exists an independent set $C'$ of $G$ that is reachable from $C$ such that $cap(C'[p])\geq pot_G(C,p)$ and $|\interior{C'[p]}|=|\interior{C[p]}|$.
\end{enumerate}
\end{corollary}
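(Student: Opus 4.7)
The plan is to observe that this corollary is a direct restatement of Lemmas~\ref{lem:xBu}, \ref{lem:xuB}, and~\ref{lem:thereexists} after applying the pointwise identity $\mathbf{x}[p] = pot_G(C,p)$ for every $p \in \mathcal{P}_G$, which is exactly what the immediately preceding corollary establishes. No new combinatorial argument is required; the substance has already been carried out in Lemma~\ref{lem:thereexists} (which supplies the lower bound $pot_G(C,p) \geq \mathbf{x}[p]$) and Lemma~\ref{lem:forevery} (which supplies the upper bound $pot_G(C,p) \leq \mathbf{x}[p]$).

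For part~\ref{cor:xBu}, I would take the equation from Lemma~\ref{lem:xBu}, namely $\mathbf{x}[B,u] = \sum_{u' \in \kappa_G(B,u)} \mathbf{x}[u',B] + ua_G(B,u) - |B \cap \interior{C[B,u]}|$, and rewrite each $\mathbf{x}[\cdot]$ as $pot_G(C,\cdot)$; the right-hand side becomes the expression required by the corollary. Part~\ref{cor:xuB} follows analogously from Lemma~\ref{lem:xuB}: the dichotomy condition ``there exist $B', B'' \in \mathcal{B}_u(G)$ with $\mathbf{x}[B',u] = \mathbf{x}[B'',u] = 0$ and $ua_G(B',u) = ua_G(B'',u) = True$'' translates after substitution into the identical condition phrased in terms of $pot_G$, and the value $0$ or the summation formula in each case translates in the same way. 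Part~\ref{cor:thereexists} comes from Lemma~\ref{lem:thereexists} simply by replacing $\mathbf{x}[p]$ with $pot_G(C,p)$ in the inequality $cap(C'[p]) \geq \mathbf{x}[p]$.

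The only point that requires any care is ensuring that the case split in part~\ref{cor:xuB} is preserved under the substitution: the hypothesis and the two branches each depend on the values $\mathbf{x}[B',u]$ and $\mathbf{x}[B'',u]$, but since these are pointwise equal to $pot_G(C,(B',u))$ and $pot_G(C,(B'',u))$ by the preceding corollary, the two formulations partition the possibilities in exactly the same way, and the formulas in each branch correspond. Thus I do not anticipate any substantive obstacle; the proof is a one-line appeal to the previous corollary applied to each of the three lemmas in turn.
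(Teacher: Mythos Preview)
Your proposal is correct and matches the paper's own approach exactly: the paper introduces Corollary~\ref{cor:pot} with the sentence ``By the above corollary, we can reword Lemmas~\ref{lem:xBu}, \ref{lem:xuB}, and~\ref{lem:thereexists} as follows,'' and gives no further argument. Your observation that the case split in part~\ref{cor:xuB} is preserved under the substitution $\mathbf{x}[\cdot]\mapsto pot_G(C,\cdot)$ is the only detail worth noting, and you have handled it correctly.
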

Also, Lemma~\ref{lem:polytime} can now be reworded as:
\begin{theorem}\label{thm:polytime}
Given a connected block graph $G$ and an independent set $C$ of it, the procedure {\sc Compute-potentials}$(G,C)$ runs in time $O(|E(G)|^4)$ and computes the value $pot_G(C,p)$ for each $p\in\mathcal{P}_G$.
\end{theorem}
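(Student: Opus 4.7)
The plan is straightforward because Theorem~\ref{thm:polytime} is essentially a packaging of results already established in Section~\ref{sec:pot}. The running-time bound is exactly Lemma~\ref{lem:polytime}, and the correctness claim $\mathbf{x}[p] = pot_G(C,p)$ for each $p\in\mathcal{P}_G$ is the content of the (unnamed) corollary immediately preceding the theorem. So I would simply invoke them in turn.

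For the running time, the argument is: by Observations~\ref{obs:uniquep} and~\ref{obs:ymonotone}, the outer \textbf{while} loop of Algorithm~\ref{alg:cap} executes at most $\sum_{p\in\mathcal{P}}\mathbf{x}[p]$ times; Lemma~\ref{lem:xupperbound} bounds each $\mathbf{x}[p]$ by $|\mathcal{B}(G[p])|\leq m$, giving at most $|\mathcal{P}|\cdot m = O(m^2)$ iterations. Each iteration runs the inner \textbf{for} loop $|\mathcal{P}|$ times, and each inner step reduces to a constant number of adjacency queries on the block tree $\mathcal{T}(G)$ (whose adjacency list is precomputed in linear time), costing $O(n+m)$ each. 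The total is therefore $O(m(n+m)^3) = O(|E(G)|^4)$, using $m\leq|E(G)|$. For correctness, I would derive $\mathbf{x}[p] = pot_G(C,p)$ via two inequalities: Lemma~\ref{lem:forevery} gives $cap(C'[p]) + |\interior{C'[p]}| - |\interior{C[p]}| \leq \mathbf{x}[p]$ for every $C'$ reachable from $C$, so taking the maximum yields $pot_G(C,p)\leq\mathbf{x}[p]$; conversely, Lemma~\ref{lem:thereexists} exhibits a specific reachable $C'$ with $cap(C'[p])\geq\mathbf{x}[p]$ and $|\interior{C'[p]}| = |\interior{C[p]}|$, which gives $pot_G(C,p)\geq\mathbf{x}[p]$.

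The real conceptual content, which I would not need to re-prove at this step, lies earlier in the section. The main obstacle in the overall development is Lemma~\ref{lem:evolution}, which translates the purely algebraic updates performed on lines~\ref{line:y'Bu} and~\ref{line:y'uB} into genuine token slides realizing the claimed $k$-accessible independent sets. Its proof is a double induction — outer on the depth $d(p)$, inner on the iteration index $i$ — and relies on the local movement lemmas of Section~\ref{sec:local}, most crucially Lemma~\ref{lemma:blockshift}, which is what permits an increased value at a child $(B',u)$ to be propagated across the cut-vertex $u$ and made available in neighbouring branches. Once Lemma~\ref{lem:evolution} is in hand, Theorem~\ref{thm:polytime} is an immediate consequence of the two cited lemmas and the preceding corollary.
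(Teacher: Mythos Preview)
Your proposal is correct and matches the paper's approach exactly: the paper itself presents Theorem~\ref{thm:polytime} as merely the rewording of Lemma~\ref{lem:polytime} once the preceding corollary establishes $\mathbf{x}[p]=pot_G(C,p)$, and your derivation of that corollary from Lemmas~\ref{lem:thereexists} and~\ref{lem:forevery} is precisely how the paper proves it.
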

\section{Rigid vertices}\label{sec:rigid}
Let $G$ be a connected block graph and $C$ be any independent set of $G$.
\begin{definition}\label{def:rigid}
A vertex $u\in V_{cut}(G)$ is said to be \emph{rigid for $C$}, if there exist $B,B'\in\mathcal{B}_u(G)$ such that $pot_G(C,(B,u))=pot_G(C,(B',u))=0$ and $ua_G(B,u)=ua_G(B',u)=True$.
\end{definition}
With the above definition, the following observation follows directly from Corollary~\ref{cor:pot}\ref{cor:xuB} and Definition~\ref{def:ua}.
\begin{observation}\label{obs:rigidua}
If $u\in V_{cut}(G)$ is rigid for $C$, then for all $B\in\mathcal{B}_u(G)$, $ua_G(u,B)=True$ and $pot_G(C,(u,B))=0$.
\end{observation}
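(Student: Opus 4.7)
The plan is to derive both conclusions directly from the defining witnesses of rigidity, using Corollary~\ref{cor:pot}\ref{cor:xuB} for the potential part and Definition~\ref{def:ua} for the under-attack part. Since $u$ is rigid for $C$, fix the two (implicitly distinct) blocks $B_1, B_2 \in \mathcal{B}_u(G)$ furnished by Definition~\ref{def:rigid}, which satisfy $pot_G(C,(B_1,u))=pot_G(C,(B_2,u))=0$ and $ua_G(B_1,u)=ua_G(B_2,u)=True$.

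For an arbitrary $B \in \mathcal{B}_u(G)$, the potential equality $pot_G(C,(u,B))=0$ is then immediate from Corollary~\ref{cor:pot}\ref{cor:xuB}: the blocks $B_1$ and $B_2$ themselves serve as the required pair in the ``if'' clause of that corollary, so its conclusion forces $pot_G(C,(u,B))=0$. Note that this step does not require $B_1$ or $B_2$ to differ from $B$; it only uses their existence in $\mathcal{B}_u(G)$.

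For the under-attack claim, I would unwind Definition~\ref{def:ua}, which gives
$$ua_G(u,B) \;=\; \bigvee_{B^\ast \in \beta_G(u,B)} ua_G(B^\ast,u),$$
and recall that $\beta_G(u,B)=\mathcal{B}_u(G)\setminus\{B\}$. Since $B_1\neq B_2$, at least one of them lies in $\mathcal{B}_u(G)\setminus\{B\}$, and the corresponding $ua_G(B_i,u)=True$ makes the disjunction evaluate to $True$. Hence $ua_G(u,B)=True$ for every $B\in\mathcal{B}_u(G)$.

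The only subtlety worth flagging is the implicit distinctness of $B_1$ and $B_2$ in Definition~\ref{def:rigid} (matching the usage in Lemma~\ref{lemma:blockshift} and Corollary~\ref{cor:pot}\ref{cor:xuB}); without distinctness, the second bullet would fail when $B$ coincides with the unique witness. Apart from this, the argument is a two-line invocation of previously established machinery, so no genuine obstacle is expected.
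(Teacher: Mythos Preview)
Your argument is correct and is exactly the direct derivation the paper alludes to: it states that the observation ``follows directly from Corollary~\ref{cor:pot}\ref{cor:xuB} and Definition~\ref{def:ua}'' without spelling out the two steps you give. Your remark on the implicit distinctness of the witnesses in Definition~\ref{def:rigid} is apt and matches the intended reading elsewhere in the paper.
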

\begin{lemma}\label{lem:rigidnotoken}
If $u\in V_{cut}(G)$ is rigid for an independent set $C$ of $G$, then there does not exist any independent set $C'$ of $G$ that is reachable from $C$ for which $u\in C'$.
\end{lemma}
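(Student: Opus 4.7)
The plan is to argue by contradiction: suppose there exists an independent set $C'$ reachable from $C$ with $u\in C'$, and chase the first moment a token arrives at $u$. Fix a reconfiguration sequence $C=C_0\rightarrow C_1\rightarrow\cdots\rightarrow C_k=C'$ and let $i$ be the smallest index with $u\in C_i$. Then $u\notin C_0\cup\cdots\cup C_{i-1}$, and the single slide $C_{i-1}\rightarrow C_i$ moves some neighbour $v$ of $u$ onto $u$. Since $u\in C_i$ is in an independent set, $v$ is the unique neighbour of $u$ in $C_{i-1}$, and $v$ lies in exactly one block $B^\star\in\mathcal{B}_u(G)$.

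Now invoke rigidity: by Definition~\ref{def:rigid}, there exist two distinct blocks $B,B'\in\mathcal{B}_u(G)$ with $pot_G(C,(B,u))=pot_G(C,(B',u))=0$ and $ua_G(B,u)=ua_G(B',u)=True$. At least one of $B,B'$ differs from $B^\star$; say $B\neq B^\star$. Then $u$ is not under attack in $C_{i-1}[B,u]$ (its only neighbour in $C_{i-1}$, namely $v$, lies in $G[B^\star,u]$ and not in $G[B,u]$). Since $ua_G(B,u)=True$, Lemma~\ref{lem:ua}\ref{it:uatrue} applied to the $(B,u)$-independent set $C_{i-1}[B,u]$ yields $cap(C_{i-1}[B,u])>0$.

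Next, apply Lemma~\ref{lem:forevery} (equivalently, the definition of $pot$) to the independent set $C_{i-1}$, which is reachable from $C$: this gives
$$0=pot_G(C,(B,u))\ \geq\ cap(C_{i-1}[B,u])+|\interior{C_{i-1}[B,u]}|-|\interior{C[B,u]}|.$$
Combined with $cap(C_{i-1}[B,u])>0$, this forces $|\interior{C_{i-1}[B,u]}|<|\interior{C[B,u]}|$. On the other hand, since $u\notin C_0\cup C_1\cup\cdots\cup C_{i-1}$, Corollary~\ref{cor:ureachable}\ref{it:cardinality} applied to the prefix $C_0\rightarrow\cdots\rightarrow C_{i-1}$ gives $|\interior{C_0[B,u]}|=|\interior{C_{i-1}[B,u]}|$, i.e.\ $|\interior{C[B,u]}|=|\interior{C_{i-1}[B,u]}|$, a contradiction.

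The only subtle step is the block-counting at the moment $u$ first becomes occupied: one must observe that a single slide brings in at most one neighbour of $u$, which therefore lives in a single block of $\mathcal{B}_u(G)$, so the two blocks guaranteed by rigidity cannot both host this threatening neighbour. Everything else is a direct combination of the potential bound (Lemma~\ref{lem:forevery}), the positivity criterion for capacity (Lemma~\ref{lem:ua}\ref{it:uatrue}), and the invariance of token counts in branches avoiding $u$ (Corollary~\ref{cor:ureachable}\ref{it:cardinality}).
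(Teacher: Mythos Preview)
Your argument is essentially the same as the paper's, but you have a small gap: you implicitly assume $i\geq 1$ when you speak of ``the single slide $C_{i-1}\rightarrow C_i$''. The case $i=0$ (i.e.\ $u\in C$ already) must also be handled, since the lemma in particular asserts that a rigid vertex cannot lie in $C$ itself. This is easy to patch---if $u\in C$ then $u$ is not under attack in $C[B,u]$, so Lemma~\ref{lem:ua}\ref{it:uatrue} gives $cap(C[B,u])>0$, whence $pot_G(C,(B,u))\geq cap(C[B,u])>0$, contradicting rigidity---but it should be stated.

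The paper avoids this case split by working with $C_i$ rather than $C_{i-1}$: once $u\in C_i$, the vertex $u$ is trivially not under attack in $C_i[B,u]$ for \emph{every} $B\in\mathcal{B}_u(G)$, and one only needs to argue that $|\interior{C_i[B,u]}|=|\interior{C[B,u]}|$ for at least one of the two rigid blocks (which holds for both when $i=0$, and for at least one when $i>0$ since a single slide changes the count in at most one block). This gives a slightly cleaner uniform treatment, but your version with $C_{i-1}$ is equally valid once the $i=0$ case is addressed.
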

\begin{proof}
Since $u$ is rigid for $C$ we know that there exist $B,B'\in\mathcal{B}_u(G)$ such that $pot_G(C,(B,u))=pot_G(C,(B',u))=0$ and $ua_G(B,u)=ua_G(B',u)=True$.
Suppose that there exists an independent set $C'$ of $G$ that is reachable from $C$ such that $u\in C'$. Let $C_0,C_1,\ldots,C_k$ be independent sets of $G$ such that $C=C_0\rightarrow C_1\rightarrow\cdots\rightarrow C_k=C'$. Let $i$ be the minimum integer in $\{0,1,\ldots,k\}$ such that $u\in C_i$. If $i=0$, then we clearly have $|\interior{C_i[B,u]}|=|\interior{C[B,u]}|$ and $|\interior{C_i[B',u]}|=|\interior{C[B',u]}|$. On the other hand, if $i>0$, then since $u\notin C_0\cup C_1\cup\cdots\cup C_{i-1}$, we know by Corollary~\ref{cor:ureachable}\ref{it:cardinality} that $|\interior{C_{i-1}[B,u]}|=|\interior{C[B,u]}|$ and $|\interior{C_{i-1}[B',u]}|=|\interior{C[B',u]}|$, and since $C_{i-1}\rightarrow C_i$, this implies that either $|\interior{C_i[B,u]}|=|\interior{C[B,u]}|$ or $|\interior{C_i[B',u]}|=|\interior{C[B',u]}|$. Combining the two cases, we assume without loss of generality that $|\interior{C_i[B,u]}|=|\interior{C[B,u]}|$. Since $u\in C_i$, we know that $u$ is not under attack in $C_i[B,u]$. Then since $C_i$ is reachable from $C$, we have by Definition~\ref{def:pot} that $cap(C_i[B,u])+|\interior{C_i[B,u]}|-|\interior{C[B,u]}|\leq pot_G(C,(B,u))=0$, which implies that $cap(C_i[B,u])=0$. As $u$ is not under attack in $C_i[B,u]$, we now have a contradiction to Lemma~\ref{lem:ua}\ref{it:uatrue}. This completes the proof.
\end{proof}
\begin{lemma}\label{lem:nonrigid}
Let $u\in V_{cut}(G)$ such that $u$ is not rigid for $C$ and let $B\in\mathcal{B}_u(G)$. If $u$ is under attack in $C[u,B]$ (i.e. $|N_G(u)\cap C[u,B]|\geq 1$), then there is an independent set $C'$ of $G$ that is $(u,B)$-reachable from $C$ such that $|N_G(u)\cap C'[u,B]|=1$.
\end{lemma}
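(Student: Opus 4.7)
Let $S=\{B'\in\beta(u,B):u\text{ is under attack in }C[B',u]\}$; since each block is a clique and contains at most one token of $C$, we have $|S|=|N_G(u)\cap C[u,B]|\ge 1$, and also $u\notin C$. If $|S|=1$ take $C'=C$, so henceforth assume $|S|\ge 2$.

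The plan is to select a block $B_{keep}\in S$ and then, via Lemma~\ref{lem:reconfig} with $\mathcal{A}=\beta(u,B)$, replace $C[B_i,u]$ for each $B_i\in S\setminus\{B_{keep}\}$ by a $(B_i,u)$-independent set $Q_{B_i}$ that is strongly accessible from $C[B_i,u]$ and in which $u$ is not under attack, while leaving every other $C[B_i,u]$ intact (taking $Q_{B_i}=C[B_i,u]$ for $B_i\in\beta(u,B)\setminus(S\setminus\{B_{keep}\})$, which is trivially strongly accessible from itself by Observation~\ref{obs:access}(iii)). The resulting $C'=\bigcup_{B_i\in\beta(u,B)} Q_{B_i}\cup C[B,u]$ is $(u,B)$-reachable from $C$ and has $|N_G(u)\cap C'[u,B]|=1$, since only $B_{keep}$ contributes an attacking neighbour.

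The choice of $B_{keep}$ is dictated by non-rigidity: at most one $B^*\in\mathcal{B}_u(G)$ satisfies $pot_G(C,(B^*,u))=0$ and $ua_G(B^*,u)=True$. Whenever such $B^*$ exists, Lemma~\ref{lem:ua}(ii) forces $u$ to be under attack in $C[B^*,u]$---otherwise $cap(C[B^*,u])>0$ would give $pot_G(C,(B^*,u))>0$, a contradiction---so $B^*\in S$ whenever $B^*\in\beta(u,B)$. Put $B_{keep}=B^*$ in that case; otherwise let $B_{keep}$ be any element of $S$. This guarantees that every $B_i\in\beta(u,B)\setminus\{B_{keep}\}$ satisfies $pot_G(C,(B_i,u))>0$ or $ua_G(B_i,u)=False$.

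To construct $Q_{B_i}$ for a given $B_i\in S\setminus\{B_{keep}\}$, there are two sub-cases. The easy one is when $cap(C[B_i,u])>0$ or $ua_G(B_i,u)=False$: Lemma~\ref{lem:uafalse} immediately supplies the desired $Q_{B_i}$. The hard sub-case---and the main obstacle---is when $ua_G(B_i,u)=True$ and $cap(C[B_i,u])=0$, yet $pot_G(C,(B_i,u))>0$ by our choice of $B_{keep}$. To handle it, I plan to use Corollary~\ref{cor:pot}(iii) to obtain an independent set $C^{(i)}$ reachable from $C$ with $cap(C^{(i)}[B_i,u])>0$ and $|\interior{C^{(i)}[B_i,u]}|=|\interior{C[B_i,u]}|$, and then argue that the reconfiguration $C\rightsquigarrow C^{(i)}$ can be carried out entirely within $G[B_i,u]\setminus\{u\}$ so that Lemma~\ref{lem:ureachable} makes $C^{(i)}[B_i,u]$ strongly accessible from $C[B_i,u]$. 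Doing so will likely require unfolding Definition~\ref{def:cap} (which, together with $ua=True$ and the single token of $C$ in $B_i\setminus\{u\}$, forces $cap(C[v,B_i])=0$ for every $v\in\kappa(B_i,u)$) and exploiting $pot>0$ inductively along the branches at $v\in\kappa(B_i,u)$ to relocate the attacking token outward. A final application of Lemma~\ref{lem:uafalse} to $C^{(i)}[B_i,u]$, composed with transitivity of strong accessibility, then yields the required $Q_{B_i}$ and completes the proof.
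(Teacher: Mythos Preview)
Your overall strategy matches the paper's up to the split into the easy and hard sub-cases, but the hard sub-case contains a genuine gap. You plan to ``argue that the reconfiguration $C\rightsquigarrow C^{(i)}$ can be carried out entirely within $G[B_i,u]\setminus\{u\}$'' so that Lemma~\ref{lem:ureachable} applies. Corollary~\ref{cor:pot}\ref{cor:thereexists} gives no such localization: the sequence realizing $cap(C^{(i)}[B_i,u])>0$ may well pass through configurations with a token on $u$, and there is no induction in place (you have set up none) that would let you relocate the attacking token ``outward'' inside $G[B_i,u]$ alone. In other words, you are trying to prove that every $B_i\in S\setminus\{B_{keep}\}$ is \emph{good} (admits a strongly accessible $Q_{B_i}$ with $u$ not under attack), but the paper explicitly does \emph{not} attempt this and instead branches on precisely the possibility that some such $B_i$ is not good.

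The paper's fix is as follows. It picks any not-good $B_2$ (with $pot_G(C,(B_2,u))>0$ by the choice of $B_1=B_{keep}$), takes a reconfiguration sequence $C=C_0\to\cdots\to C_k=S$ with $cap(S[B_2,u])>0$, and looks at the first index $j$ with $u\in C_j$. If no such $j$ exists, Lemma~\ref{lem:ureachable} plus Lemma~\ref{lem:uafalse} would make $B_2$ good, a contradiction. So $u\in C_j$ for some least $j$, and then $N_G(u)\cap C_{j-1}=\{v\}$. If $v\notin B_2$, again $B_2$ would be good via Lemma~\ref{lem:ureachable}, a contradiction; hence $v\in B_2\subseteq V(G[u,B])$. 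Now Corollary~\ref{cor:ureachable}\ref{it:blocks} with $\mathcal{A}=\beta_G(u,B)$ gives that $C'=C[B,u]\cup C_{j-1}[u,B]$ is $(u,B)$-reachable from $C$, and $|N_G(u)\cap C'[u,B]|=|N_G(u)\cap C_{j-1}[u,B]|=1$. Note this bypasses the need to produce strongly accessible $Q_{B_i}$'s for all the other blocks: one bad block already yields the desired $C'$ directly. Your plan, by contrast, would still need to handle every remaining $B_i$ even after one is found to be bad, compounding the difficulty.
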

\begin{proof}
Let $\beta_G[u,B]=\{B_1,B_2,\ldots,B_m\}$.
If there exists $B'\in\beta_G[u,B]$ such that $pot_G(C,(B',u))=0$ and $ua_G(B',u)=True$, then without loss of generality, we let $B_1=B'$. Note that in this case, we have from Definition~\ref{def:pot} that $cap(C[B_1,u])=0$. Since $ua_G(B_1,u)=True$, we now have by Lemma~\ref{lem:ua}\ref{it:uatrue} that $u$ is under attack in $C[B_1,u]$.
On the other hand, if there does not exist $B'\in\beta_G(u,B)$ such that $pot_G(C,(B',u))=0$ and $ua_G(B',u)=True$, then we assume that $B_1$ is a block in $\beta_G[u,B]$ such that $u$ is under attack in $C[B_1,u]$ (such a block exists since $u$ is under attack in $C[u,B]$). Thus we have that in any case, $u$ is under attack in $C[B_1,u]$.

For each $i\in\{2,3,\ldots,m\}$, we say that the block $B_i$ is ``good'' if there exists a $(B_i,u)$-independent set $Q_i$ of $G$ that is strongly accessible from $C[B_i,u]$ such that $u$ is not under attack in $Q_i$.

Suppose that $B_i$ is good for each $i\in\{2,3,\ldots,m\}$.
Then for each $i\in\{2,3,\ldots,m\}$, there exists a $(B_i,u)$-independent set $Q_i$ that is strongly accessible from $C[B_i,u]$ in which $u$ is not under attack. Applying Lemma~\ref{lem:reconfig} (setting $\mathcal{A}=\{B_2,B_3,\ldots,B_m\}$ and $Q_{B_i}=Q_i$ for each $i\in\{2,3,\ldots,m\}$), we get that the independent set $C'=C[B_1,u]\cup Q_2\cup Q_3\ldots Q_m\cup C[B,u]$ is $(\{B_2,B_3,\ldots,B_m\},u)$-reachable from $C$. Then clearly, $|N_G(u)\cap C'[u,B]|=1$, and by Observation~\ref{obs:reachsubset}, we have that $C'$ is $(u,B)$-reachable from $C$.

Next, we consider the case when there exists $i\in\{2,3,\ldots,m\}$ such that $B_i$ is not good; i.e. there does not exist a $(B_i,u)$-independent set $Q_i$ of $G$ that is strongly accessible from $C[B_i,u]$ such that $u$ is not under attack in $Q_i$. Without loss of generality, let $B_2$ be one such block. Since $u$ is not rigid for $C$, there exists at most one $B'\in\beta_G[u,B]$ such that $pot_G(C,(B',u))=0$ and $ua_G(B',u)=True$. Then, from our choice of $B_1$, it is clear that $pot_G(C,(B_2,u))>0$. Now, by Corollary~\ref{cor:pot}\ref{cor:thereexists}, there exists an independent set $S$ of $G$ that is reachable from $C$ such that $cap(S[B_2,u])\geq pot_G(C,(B_2,u))>0$. 
Then there exist independent sets $C_0,C_1,\ldots,C_k$ of $G$ such that $C=C_0\rightarrow C_1\rightarrow \cdots \rightarrow C_k=S$, where $k\geq 0$, such that $cap(S[B_2,u])\geq pot_G(C,(B_2,u))>0$. If $u \notin C_0\cup C_1\cup\cdots\cup C_k$, then from Lemma~\ref{lem:ureachable}, we get a $(B_2,u)$-independent set $S[B_2,u]$ of $G$ that is strongly accessible from $C[B_2,u]$ such that $cap(S[B_2,u])>0$. Then we have by Lemma~\ref{lem:uafalse} that there exists a $(B_2,u)$-independent set $Q_2$ that is strongly accessible from $S[B_2,u]$, and therefore also from $C[B_2,u]$, in which $u$ is not under attack. This contradicts our assumption that $B_2$ is not good. So we can assume that $u\in C_0\cup C_1\cup\cdots\cup C_k$. Let $j=\min\{i\in\{0,1,\ldots,k\}\colon u\in C_i\}$. Since $u\in C_j$, we have that $N_G(u)\cap C_j=\emptyset$. Moreover, since $u\notin C_{j-1}$ and $C_{j-1}\rightarrow C_j$, we get that $C_j\setminus C_{j-1}=\{u\}$ and there exists $v\in N_G(u)$ such that $C_{j-1}\setminus C_j=\{v\}$. This implies that $N_G(u)\cap C_{j-1}=\{v\}$. 

Suppose that $v\notin B_2$. Then $C=C_0\rightarrow C_1\rightarrow \cdots \rightarrow C_{j-1}$, $u \notin C_0\cup C_1\cup\cdots\cup C_{j-1}$ and $u$ is not under attack in $C_{j-1}[B_2,u]$. Then by Lemma~\ref{lem:ureachable}, we get that $Q_2=C_{j-1}[B_2,u]$ is a $(B_2,u)$-independent set of $G$ that is strongly accessible from $C[B_2,u]$ such that $u$ is not under attack in $Q_2$. This again contradicts the fact that $B_2$ is not good. So we can assume that $v\in B_2$. Now, since $C=C_0\rightarrow C_1\rightarrow \cdots \rightarrow C_{j-1}$ and $u \notin C_0\cup C_1\cup\cdots\cup C_{j-1}$, by Corollary~\ref{cor:ureachable}\ref{it:blocks} (setting $\mathcal{A}=\beta_G(u,B)=\{B_1,B_2,\ldots,B_m\}$ and $Q_{B_i}=C_{j-1}[B_i,u]$ for each $i\in\{1,2,\ldots,m\}$), we have that $C'=C[B,u]\cup C_{j-1}[B_1,u]\cup C_{j-1}[B_2,u]\cup\cdots\cup C_{j-1}[B_m,u]=C[B,u]\cup C_{j-1}[u,B]$ is an independent set of $G$ that is $(u,B)$-reachable from $C$. Since $|N_G(u)\cap C'[u,B]|=|N_G(u)\cap C_{j-1}[u,B]|=1$, we are done.
\end{proof}
\begin{lemma}\label{lem:rigidmatch}
Let $C_1$ and $C_2$ be two independent sets of $G$ and let $W_1$ and $W_2$ be the set of cut-vertices of $G$ that are rigid for $C_1$ and $C_2$ respectively. If $C_2$ is reachable from $C_1$, then $W_1=W_2$. 
\end{lemma}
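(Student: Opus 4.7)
The plan is to establish $W_1\subseteq W_2$; by the symmetry of the reachability relation, the reverse inclusion follows by swapping the roles of $C_1$ and $C_2$, giving $W_1=W_2$.

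The key observation is that for any two independent sets $C,C'$ of $G$ lying in the same reachability class and any $p\in\mathcal{P}_G$,
$$pot_G(C,p)+|\interior{C[p]}|=pot_G(C',p)+|\interior{C'[p]}|,$$
since by Definition~\ref{def:pot} both sides equal $\max\{cap(D[p])+|\interior{D[p]}|:D\text{ is reachable from }C\}$, a quantity that depends only on the reachability class of $C$. Hence across a single sliding step the value of $pot_G(\cdot,p)$ changes by exactly $|\interior{C[p]}|-|\interior{C'[p]}|$.

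Now suppose $u\in W_1$. By Lemma~\ref{lem:rigidnotoken}, $u\notin D$ for every independent set $D$ reachable from $C_1$; in particular $u\notin C_2$. Fix a reconfiguration sequence $C_1=D_0\rightarrow D_1\rightarrow\cdots\rightarrow D_\ell=C_2$; then $u\notin D_i$ for every $i$. Consequently no step in the sequence can slide a token onto or off of $u$, so each nontrivial step $D_i\rightarrow D_{i+1}$ swaps adjacent vertices $v_1,v_2\in V(G)\setminus\{u\}$. Applying Observation~\ref{obs:blockgraph} to the connected subgraph induced by $\{v_1,v_2\}$, there is a unique $B^\star\in\mathcal{B}_u(G)$ with $\{v_1,v_2\}\subseteq V(G[B^\star,u])\setminus\{u\}$, while $\{v_1,v_2\}\cap V(G[B,u])=\emptyset$ for every other $B\in\mathcal{B}_u(G)$. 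In every case, $|D_i[B,u]|=|D_{i+1}[B,u]|$ for every $B\in\mathcal{B}_u(G)$, and since $u\notin D_i\cup D_{i+1}$ this gives $|\interior{D_i[B,u]}|=|\interior{D_{i+1}[B,u]}|$. Combined with the key observation above, this yields $pot_G(D_i,(B,u))=pot_G(D_{i+1},(B,u))$ for every $B\in\mathcal{B}_u(G)$.

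Chaining these equalities across the sequence yields $pot_G(C_1,(B,u))=pot_G(C_2,(B,u))$ for every $B\in\mathcal{B}_u(G)$. Since $ua_G(B,u)$ depends only on $G$ (and not on the independent set), any pair $B_1,B_2\in\mathcal{B}_u(G)$ witnessing $u\in W_1$ via Definition~\ref{def:rigid} also witnesses $u\in W_2$, establishing the desired inclusion. The main conceptual step is the clean identity in the second paragraph relating $pot_G$ across a reachability class; once it is in hand, Observation~\ref{obs:blockgraph} makes the token-count invariance for each $(B,u)$ immediate, and none of the deeper machinery of Section~\ref{sec:local} or the tools built around $k$-accessibility is required.
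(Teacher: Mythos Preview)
Your proof is correct and takes a cleaner route than the paper's. The paper argues by contradiction: assuming $u\in W_1\setminus W_2$, it finds a block $B$ where the potentials differ, invokes Corollary~\ref{cor:pot}\ref{cor:thereexists} (which rests on the full correctness proof of the {\sc Compute-potentials} algorithm) to realize a $C'_2$ with $cap(C'_2[B,u])\geq 1$, and then deduces $|\interior{C'_2[B,u]}|>|\interior{C_1[B,u]}|$, forcing a token onto $u$ somewhere along a reconfiguration sequence and contradicting Lemma~\ref{lem:rigidnotoken}. Your approach sidesteps Corollary~\ref{cor:pot}\ref{cor:thereexists} entirely: the identity $pot_G(C,p)+|\interior{C[p]}|=\max_D\{cap(D[p])+|\interior{D[p]}|\}$ is immediate from Definition~\ref{def:pot}, and your step-by-step cardinality argument is exactly Corollary~\ref{cor:ureachable}\ref{it:cardinality} (which you essentially re-derive from Observation~\ref{obs:blockgraph}). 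The upshot is a strictly more elementary proof that needs only Definition~\ref{def:pot}, Lemma~\ref{lem:rigidnotoken}, and the block-graph structure, and that decouples this lemma from the algorithmic machinery of Section~\ref{sec:pot}. One small streamlining: since your key identity already compares any two sets in the same reachability class, you can skip the intermediate steps and combine $pot_G(C_1,(B,u))+|\interior{C_1[B,u]}|=pot_G(C_2,(B,u))+|\interior{C_2[B,u]}|$ directly with Corollary~\ref{cor:ureachable}\ref{it:cardinality}.
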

\begin{proof}
Suppose for the sake of contradiction that $C_2$ is reachable from $C_1$ and $W_1\neq W_2$. We can assume without loss of generality that there exists $u\in W_1\setminus W_2$.
Then we know that there exists $B\in\mathcal{B}_u(G)$ such that $ua_G(B,u)=True$, $pot_G(C_1,(B,u))=0$, and $pot_G(C_2,(B,u))\geq 1$. By Corollary~\ref{cor:pot}\ref{cor:thereexists}, we have that there is an independent set $C'_2$ that is reachable from $C_2$ such that $cap(C'_2[B,u])\geq pot_G(C_2,(B,u))\geq 1$. Since $C_2$ is reachable from $C_1$, we have that $C'_2$ is also reachable from $C_1$. Then by Definition~\ref{def:pot}, we get that $cap(C'_2[B,u])-|\interior{C'_2[B,u]}|+|\interior{C_1[B,u]}|\leq pot_G(C_1,(B,u))=0$, which implies that $|\interior{C'_2[B,u]}|-|\interior{C_1[B,u]}|\geq 1$. Let $D_0,D_1,\ldots,D_k$ be independent sets of $G$ such that $C_1=D_0\rightarrow D_1\rightarrow\cdots\rightarrow D_k=C'_2$. As $|\interior{C'_2[B,u]}|\neq |\interior{C_1[B,u]}|$, we have from Corollary~\ref{cor:ureachable}\ref{it:cardinality} that there exists some $i\in\{0,1,\ldots,k\}$ such that $u\in D_i$. Then $D_i$ is an independent set of $G$ that is reachable from $C_1$ and contains the vertex $u$ that is rigid for $C_1$. This contradicts Lemma~\ref{lem:rigidnotoken}.
\end{proof}
\section{Restricting to subgraphs}\label{sec:removerigid}
In this section, we show that given a connected block graph $G$ and an independent set $C$ of it, there are certain kinds of induced subgraphs $H$ of $G$ having the property that each potential of $H$ has the same value as it had in $G$.

For this section, we assume that $G$ is a connected block graph and $C$ is an independent set of $G$.
Let $a\in V_{cut}(G)$ and $A\in\mathcal{B}_a(G)$.
Let $H=G-V(G[a,A])$. It is easy to see that $H$ is a connected block graph.
Observe that $\mathcal{B}(H)=(\mathcal{B}(G[A,a])\setminus\{A\})\cup (A\setminus\{a\})$ if $|A|>2$ and $\mathcal{B}(H)=\mathcal{B}(G[A,a])\setminus\{A\}$ if $|A|=2$.
Observe that $V_{cut}(H)\subseteq V_{cut}(G)$.
We now define a function $f_{H,G}:\mathcal{B}(H)\rightarrow\mathcal{B}(G)$.
For each $B\in\mathcal{B}(H)$, we define $f_{H,G}(B)\in\mathcal{B}(G)$ as follows.
If $|A|=2$, then we simply define $f_{H,G}(B)=B$ for all $B\in\mathcal{B}(H)$. Otherwise, there exists a unique block $A'\in\mathcal{B}(H)$ such that $A'\subseteq A$. Then we define $f_{H,G}(B)$ as follows: we define $f_{H,G}(A')=A$, and for every $B\in\mathcal{B}(H)\setminus\{A'\}$, we define $f_{H,G}(B)=B$. It is easy to verify that for all $B\in\mathcal{B}(H)$, $f_{H,G}(B)$ is well defined and $f_{H,G}(B)\in\mathcal{B}(G)$.
For $(u,B)\in\mathcal{P}_H$, we abuse notation and define that $f_{H,G}(u,B)=(u,f_{H,G}(B))$, and for $(B,u)\in\mathcal{P}_H$, we similarly define $f_{H,G}(B,u)=(f_{H,G}(B),u)$.
Let $C_H=C\cap V(H)$.
The following observation is easy to see.
\begin{observation}\label{obs:trunc}
Let $B\in\mathcal{B}(G)$. Then $0<|B\cap V(H)|<|B|$ if and only if $B=A$.
\end{observation}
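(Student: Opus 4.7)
The plan is to analyze $V(G[a,A])$ precisely, then separate into the cases $B = A$ and $B \ne A$.

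First I would show that $V(G[a,A]) \cap A = \{a\}$, which gives $V(H) \cap A = A \setminus \{a\}$. Indeed, $G[a,A]$ is the connected component of $a$ in the graph $G - \{av \colon v \in A\}$. For any $v' \in A \setminus \{a\}$, the edge $av'$ has been removed, and since $A$ is the unique block of $G$ containing both $a$ and $v'$, in a block graph there cannot be an alternative path from $v'$ to $a$ in $G - \{av \colon v \in A\}$ (such a path would have to use a block containing $a$ on one side and $v'$ on the other, which must be $A$ itself). Hence $v' \notin V(G[a,A])$.

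This immediately settles the case $B = A$: we have $|B \cap V(H)| = |A \setminus \{a\}| = |A| - 1$, and since every block of a block graph has at least two vertices, $0 < |A|-1 < |A|$, i.e. $0 < |B \cap V(H)| < |B|$.

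For the other direction, I would show that for every $B \in \mathcal{B}(G)$ with $B \ne A$, either $B \subseteq V(G[a,A])$ or $B \subseteq V(H)$, so that $|B \cap V(H)| \in \{0, |B|\}$. The key observation is that distinct blocks of $G$ share at most one vertex, so $|B \cap A| \le 1$. Suppose there exists $v \in B \cap V(G[a,A])$; I will argue $B \subseteq V(G[a,A])$. Consider any edge $e$ of $G[B]$: $e$ is removed in forming $G - \{av \colon v \in A\}$ only if $e = av'$ for some $v' \in A$. If $a \notin B$, no such $e$ exists in $G[B]$. If $a \in B$, then $a \in B \cap A$, and since $|B \cap A| \le 1$ and $B \ne A$, the only vertex of $B$ lying in $A$ is $a$ itself, so again no edge of $G[B]$ is of the form $av'$ with $v' \in A$. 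Thus $G[B]$ survives intact in $G - \{av \colon v \in A\}$; since $v \in B$ is connected to $a$ in this graph and $B$ is a clique, every vertex of $B$ is connected to $a$, so $B \subseteq V(G[a,A])$. Otherwise $B \cap V(G[a,A]) = \emptyset$ and $B \subseteq V(H)$.

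There is no serious obstacle here; the statement is a clean structural fact about how deleting $V(G[a,A])$ interacts with the block decomposition. The only point requiring care is invoking the block-graph hypothesis at the right moment, namely to rule out alternative $v'$-to-$a$ paths and to use $|B \cap A| \le 1$ for distinct blocks.
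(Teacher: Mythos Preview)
Your argument is correct. The paper itself provides no proof of this observation, simply calling it ``easy to see''; your write-up supplies the details cleanly. The two key facts you isolate---that $V(G[a,A])\cap A=\{a\}$ (using that $a$ separates $A\setminus\{a\}$ from the other blocks through $a$), and that for $B\neq A$ the clique $G[B]$ survives intact in $G-\{av:v\in A\}$ because $|B\cap A|\le 1$---are exactly the right ones, and your case split on whether $B$ meets $V(G[a,A])$ is the natural way to finish.
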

\begin{lemma}\label{lem:nochange}
If $a\notin C$, $pot_G(C,(a,A))=0$ and $ua_G(a,A)=True$, then for each $p\in\mathcal{P}_H$, we have $ua_H(p)=ua_G(f_{H,G}(p))$ and $pot_H(C_H,p)=pot_G(C,f_{H,G}(p))$.
\end{lemma}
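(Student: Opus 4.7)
The plan is to establish both identities simultaneously by strong induction on $d_H(p)$. For the setup, observe that since $a$ is the only vertex of $G$ joining $V(G[a,A])\setminus\{a\}$ to the rest of $G$, and $a\notin V(H)$, for any $p\in\mathcal{P}_H$ the induced subgraph $G[f_{H,G}(p)]$ is either identical to $H[p]$ (call $p$ \emph{unaffected}) or has vertex set $V(H[p])\cup V(G[a,A])$ (call $p$ \emph{affected}). A direct analysis of the block-tree structure shows that at a base $u\in V_{cut}(H)$ the affected $p$'s are precisely those of the form $(B_u^H,u)$, where $B_u^H\in\mathcal{B}_u(H)$ denotes the block of $u$ on the path from $u$ toward $a$ in $\mathcal{T}(G)$ (when such a block lies in $\mathcal{B}_u(H)$), together with every $(u,B)$ for $B\in\mathcal{B}_u(H)\setminus\{B_u^H\}$.

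For unaffected $p$, $H[p]$ and $G[f_{H,G}(p)]$ are literally the same induced subgraph of $G$ and $C_H[p]=C[f_{H,G}(p)]$, so both identities follow from the recursive characterizations in Definition~\ref{def:ua} and Corollary~\ref{cor:pot}\ref{cor:xBu}--\ref{cor:xuB}. For affected $p$, we apply Corollary~\ref{cor:pot}\ref{cor:xBu} to the form $(B_u^H,u)$ and Corollary~\ref{cor:pot}\ref{cor:xuB} to the form $(u,B)$, using the inductive hypothesis on the recursive dependencies, all of which have strictly smaller $d_H$ by Definition~\ref{def:depth}.

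The decisive step, where the three hypotheses get consumed, is the subcase $B_u^H=A'$ of an affected $(B_u^H,u)$, so that $f_{H,G}(B_u^H)=A$. There, $\kappa_G(A,u)=\{a\}\cup\kappa_H(A',u)$, so the $G$-side recursion carries an extra summand $pot_G(C,(a,A))$ which vanishes by hypothesis. The equivalence $A=\kappa_G(A,u)\cup\{u\}\Leftrightarrow A'=\kappa_H(A',u)\cup\{u\}$ combined with $ua_G(a,A)=True$ collapses $ua_G(A,u)$ exactly to $ua_H(A',u)$, and $a\notin C$ ensures $|A\cap\interior{C[A,u]}|=|A'\cap\interior{C_H[A',u]}|$. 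These three facts together give $pot_G(C,(A,u))=pot_H(C_H,(A',u))$. In the complementary subcase $B_u^H\neq A'$, the set $\kappa_H(B_u^H,u)$ coincides with $\kappa_G(B_u^H,u)$, each sibling entry is unaffected, and the one parent entry $(v_{B_u^H},B_u^H)$ is affected with strictly smaller $d_H$, so the inductive hypothesis closes the case.

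For the other affected form $p=(u,B)$, $f_{H,G}$ restricts to a bijection $\mathcal{B}_u(H)\to\mathcal{B}_u(G)$ under which (using the previous step together with the unaffected case) every block's $pot$ and $ua$ already agree with its $f_{H,G}$-image, so the ``$\exists B',B''$ with $pot{=}0$ and $ua{=}True$'' clause in Corollary~\ref{cor:pot}\ref{cor:xuB} transfers verbatim, and either both potentials collapse to $0$ or both are given by the same sum. The main obstacle is the bookkeeping around the degenerate case $|A|=2$: then $A'$ is not itself a block of $H$, there is no $B_u^H$ at the unique cut-vertex $w\in A\setminus\{a\}$ of $H$ on the $G$-path to $a$, and the comparison at $w$ has to be performed directly, using that $ua_G(A,w)=False$ (since then $A=\{a,w\}$) and $pot_G(C,(A,w))=0$ so that the extra term corresponding to $A$ contributes nothing to the $G$-side recursion at $w$; the other bases and all other cases go through unchanged.
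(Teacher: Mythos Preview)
Your proposal is correct and follows essentially the same approach as the paper: strong induction on $d_H(p)$, invoking the recursive characterizations from Definition~\ref{def:ua} and Corollary~\ref{cor:pot}\ref{cor:xBu}--\ref{cor:xuB}, with case analysis on how $p$ relates to the removed piece $G[a,A]$. Your affected/unaffected framing is a clean reorganization of the paper's casework (which instead distinguishes whether $B\subseteq A$ and handles the degenerate $|A|=2$ via a preliminary claim), and the three hypotheses $a\notin C$, $pot_G(C,(a,A))=0$, $ua_G(a,A)=True$ are consumed at exactly the same place in both arguments.
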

\begin{proof}
For ease of notation, we abbreviate $f_{H,G}$ to just $f$, $pot_G(C,p)$, where $p\in\mathcal{P}_G$, to just $pot_G(p)$, and $pot_H(C_H,p)$, where $p\in\mathcal{P}_H$, to just $pot_H(p)$ in this proof.

We first prove a claim that will be useful.
\begin{claim}\label{clm:missingblock}
Suppose that $A=\{a,b\}$. Then $ua_G(A,b)=False$ and $pot_G(A,b)=0$.
\end{claim}
It follows directly from Definition~\ref{def:ua} and the fact that $ua_G(a,A)=True$ that $ua_G(A,b)=False$. Then from Corollary~\ref{cor:pot}\ref{cor:xBu}, the fact that $a\notin C$ and $pot_G(a,A)=0$, we have that $pot_G(A,b)=0$. This proves the claim.
\medskip

We prove the lemma by induction on $d_H(p)$.
For the base case, suppose that $d_H(p)=0$.
Then we have $p=(B,u)$ for some $u\in V_{cut}(H)$ and $B\in\mathcal{B}_u(H)$.
We have by Definition~\ref{def:ua} that $ua_H(p)=ua_H(B,u)=True$. Also, we have by Corollary~\ref{cor:pot}\ref{cor:xBu} that $pot_H(B,u)=1-|B\cap\interior{C_H[B,u]}|$.
Suppose first that $B\not\subseteq A$.
Then we have by our definition of $f$ that $f(B,u)=(B,u)$; in particular $B\in\mathcal{B}(G)$ and $(B,u)\in\mathcal{P}_G$.
If $d_G(B,u)=0$, then we have $ua_G(f(p))=ua_G(B,u)=True$ and since $\kappa_G(B,u)=\emptyset$, we also have $V(G[B,u])=V(H[B,u])=B$. This implies that $\interior{C[B,u]}=\interior{C_H[B,u]}$.
Now by Corollary~\ref{cor:pot}\ref{cor:xBu}, we have that $pot_G(f(p))=pot_G(B,u)=1-|B\cap\interior{C[B,u]}|=1-|B\cap\interior{C_H[B,u]}|=pot_H(B,u)$. Thus we are done in this case.
So assume that $d_G(B,u)>0$. This means that $\kappa_G(B,u)\neq\emptyset$. As $\kappa_H(B,u)=\emptyset$, we know that there exists $v\in\kappa_G(B,u)$ such that $v\notin\kappa_H(B,u)$. Let $B'\in\mathcal{B}_v(G)\setminus\{B\}$. Note that $B'\notin\mathcal{B}(H)$; in fact, $B'\cap V(H)=\{v\}$. Since $|B'|\geq 2$, we have by Observation~\ref{obs:trunc} that $B'=A$. Note that this means that $\kappa_G(B,u)=\{v\}$ and that $\mathcal{B}_v(G)=\{B,B'\}$. If $|A|>2$, then $|B'\cap V(H)|\geq 2$, which means that $B'\cap V(H)$ is a block of $H$ in $\mathcal{B}_v(H)$. This contradicts the fact that $v\notin\kappa_H(B,u)$. So we can assume that $|A|=2$. We now have by Claim~\ref{clm:missingblock} that $pot_G(B',v)=0$ and $ua_G(B',v)=False$. Now we have by Definition~\ref{def:ua} that $ua_G(v,B)=False$ and by Corollary~\ref{cor:pot}\ref{cor:xuB} that $pot_G(v,B)=0$. Using Definition~\ref{def:ua} again, we get that $ua_G(B,u)=True$. Further, we now have from Corollary~\ref{cor:pot}\ref{cor:xBu} that $pot_G(B,u)=1-|B\cap\interior{C[B,u]}|=1-|B\cap\interior{C_H[B,u]}|=pot_H(B,u)$, and we are done.
Now consider the case when $B\subseteq A$. Then $f(B)=A$. Clearly, in this case, we have $|A|>2$, $B=A\setminus\{a\}$, and $f(B,u)=(A,u)$. Since $\kappa_H(B,u)=\emptyset$, we have $\kappa_G(A,u)=\{a\}$. As $ua_G(a,A)=False$, we have from Definition~\ref{def:ua} that $ua_G(A,u)=True$. Moreover, since $pot_G(a,A)=0$, we have from Corollary~\ref{cor:pot}\ref{cor:xBu} that $pot_G(A,u)=1-|A\cap\interior{C[A,u]}|$. As $a\notin C$, this implies that $pot_G(A,u)=1-|B\cap\interior{C_H[B,u]}|=pot_H(B,u)$. This completes the proof for the base case.

For the inductive step, we assume that for all $p'\in\mathcal{P}_H$ such that $d_H(p')<d_H(p)$, we have $ua_G(f(p'))=ua_H(p')$ and $pot_G(f(p'))=pot_H(p')$.
Suppose first that $p=(u,B)$ for some $u\in V_{cut}(H)$ and $B\in\mathcal{B}_u(H)$. Let us first consider the case when
$|\beta_H(u,B)|=|\beta_G(f(u,B))|$ (i.e. it is not the case that $A\in\beta_G(u,B)$ and $|A|=2$). Then for each $X\in\beta_H(u,B)$, we have by the induction hypothesis that $ua_H(X,u)=ua_G(f(X,u))$ and $pot_H(X,u)=pot_G(f(X,u))$. Then it follows from Corollary~\ref{cor:pot}\ref{cor:xuB} that $pot_G(f(u,B))=pot_H(u,B)$. Similarly, it follows from Definition~\ref{def:ua} that $ua_G(f(u,B))=ua_H(u,B)$.
Next, suppose that $A\in\beta_G(u,B)$ and $|A|=2$. Then by Claim~\ref{clm:missingblock}, we have that $ua_G(A,u)=False$ and $pot_G(A,u)=0$. Now it follows from the induction hypothesis and Definition~\ref{def:ua} that $ua_G(f(p))=ua_H(p)$. Similarly, using the induction hypothesis and Corollary~\ref{cor:pot}\ref{cor:xuB}, we get $pot_G(f(p))=pot_H(p)$.

Finally, we consider the case when $p=(B,u)$ for some $u\in V_{cut}(H)$ and $B\in\mathcal{B}_u(H)$. By the induction hypothesis, we know that for each $v\in\kappa_H(B,u)$, $ua_H(v,B)=ua_G(f(v,B))$ and $pot_G(f(v,B))=pot_H(v,B)$.
Suppose that $B\not\subseteq A$. In this case, we have $f(B,u)=(B,u)$, and also that $\kappa_G(B,u)=\kappa_H(B,u)$. It is easy to see that $B=\kappa_H(B,u)\cup\{u\}$ if and only if $B=\kappa_G(B,u)\cup\{u\}$. We now have by Definition~\ref{def:ua} and the induction hypothesis that $ua_G(f(B,u))=ua_G(B,u)=ua_H(B,u)$.
Since we have $B\cap\interior{C_H[B,u]}=B\cap\interior{C[B,u]}$,
we now have by Corollary~\ref{cor:pot}\ref{cor:xBu} that $pot_G(f(B,u))=pot_H(B,u)$.
Finally, we consider the case when $B\subseteq A$. Then we have $\kappa_G(A,u)=\kappa_H(B,u)\cup\{a\}$. It is easy to see that $B=\kappa_H(B,u)\cup\{u\}$ if and only if $A=\kappa_G(A,u)\cup\{u\}$. Since we know that $ua_G(a,A)=True$, we now have by Definition~\ref{def:ua} that $ua_G(f(B,u))=ua_G(A,u)=ua_H(B,u)$.
Since $a\notin C$, we know that $B\cap\interior{C_H[B,u]}=A\cap\interior{C[A,u]}$.
Now, as $pot_G(a,A)=0$, we have by Corollary~\ref{cor:pot}\ref{cor:xBu} that $pot_G(f(B,u))=pot_G(A,u)=pot_H(B,u)$.
\end{proof}
\section{Putting the pieces together}\label{sec:main}
\begin{lemma}\label{lem:norigidtoken}
Let $G$ be a connected block graph and $C$ be an independent set of $G$.
Let $H$ be a connected component of the graph obtained by removing all vertices of $G$ that are rigid for $C$. There are no vertices in $H$ that are rigid for $C\cap V(H)$.
\end{lemma}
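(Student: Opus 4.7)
The plan is to induct on $|V(G)|-|V(H)|$. In the base case $V(H)=V(G)$, the set $R$ of vertices of $G$ rigid for $C$ must be empty, so $H=G$ has no rigid vertex for $C=C\cap V(H)$ vacuously. For the inductive step, $V(H)\subsetneq V(G)$ and hence $R\neq\emptyset$; the connectedness of $G$ forces an edge of $G$ from $V(H)$ to $V(G)\setminus V(H)$, and such an edge cannot end in another component of $G-R$ (else it would contradict $V(H)$ being a connected component of $G-R$), so it must end in some $a\in R$. Fix such an $a$ and let $A\in\mathcal{B}_a(G)$ be the block of $a$ meeting $V(H)$; since $V(H)$ is connected in $G-\{a\}$, it must lie in a single component of $G-\{a\}$, giving $V(H)\subseteq V(G[A,a])\setminus\{a\}$.

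The hypotheses of Lemma~\ref{lem:nochange} hold at $(a,A)$ by Observation~\ref{obs:rigidua} and Lemma~\ref{lem:rigidnotoken}. I therefore set $G'=G-V(G[a,A])$ and $C'=C\cap V(G')$, and obtain that every $ua$- and $pot$-value in $G'$ agrees with the corresponding value in $G$ via the bijection $f_{G',G}$. My target is to show that the set $R'$ of vertices of $G'$ rigid for $C'$ satisfies $R'=R\cap V(G')$, and that $V(H)$ is a connected component of $G'-R'$. The inductive hypothesis applied to $(G',C',H)$ then closes the argument, since $|V(G')|<|V(G)|$ while $V(H)$ is unchanged.

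The inclusion $R'\subseteq R\cap V(G')$ and the verification that $V(H)$ is a connected component of $G'-R'$ are routine transfers through $f_{G',G}$ together with the maximality of $V(H)$ in $G-R$. The main obstacle is the reverse inclusion $R\cap V(G')\subseteq R'$: a rigid vertex $u\in R\cap V(G')$ needs two witness blocks inside $\mathcal{B}_u(G')$. When neither witness of $u$ in $G$ is the block $A$, or when one equals $A$ with $|A|\geq 3$ (so that $A\setminus\{a\}\in\mathcal{B}_u(G')$ with $f_{G',G}(A\setminus\{a\})=A$ and preserved $ua$- and $pot$-values), the witnesses transfer immediately. The delicate case is $u\in A$ with $|A|=2$, where $A$ vanishes from $G'$; here I will verify directly from Definition~\ref{def:ua} that $A$ cannot have been a witness of $u$ in the first place: since $a$ is rigid, $ua_G(a,A)=True$ by Observation~\ref{obs:rigidua}, and one checks that $\kappa_G(A,u)=\{a\}$ and $A=\kappa_G(A,u)\cup\{u\}$, which together force $ua_G(A,u)=False$. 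Hence both witnesses of $u$ must lie in $\mathcal{B}_u(G)\setminus\{A\}$ and survive intact in $G'$.
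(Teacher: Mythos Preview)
Your proof is correct and follows essentially the same approach as the paper: both arguments repeatedly peel off rigid neighbours of $H$ using Lemma~\ref{lem:nochange} to transfer $ua$- and $pot$-values. The paper organizes this as an explicit iteration that removes all of $N_G(H)=\{r_1,\ldots,r_k\}$ in sequence, arriving at $G_k=H$ and then arguing that any vertex rigid in $H$ would be rigid in $G$; you organize it as an induction on $|V(G)|-|V(H)|$, removing one rigid neighbour at a time and carrying the stronger invariant $R'=R\cap V(G')$. Your handling of the delicate $|A|=2$ case (showing $ua_G(A,u)=False$ so that $A$ cannot be a witness) is exactly the content of Claim~\ref{clm:missingblock} in the paper's proof of Lemma~\ref{lem:nochange}, applied in the same way.
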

\begin{proof}
Let $R$ denote the set of vertices in $G$ that are rigid for $C$. Since $H$ is a connected component of $G-R$, it must be the case that $N_G(H)\subseteq R$. Let $N_G(H)=\{r_1,r_2,\ldots, r_k\}$.
We now define connected induced subgraphs $G_0,G_1,G_2,\ldots,G_k$ of $G$, each containing $H$ as a subgraph, such that for each $i\in\{0,1,\ldots,k-1\}$, $V(H)\subseteq V(G_i)$ and $\{r_{i+1},r_{i+2},\ldots,r_k\}\subseteq V(G_i)$.
Let $G_0=G$.
For $i\in\{1,2,\ldots,k\}$, we define $G_i$ inductively, assuming that $G_0,G_1,\ldots,G_{i-1}$ have been defined. By the induction hypothesis, we know that $r_i,r_{i+1},\ldots,r_k\in V(G_{i-1})$.
Then by Observation~\ref{obs:blockgraph}, there exists a unique block $B_i\in\mathcal{B}_{r_i}(G_{i-1})$ such that $B_i\cap V(H)\neq\emptyset$ (since $r_i\in N_G(H)\cap V(G_{i-1})$ and $V(H)\subseteq V(G_{i-1})$). We define $G_i=G_{i-1}-V(G_{i-1}[r_i,B_i])$. It is easy to see that $G_i$ is an induced subgraph of $G_{i-1}$, and therefore also of $G$. Moreover, since $G_{i-1}$ is connected, it follows that $G_i$ is also connected. By Observation~\ref{obs:blockgraph}, we have that that $V(H)\subseteq V(G_i)$, and also that no vertex in $N_G(H)$ is in $V(G_{i-1}[r_i,B_i]\setminus\{r_i\}$. This implies that $r_{i+1},r_{i+2},\ldots,r_k\in V(G_i)$, as required.

Note that $V(H)\subseteq V(G_k)$, $G_k$ is a connected graph, and $N_G(H)\cap V(G_k)=\emptyset$. This implies that $G_k=H$.
Since for each $i\in\{1,2,\ldots,k\}$, we have that $G_i=V(G_{i-1})-V(G_{i-1}[r_i,B_i]$ (recall that $B_i\in\mathcal{B}_{r_i}(G_{i-1})$), we can define the function $f_{G_i,G_{i-1}}$ as described in the beginning of Section~\ref{sec:removerigid}. For ease of notation, we abbreviate $f_{G_i,G_{i-1}}$ to just $f_i$, and $pot_{G_i}(C\cap V(G_i),p)$, where $p\in\mathcal{P}_{G_i}$, to just $pot_i(p)$ for the remainder of this proof.
\begin{claim}
For each $i\in\{1,2,\ldots,k\}$ and $p\in\mathcal{P}_{G_i}$, $pot_i(p)=pot_{i-1}(f_i(p))$ and $ua_{G_i}(p)=ua_{G_{i-1}}(f_i(p))$.
\end{claim}
We prove this by induction on $i$. As the base case, we consider the case when $i=1$. As $G_0=G$ and $r_1\in R$, we have by Lemma~\ref{lem:rigidnotoken} that $r_1\notin C$, and by Observation~\ref{obs:rigidua} that $pot_0(r_1,B_1)=0$ and $ua_G(r_1,B_1)=True$. Now since $G_1=G-V(G[r_1,B_1])$, we have by Lemma~\ref{lem:nochange} that for any $p\in\mathcal{P}_{G_1}$, $ua_{G_1}(p)=ua_G(f_1(p))$ and $pot_1(p)=pot_0(f_1(p))$.
For the inductive step, we assume that the claim holds for all $j\in\{1,2,\ldots,k\}$ such that $j<i$. Recall that $G_i=G_{i-1}-V(G_{i-1}[r_i,B_i])$. 
Notice that $(r_i,B_i)\in\mathcal{P}_{G_{i-1}}$. By the induction hypothesis, we know that $pot_{i-1}(r_i,B_i)=pot_{i-2}(f_{i-1}(r_i,B_i))=pot_{i-2}(r_i,f_{i-1}(B_i))=pot_{i-3}(f_{i-2}(r_i,f_{i-1}(B_i)))=pot_{i-3}(r_i,f_{i-2}\circ f_{i-1}(B_i))=\cdots=pot_0(r_i,f_1\circ f_2\circ\cdots\circ f_{i-1}(B_i))$. By the same argument, we also get that $ua_{G_{i-1}}(r_i,B_i)=ua_G(r_i,f_1\circ f_2\circ\cdots\circ f_{i-1}(B_i))$. It is easy to see that $f_1\circ f_2\circ\cdots\circ f_{i-1}(B_i)\in\mathcal{B}_{r_i}(G)$. Since $r_i\in R$, we then have by Observation~\ref{obs:rigidua} that $pot_0(r_i,f_1\circ f_2\circ\cdots\circ f_{i-1}(B_i))=0$ and $ua_G(r_i,f_1\circ f_2\circ\cdots\circ f_{i-1}(B_i))=True$, which implies that $pot_{i-1}(r_i,B_i)=0$ and $ua_{G_{i-1}}(r_i,B_i)=True$. Observe that we have from Lemma~\ref{lem:rigidnotoken} that $r_i\notin C$, which implies that $r_i\notin C\cap V(G_{i-1})$. Now it follows from Lemma~\ref{lem:nochange} that for each $p\in\mathcal{P}_{G_i}$, $pot_i(p)=pot_{i-1}(f_i(p))$ and $ua_{G_i}(p)=ua_{G_{i-1}}(f_i(p))$. This completes the proof of the claim.
\medskip

From the above claim, we have that for each $p\in\mathcal{P}_{G_k}$, $pot_k(p)=pot_0(f_1\circ f_2\circ\cdots\circ f_k(p))$ and $ua_{G_k}(p)=ua_G(f_1\circ f_2\circ\cdots\circ f_k(p))$. Suppose for the sake of contradiction that there is a vertex $u$ that is rigid in $H$ for $C\cap V(H)$. Then there exists $B,B'\in\mathcal{B}_u(H)$ such that $pot_k(u,B)=pot_k(u,B')=0$ and $ua_{G_k}(u,B)=ua_{G_k}(u,B')=True$ (recall that $H=G_k$). As observed above, this means that $pot_0(u,f_1\circ f_2\circ\cdots\circ f_k(B))=pot_0(u,f_1\circ f_2\circ\cdots\circ f_k(B'))=0$ and $ua_G(u,f_1\circ f_2\circ\cdots\circ f_k(B))=ua_G(u,f_1\circ f_2\circ\cdots\circ f_k(B'))=True$. This implies that $u$ is rigid in $G$ for $C$, or in other words, $u\in R$. This is a contradiction since $u\in V(H)$ and $V(H)\subseteq V(G)\setminus R$.
\end{proof}

\begin{lemma}\label{lem:norigid}
Let $G$ be a connected block graph and $C_1,C_2$ be independent sets of $G$. If no cut-vertex of $G$ is rigid for $C_1$ or $C_2$, and $|C_1|=|C_2|$, then $C_2$ is reachable from $C_1$.
\end{lemma}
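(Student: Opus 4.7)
The proof will proceed by induction on $|V(G)|$. The base case, where $G$ consists of a single block (i.e. a clique), is immediate: two independent sets of the same size in a clique are either both empty or each consists of a single vertex, and any two single-vertex independent sets in a clique are reachable via a single token slide.

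For the inductive step, pick a leaf block $B\in\mathcal{B}(G)$, that is, one containing a unique cut-vertex $u$ of $G$; let $H=G[u,B]$, which is a strictly smaller connected block graph. The strategy is to reconfigure $C_1$ into some $C_1^\star$ and $C_2$ into some $C_2^\star$ so that $C_1^\star\cap B=C_2^\star\cap B$. Once this alignment is achieved, the remaining problem---transforming $C_1^\star[u,B]$ into $C_2^\star[u,B]$ inside $H$---is an instance of the same problem on a strictly smaller graph, and the inductive hypothesis closes the argument, provided we first verify (using Lemma~\ref{lem:nochange} and its hypotheses) that no cut-vertex of $H$ becomes rigid for $C_1^\star[u,B]$ or $C_2^\star[u,B]$ in $H$.

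The alignment on $B$ splits into cases by $|C_i\cap B|\in\{0,1\}$. If the two counts match, sliding within the clique $B$---assisted by Lemma~\ref{lem:nonrigid} applied to the $H$-side of $u$ to reduce any attack on $u$ to at most a single token---is enough to make the configurations agree on $B$. If the counts differ, say $|C_1\cap B|=1$ and $|C_2\cap B|=0$, we must push a token of $C_1$ across $u$ into $H$. Non-rigidity of $u$ for $C_1$ ensures, via Corollary~\ref{cor:pot}\ref{cor:xuB}, that $pot_G(C_1,(u,B))>0$ whenever $ua_G(u,B)=True$; Corollary~\ref{cor:pot}\ref{cor:thereexists} then yields a configuration reachable from $C_1$ with positive $(u,B)$-capacity, into which the token can be absorbed by first sliding it within $B$ to $u$ and then off $u$ into a suitable block of $\beta(u,B)$ via Lemma~\ref{lem:inserttoken} and Lemma~\ref{lem:uafalse}. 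The symmetric case, where $C_2$ is the one with an extra token in $B$, is handled identically using the non-rigidity of $u$ for $C_2$.

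The main obstacle I expect is the bookkeeping across intermediate configurations: after manipulating the $H$-side to absorb a token coming from $B$, we must restore the $H$-side to a configuration that is compatible with what the inductive hypothesis needs, namely one whose restriction to $H$ has no rigid cut-vertices and which is reachable from $C_2^\star[u,B]$. Ensuring that non-rigidity is preserved through all intermediate steps, and that these preparatory slides and their reversals cohere into a single reconfiguration sequence, is precisely where the full strength of the strong-accessibility machinery of Section~\ref{sec:local}---especially Lemmas~\ref{lem:uafalse} and~\ref{lemma:blockshift} together with Corollary~\ref{cor:ureachable}---will be doing the heavy lifting.
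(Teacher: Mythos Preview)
Your plan differs from the paper's in two linked ways: you induct on $|V(G)|$ and try to \emph{push a token out} of a leaf block $B$ so as to align $C_1$ and $C_2$ on $B$, whereas the paper inducts on $|C_1|$ and \emph{brings a token into} a fixed leaf vertex $h$ for both configurations before deleting the whole leaf side $G[l,B]$ (including the cut-vertex $l$). Both of your deviations create genuine gaps.

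First, Lemma~\ref{lem:nochange} cannot be invoked to show that no cut-vertex of $H=G[u,B]$ is rigid. That lemma only applies to subgraphs of the form $G-V(G[a,A])$ with $a$ a cut-vertex; in your setting the removed set is $B\setminus\{u\}$, which contains no cut-vertex of $G$, so there is no admissible $(a,A)$. Worse, if after alignment both configurations have $B$ empty, deleting $B\setminus\{u\}$ genuinely removes parking space and can lower potentials at $u$, so non-rigidity in $H$ is not automatic. Second, your ``counts differ'' step is incomplete: you only argue $pot_G(C_1,(u,B))>0$ when $ua_G(u,B)=True$. When $ua_G(u,B)=False$ it can happen that $pot_G(C_1,(B',u))=0$ for every $B'\in\beta_G(u,B)$, whence Corollary~\ref{cor:pot}\ref{cor:xuB} gives $pot_G(C_1,(u,B))=0$; by Definition~\ref{def:pot} no configuration reachable from $C_1$ has more tokens on the $H$-side, so the token in $B$ simply cannot be absorbed. (A concrete instance: $G$ a path $h\,u\,v\,w$ with $C_1=\{h,w\}$, $C_2=\{u,w\}$, $B=\{h,u\}$; here $ua_G(u,B)=False$, $pot_G(C_1,(u,B))=0$, and your push-out move is impossible, though $C_2$ is reachable from $C_1$ by sliding $h$ to $u$.) The paper avoids both issues by reversing the direction: it shows---via a shortest-path argument combined with Lemma~\ref{lem:nonrigid}---that a token can always be brought \emph{to} a leaf vertex $h$; this puts the deletion into the exact form required by Lemma~\ref{lem:nochange} (with $a=l$, $pot_G(S_i,(l,B))=0$, $ua_G(l,B)=True$), after which the induction proceeds on one fewer token.
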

\begin{proof}
We prove this lemma using induction on $|C_1|=|C_2|=r$. As the base case, we assume that $r=1$. Let $C_1=\{u\}$ and $C_2=\{v\}$. Suppose that $P=w_1w_2\ldots w_k$ is a path from $u$ to $v$ in $G$ such that $w_1=u$ and $w_k=v$. Since $G$ is connected, $P$ is guaranteed to exist. Let $D_1=C_1$. For $i\in\{2,3,\ldots,k\}$, we define $D_i=(D_{i-1}\setminus \{w_{i-1}\})\cup\{w_i\}$. Clearly, $C_1=D_1\rightarrow D_2\rightarrow \cdots \rightarrow D_k=C_2$, and hence $C_2$ is reachable from $C_1$. This proves the base case. 

For the inductive step, we assume that if no cut-vertex of $G$ is rigid for independent sets $C'_1$ and $C'_2$ of $G$, and $|C'_1|=|C'_2|<r$, then $C'_2$ is reachable from $C'_1$. Now, let us prove the lemma for $C_1$ and $C_2$.
We claim that there exists $l\in V_{cut}(G)$ such that there exists at most one $B\in\mathcal{B}_l(G)$ having $\kappa_G(B,l)\neq\emptyset$.
Indeed, we can choose as $l$ an endpoint of a longest path in $G$ whose both endpoints are cut-vertices of $G$. Choose as $L$ a block from $\mathcal{B}_l(G)\setminus\{B\}$. Note that $G[L,l]$ is a complete graph. Let $h\in L\setminus \{l\}$. First, we define independent sets $S_1, S_2$ of $G$ such that $h \in S_1\cap S_2$, $S_1$ is reachable from $C_1$, and $S_2$ is reachable from $C_2$.
\medskip

Let $i\in\{1,2\}$.
\medskip

If $h\in C_i$, then we simply define $S_i=C_i$ and we are done. So we assume that $h\notin C_i$. Since $N_G(h)\subseteq L$, we know that $|N_G(h)\cap C_i|\leq 1$. Thus, if there exists $v\in N_G(h)\cap C_i$, then $S_i=(C_i\setminus\{v\})\cup\{h\}$ is an independent set of $G$ such that $C_i\rightarrow S_i$, and we are done.
So we assume that $(\{h\}\cup N_G(h))\cap C_i=\emptyset$, or in other words, every vertex in $C_i$ is at a distance of at least 2 from $h$ in $G$.
Choose a vertex $v\in C_i$ for which the distance between $v$ and $h$ in $G$ is as small as possible. Let $P=w_1w_2\ldots w_k$ be the shortest path from $v$ to $h$ in $G$ such that $w_1=v$ and $w_k=h$ (thus the distance between $v$ and $h$ in $G$ is $k-1$). Note that by our assumption, we have $k\geq 3$. Since $G$ is a block graph, it can be easily seen that $\{w_2,w_3,\ldots, w_{k-1}\}\subseteq V_{cut}(G)$, and also that $w_{k-1}=l$. If for some $j\in\{3,4,\ldots, k\}$, there exists a vertex $x\in N_G(w_j)\cap C_i$, then 
$xw_jw_{j+1}\ldots w_k$ is a path in $G$ between $x$ and $h$ having length $k-j+1$. Then $x$ is a vertex in $C_i$ such that the distance between it and $h$ in $G$ is smaller than the distance between $v$ and $h$ in $G$, which contradicts the choice of $v$. Therefore, 
$w_j$ is not under attack in $C_i$, or in other words $N_G(w_j)\cap C_i=\emptyset$, for any $j\in\{3,4,\ldots, k\}$.

Let $X\in\mathcal{B}(G)$ such that $X$ contains the edge $w_2w_3$. Notice that $w_2$ is under attack in $C_i[w_2,X]$ (as $w_1\in C_i$). By Lemma~\ref{lem:nonrigid}, there exists an independent set $D$ that is $(w_2,X)$-reachable from $C_i$ such that $|N_G(w_2)\cap D[w_2,X]|=1$. 
Let $N_G(w_2)\cap D[w_2,X]=\{y\}$. Observe that $yw_2\ldots w_k$ is a path from $y$ to $h=w_k$ in $G$. As $D[X,w_2]=C_i[X,w_2]$, we have that $N_G(w_j)\cap D=\emptyset$ for each $j\in\{3,4,\ldots,k\}$. Let $D_1=D$ and $D_2=(D_1\setminus \{y\})\cup\{w_2\}$. For $j\in\{3,4,\ldots,k\}$, we define $D_j=(D_{j-1}\setminus \{w_{j-1}\})\cup\{w_j\}$. It follows from the observations above that each of $D_1,D_2,\ldots,D_k$ is an independent set of $G$, and we also clearly have $D=D_1\rightarrow D_2\rightarrow \cdots \rightarrow D_k$. We define $S_i=D_k$. Then $S_i$ is reachable from $D$, and hence also from $C_i$. We also have $h=w_k\in D_k=S_i$.

Recall that $l=w_{k-1}$. Thus $l\in D_{k-1}$, and since $D_{k-1}\rightarrow D_k=S_i$, we have that $N_G(l)\cap S_i=\{h\}$. This implies that $S_i[l,B]=\{h\}$ (recall that $B\in \mathcal{B}_l(G)$ such that $\kappa_G(B,l)\ne\emptyset$). Let $H=G-V(G[l,B])$. Clearly, $H$ is a connected block graph. Note that we can define a function $f_{H,G}:\mathcal{P}_H\rightarrow\mathcal{P}_G$ as described in the beginning of Section~\ref{sec:removerigid}. 
Since each $B'\in\beta_G(l,B)$ is a complete graph, we have by Observation~\ref{obs:capcomplete} that $ua_G(B',l)=True$, and so in particular $ua_G(L,l)=True$.
Since $h\in S_i$, this implies by Corollary~\ref{cor:pot}\ref{cor:xBu} that $pot_G(S_i,(L,l))=0$.
Moreover, since $S_i[l,B]=\{h\}$, we have that for each $B'\in\beta_G(l,B)\setminus\{L\}$, $|\interior{S_i[B',l]}\cap B'|=0$, which implies by Corollary~\ref{cor:pot}\ref{cor:xBu} that $pot_G(S_i,(B',l))=1$. Further, we get by Definition~\ref{def:ua} that $ua_G(l,B)=True$, which implies by Corollary~\ref{cor:pot}\ref{cor:xuB} that $pot_G(S_i,(l,B))=0$. 
Therefore, since $l\notin S_i$, we get by Lemma~\ref{lem:nochange} that for each $p\in\mathcal{P}_H$, $ua_H(p)=ua_G(f_{H,G}(p))$ and $pot_H(S_i\cap V(H),p)=pot_G(S_i,f_{H,G}(p))$.

Since $S_i$ is reachable from $C_i$ and no vertex is rigid for $C_i$, by Lemma~\ref{lem:rigidmatch} it is clear that  there does not exist a vertex that is rigid for $S_i$. This implies that for each $u\in V_{cut}(G)$, there does not exist $B',B''\in\mathcal{B}_u(G)$ such that $pot_G(S_i,(B',u))=pot_G(S_i,(B'',u))=0$ and $ua_G(B',u)=ua_G(B'',u)=True$.
From our previous observation that for each $p\in\mathcal{P}_H$, $ua_H(p)=ua_G(f_{H,G}(p))$ and $pot_H(S_i\cap V(H),p)=pot_G(S_i,f_{H,G}(p))$,
we get that for each $u\in V_{cut}(H)$, there does not exist $B',B''\in\mathcal{B}_u(H)$ such that $pot_H(S_i\cap V(H),(B',u))=pot_H(S_i\cap V(H),(B'',u))=0$ and $ua_H(B',u)=ua_H(B'',u)=True$. This implies that in the graph $H$, no cut-vertex is rigid for the independent set $S_i\cap V(H)$ of $H$.
\medskip

We thus have that no cut-vertex of $H$ is rigid for either of the independent sets $S_1\cap V(H)$ or $S_2\cap V(H)$ of $H$.
Further, since $|S_1|=|C_1|=|C_2|=|S_2|=r$ and $S_1[l,B]=S_2[l,B]=\{h\}$, we have $|S_1\cap V(H)|=|S_2\cap V(H)|=r-1$, which implies by the induction hypothesis that in $H$, $S_2\cap V(H)$ is reachable from $S_1\cap V(H)$. This implies that there exist independent sets $R_0,R_1,\ldots,R_q$ of $H$, where $q\geq 0$, such that $S_1\cap V(H)=R_0\rightarrow R_1\rightarrow\cdots\rightarrow R_q=S_2\cap V(H)$. For each $j\in\{0,1,\ldots,q\}$, since $N_G(h)\cap V(H)=\emptyset$, it is clear that $R'_j=R_j\cup\{h\}$ is an independent set of $G$. It is easy to see that for each $j\in\{0,1,\ldots,q-1\}$, since $R_j\rightarrow R_{j+1}$, we also have $R'_j\rightarrow R'_{j+1}$. Moreover, as $S_1[l,B]=S_2[l,B]=\{h\}$, we have $R'_1=S_1$ and $R'_q=S_2$. Hence, we get $S_1=R'_0\rightarrow R'_1\rightarrow\cdots\rightarrow R'_q=S_2$, which shows that $S_2$ is reachable from $S_1$. Since we have already showed that $S_1$ is reachable from $C_1$ and $S_2$ is reachable from $C_2$, we get that $C_2$ is reachable from $C_1$. Hence, we are done.     
\end{proof}
\begin{theorem}\label{thm:correctness}
Let $G$ be a block graph and let $C_1,C_2$ be independent sets of $G$. Let $W_1,W_2$ be the set of vertices of $G$ that are rigid for $C_1,C_2$ respectively. Then $C_1$ is reachable from $C_2$ if and only if $W_1=W_2$, and for every connected component $H$ of $G-W_1$, we have $|C_1\cap V(H)|=|C_2\cap V(H)|$.
\end{theorem}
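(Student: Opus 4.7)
The forward direction packages itself almost entirely via the earlier lemmas. Suppose $C_2$ is reachable from $C_1$; Lemma~\ref{lem:rigidmatch} yields $W_1=W_2$, which I denote by $W$. Fix any reconfiguration sequence $C_1=D_0\to D_1\to\cdots\to D_k=C_2$. Lemma~\ref{lem:rigidnotoken} implies that $D_i\cap W=\emptyset$ for every $i$, so every step $D_{i-1}\to D_i$ slides a token along an edge $uv$ whose endpoints both lie in $V(G)\setminus W$. Any such edge lies in a single connected component of $G-W$, and hence for any component $H$ of $G-W$, the quantity $|D_i\cap V(H)|$ is invariant in $i$, giving $|C_1\cap V(H)|=|C_2\cap V(H)|$.

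For the backward direction, assume $W_1=W_2=W$ and $|C_1\cap V(H)|=|C_2\cap V(H)|$ for every connected component $H$ of $G-W$. Enumerate the components of $G-W$ as $H_1,\ldots,H_s$ and, for each $j\in\{0,1,\ldots,s\}$, define
\[
E_j=\Bigl(C_2\cap\!\bigcup_{i\le j}V(H_i)\Bigr)\;\cup\;\Bigl(C_1\cap\!\bigcup_{i>j}V(H_i)\Bigr),
\]
so that $E_0=C_1$ and $E_s=C_2$. The plan is to reconfigure $E_{j-1}$ into $E_j$ for each $j$ and concatenate. Since $H_j$ is a connected block graph by Observation~\ref{obs:induced}, Lemma~\ref{lem:norigidtoken} applied to $C_1$ gives that no vertex of $H_j$ is rigid for $C_1\cap V(H_j)$ in $H_j$; the same reasoning applied to $C_2$ (using $W_2=W$) gives the same conclusion for $C_2\cap V(H_j)$. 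Because $|C_1\cap V(H_j)|=|C_2\cap V(H_j)|$, Lemma~\ref{lem:norigid} applied inside $H_j$ supplies a reconfiguration sequence from $C_1\cap V(H_j)$ to $C_2\cap V(H_j)$ in $H_j$.

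It remains to lift this inner reconfiguration to $G$. The key observation is that any edge of $G$ with one endpoint in $V(H_j)$ has its other endpoint in $V(H_j)\cup W$, since two vertices of $V(G)\setminus W$ adjacent in $G$ must lie in the same connected component of $G-W$. As no token of $E_{j-1}$, nor of any intermediate set that we construct, ever sits on a vertex of $W$, augmenting each intermediate $H_j$-independent set with the fixed tokens $E_{j-1}\setminus V(H_j)$ yields an independent set of $G$, and each individual token-slide in $H_j$ remains a valid token-slide in $G$. Concatenating the resulting sequences for $j=1,\ldots,s$ produces a reconfiguration from $C_1$ to $C_2$ in $G$. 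The only point that requires genuine care is precisely this lifting step --- ensuring that frozen tokens in the inactive components never conflict with a slide in the active component --- but the $W$-separation property above resolves it cleanly. (If $G$ is disconnected, the argument is applied to each connected component of $G$ separately, invoking Observation~\ref{obs:disconnected}.)
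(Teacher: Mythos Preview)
Your proof is correct and follows essentially the same approach as the paper: both directions rely on Lemmas~\ref{lem:rigidmatch} and~\ref{lem:rigidnotoken} for the forward implication and on Lemmas~\ref{lem:norigidtoken} and~\ref{lem:norigid} for the backward one. The only cosmetic difference is that the paper packages your manual ``lifting'' step (and the concatenation over components) by invoking Observations~\ref{obs:disconnected} and~\ref{obs:subgraph} directly, whereas you spell out the $E_j$ sequence and the $W$-separation argument by hand.
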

\begin{proof}
    First, we prove the forward direction for which we consider the contrapositive statement. Suppose either $W_1\ne W_2$ or there exists a connected component $H$ of $G-W_1$ such that $|C_1\cap V(H)|\ne |C_2\cap V(H)|$. If $W_1\ne W_2$, then by Lemma~\ref{lem:rigidmatch} we have that $C_1$ is not reachable from $C_2$ and we are done in this case. Hence, 
    we assume that $W_1=W_2$ and that there exists a connected component $H$ of $G-W_1$ such that $|C_1\cap V(H)|\ne |C_2\cap V(H)|$. Suppose for the sake of contradiction that $C_1$ is reachable from $C_2$. Then, there exists independent sets $D_0,D_1\ldots,D_k$, where $k\geq 0$, such that $C_1=D_0\rightarrow D_1\rightarrow\cdots\rightarrow D_k=C_2$. 
    Let $j=\min\{i\in\{0,1,\ldots,k\}\colon |D_i\cap V(H)|\neq |C_1\cap V(H)|\}$.
    Then clearly, we have $j>0$ and $|D_{j-1}\cap V(H)|=|C_1\cap V(H)|$. Since $D_{j-1}\rightarrow D_j$, there exists $uv\in E(G)$ such that $(D_{j-1}\setminus D_j)\cup (D_j\setminus D_{j-1})=\{u,v\}$. As $|D_j\cap V(H)|\neq |D_{j-1}\cap V(H)|$, it must be the case that one of $u,v$ is in $V(H)$ and the other is not in $V(H)$. Without loss of generality, assume that $u\in V(H)$ and $v\notin V(H)$. Since $uv\in E(G)$, it follows that $v\in N_G(H)$. Since $H$ is a connected component of $G-W_1$, we now have that $v\in W_1$, or in other words, $v$ is rigid for $C_1$. This means that one $D_{j-1}$, $D_j$ is an independent set of $G$ that is reachable from $C_1$ that contains a vertex that is rigid for $C_1$. This contradicts Lemma~\ref{lem:rigidnotoken}.

    Next, we prove the reverse direction. Suppose that $W_1=W_2$ and for every connected component $H$ of $G-W_1$, we have $|C_1\cap V(H)|=|C_2\cap V(H)|$. By Lemma~\ref{lem:norigidtoken}, we have that each connected component $H$ of $G-W_1$ has no vertex that is rigid for $C_1 \cap V(H)$ or $C_2 \cap V(H)$. For each connected component $H$ of $G-W_1$, since $|C_1\cap V(H)|=|C_2\cap V(H)|$, applying Lemma~\ref{lem:norigid} to $H$ and the independent sets $C_1 \cap V(H)$ and $C_2\cap V(H)$ of $H$, we can conclude that $C_1\cap V(H)$ is reachable from $C_2 \cap V(H)$ in $H$. Moreover, by Lemma~\ref{lem:rigidnotoken}, we know that $W_1\cap C_1=W_1\cap C_2=\emptyset$, which implies that $C_1,C_2\subseteq V(G-W_1)$. It now follows from Observation~\ref{obs:disconnected} that $C_1$ is reachable from $C_2$ in $G-W_1$, and then from Observation~\ref{obs:subgraph} that $C_1$ is reachable from $C_2$ in $G$.
\end{proof}
\begin{corollary}\label{cor:decision}
\prob\ is polynomial time solvable on block graphs.
\end{corollary}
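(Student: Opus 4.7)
The plan is to turn the characterization in Theorem~\ref{thm:correctness} into an efficient algorithm. Given a block graph $G$ and two independent sets $C_1,C_2$ of $G$, I would first reduce to the connected case: by Observation~\ref{obs:disconnected}, $C_1$ is reachable from $C_2$ in $G$ if and only if, for each connected component $K$ of $G$, $C_1\cap V(K)$ is reachable from $C_2\cap V(K)$ in $K$. Identifying the connected components of $G$ takes $O(|V(G)|+|E(G)|)$ time, so it suffices to describe a polynomial-time procedure for the connected case and run it on each component.

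So assume $G$ is a connected block graph. First, compute the block-tree $\mathcal{T}(G)$ and the values $ua_G(p)$ for every $p\in\mathcal{P}_G$; both can be done in linear time by a straightforward traversal of $\mathcal{T}(G)$ using Definition~\ref{def:ua}. Next, invoke {\sc Compute-potentials}$(G,C_1)$ and {\sc Compute-potentials}$(G,C_2)$; by Theorem~\ref{thm:polytime}, each call runs in $O(|E(G)|^4)$ time and yields $pot_G(C_i,p)$ for every $p\in\mathcal{P}_G$. Using Definition~\ref{def:rigid}, this gives the sets $W_1$ and $W_2$ of cut-vertices that are rigid for $C_1$ and $C_2$ respectively, again in time linear in $|\mathcal{P}_G|$.

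With $W_1$ and $W_2$ in hand, the algorithm applies Theorem~\ref{thm:correctness} directly: it outputs ``Yes'' iff $W_1=W_2$ and, for every connected component $H$ of $G-W_1$, $|C_1\cap V(H)|=|C_2\cap V(H)|$. Both conditions are checkable in $O(|V(G)|+|E(G)|)$ time by comparing the sets and then running a single BFS/DFS on $G-W_1$ while counting the tokens from $C_1$ and $C_2$ in each component. The total running time is dominated by the two potential computations and is therefore $O(|E(G)|^4)$.

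The only subtle point, which is not really an obstacle since everything needed is already established, is to verify that the quantities referenced in Theorem~\ref{thm:correctness} are computable from the output of Algorithm~\ref{alg:cap}; this is immediate because the theorem is phrased entirely in terms of the rigid sets and component token counts, and rigidity is defined exactly through the potentials produced by {\sc Compute-potentials}. Thus the decision problem \prob\ on block graphs is solvable in $O(|E(G)|^4)$ time.
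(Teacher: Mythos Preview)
Your proposal is correct and follows essentially the same approach as the paper: reduce to connected components via Observation~\ref{obs:disconnected}, run {\sc Compute-potentials} for both independent sets to obtain the potentials (Theorem~\ref{thm:polytime}), read off the rigid sets from Definition~\ref{def:rigid}, and then check the two conditions of Theorem~\ref{thm:correctness}. The only cosmetic difference is that you explicitly mention precomputing the block-tree and the $ua_G$ values, which the paper leaves implicit (they are used inside Algorithm~\ref{alg:cap}); the structure, correctness argument, and $O(|E(G)|^4)$ running-time analysis are otherwise identical.
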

\begin{proof}
Consider an input instance $(G,C_1,C_2)$ of the \prob\ problem, where $G$ is a block graph and $C_1,C_2$ are independent sets of $G$.
We can assume that the input block graph $G$ is connected, as otherwise, we can simply solve the problem on each connected component of $G$ (Observation~\ref{obs:disconnected}). Note that this means that $|V(G)|\leq |E(G)|$.
We first run {\sc Compute-potentials}$(G,C_1)$ and {\sc Compute-potentials}$(G,C_2)$ to compute $pot_G(C_1,p)$ and $pot_G(C_2,p)$ for all $p\in\mathcal{P}_G$. By Theorem~\ref{thm:polytime}, this can be done in $O(|E(G)|^4)$ time. It is easy to see that once the potentials with respect to $C_1$ and $C_2$ have been computed, the set of vertices that are rigid for $C_1$ and the set of vertices that are rigid for $C_2$ can both be determined, and the equality of these sets checked, in $O(|V(G)|+|E(G)|)=O(|E(G)|)$ time. If the set of vertices that are rigid is not the same for both $C_1$ and $C_2$, we can return ``No'' since we know by Theorem~\ref{thm:correctness} that $C_2$ is not reachable from $C_1$. Otherwise, we remove the (common) set of rigid vertices for both $C_1$ and $C_2$ to obtain a graph $G'$ --- this can be done in $O(|V(G)|+|E(G)|)=O(|E(G)|)$ time. Clearly, in $O(|E(G)|)$ time, we can determine the connected components of $G'$ and also determine if $|C_1\cap V(H)|=|C_2\cap V(H)|$ for each connected component $H$ of $G'$. We return ``No'' if there exists some connected component $H$ of $G'$ such that $|C_1\cap V(H)|\neq |C_2\cap V(H)|$, and otherwise we return ``Yes''. From Theorem~\ref{thm:correctness}, it follows that our algorithm returns ``Yes'' if and only if the independent set $C_2$ of $G$ is reachable from $C_1$. Clearly, this algorithm runs in time $O(|E(G)|^4)$. This completes the proof.
\end{proof}
\section{Conclusion}
Although Corollary~\ref{cor:decision} tells us that there is a polynomial time algorithm that can determine whether two independent sets in a block graph are reconfigurable with each other, that algorithm does not produce a reconfiguration sequence between the two independent sets. In fact, it is not even clear whether there is a reconfiguration sequence of polynomial length between any two independent sets in a block graph that are reconfigurable with each other. It was shown in Bria\'nski et al.~\cite{ts_interval} that for any two independent sets that are reconfigurable with each other in an interval graph, there is a reconfiguration sequence of length $O(n^3)$ between them (where $n$ is the number of vertices in the graph). For trees, it is shown in Demaine et al.~\cite{DEMAINE2015132tree} that any two independent sets that are reconfigurable with each other have a reconfiguration sequence of length $O(n^2)$ between them.

\bibliographystyle{plain}
\bibliography{ref}

\end{document}